\documentclass[twoside,leqno]{article}
\usepackage[letterpaper]{geometry}
 
\usepackage{ltexpprt}
\newtheorem{observation}{Observation}[section]
\newtheorem{claim}{Claim}[section]
\usepackage[hyperfootnotes=false]{hyperref}

\usepackage[utf8]{inputenc}
\usepackage{amsmath}
\usepackage{amsfonts}
\usepackage[dvipsnames]{xcolor}
\usepackage{xspace}
\usepackage{graphicx}
\usepackage{comment}
\usepackage{subcaption}
\usepackage{thm-restate}
\usepackage{enumitem}

\usepackage{cleveref}
\crefformat{observation}{Observation #2#1#3}
\crefformat{@theorem}{Theorem #2#1#3}
\crefformat{claim}{Claim #2#1#3}

\newcommand{\R}{{\ensuremath{\mathbb{R}}}\xspace}
\newcommand{\A}{{\ensuremath{\mathcal{A}}}\xspace}

\newcommand{\seg}{{\ensuremath{\textrm{seg}}}\xspace}
\newcommand{\homo}{{\ensuremath{\simeq}}\xspace}

\newcommand{\ER}{{\ensuremath{\exists \mathbb{R}}}\xspace}
\newcommand{\mnev}{Mn\"{e}v\xspace}

\newcommand{\satobject}{union of hypercube faces}
\newcommand{\satobjects}{unions of hypercube faces}

\begin{document}

\title{\Large Topological Art in Simple Galleries}
\author{Daniel Bertschinger\thanks{Department of Computer Science, ETH Z\"{u}rich, Switzerland.} \and Nicolas El Maalouly$^*$ \and Tillmann Miltzow\thanks{Deparment of Information and Computing Sciences, Utrecht University, Netherlands.} \and Patrick Schnider\thanks{Department of Mathematical Sciences, University of Copenhagen, Denmark.} \and Simon Weber$^*$}
\date{}

 \maketitle

\begin{abstract} \small\baselineskip=9pt
    Let $P$ be a simple polygon, then the art gallery problem is looking for a minimum
    set of points (guards) that can see every point in~$P$.
    We say two points~$a,b\in P$ can see each other if the line segment $\seg(a,b)$ 
    is contained in~$P$.
    We denote by~$V(P)$ the family of all minimum guard placements.
    The Hausdorff distance makes $V(P)$ a metric space and thus a topological space. 
    We show homotopy-universality, that is
    for every semi-algebraic set $S$ there is a polygon $P$ such that $V(P)$  is homotopy equivalent to~$S$.

    Furthermore, for various concrete topological spaces $T$, we describe instances $I$ of the art gallery problem such that $V(I)$ is homeomorphic to $T$.
\end{abstract}

\vfill

\begin{center}
    \includegraphics{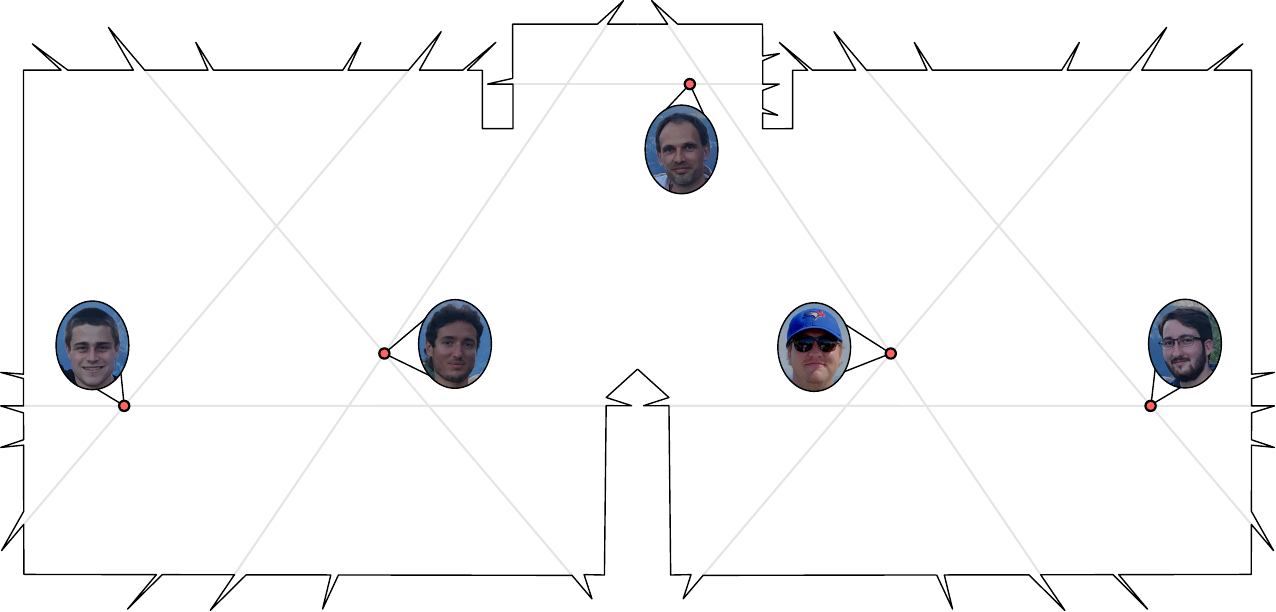} 

    \vspace{0.2cm}
    
    {The five authors jointly guarding this polygon. The space of such optimal guard placements forms a topological $2$-chain, i.e., two cycles connected by an edge.}
\end{center}

\vfill

\pagebreak
\section{Introduction}
\label{sec:intro}
The art gallery problem is a fundamental problem in computational geometry. In this work we study the topological spaces formed by the set of all optimal guard placements. 
We show that a large variety of spaces can be realized.

See \Cref{sec:preliminaries} for a detailed definition of the art gallery problem, its solution space~$V(I)$, its topology, semi-algebraic sets, homeomorphism, homotopy equivalence, and the existential theory of the reals.

\paragraph{Homotopy-Universality.}
Given a set of points, the order type of this point set
describes for each ordered triple whether they are oriented clockwise or counter-clockwise.
Imagine we are given two point sets $P_1$ and $P_2$ with the same order type.
Can we continuously move $P_1$ to $P_2$ without changing the order type?
For most people, intuition says that the answer is always yes, as this seems to be the case 
for small examples. 
Let~$O$ be the order type of $P_1,P_2$, we denote the realization space by~$V(O)$.
The question above can be rephrased to whether $V(O)$ is connected.
\mnev{}'s seminal universality theorem states that for every semi-algebraic set~$S$ there exists 
an order type $O$ such that $S$ and $V(O)$ are 
homotopy equivalent~\cite{bjorner1999oriented,knuth1992axioms,matousek2014intersection,mnev1988universality,shor1991stretchability}.
Specifically, if $S$ is the set that consists of two isolated points, $V(O)$ is also disconnected.
Thus there are indeed two point sets in the plane with the same order type that cannot be transformed continuously into one another.
\mnev{} inspired a deep study of order types, which opened up a new research field.
In particular, \mnev's universality result is one of the most significant contributions
to the \textit{existential theory of the reals}.
The existential theory of the reals is both an algorithmic problem (ETR) and an algorithmic complexity class (\ER).
Based on \mnev{}s original ideas it was shown that order-type realizability 
is complete for this class~\cite{matousek2014intersection,shor1991stretchability}.
Order-type realizability is 
central, as it captures the combinatorics
of the simplest geometric objects: points in the plane.
Furthermore, the \ER-hardness of 
many other algorithmic problems relies on it~\cite{bienstock1991some,cardinal2017intersection,SlopeNumber,mcdiarmid2013integer}.

\paragraph{Link to \ER-completeness.}
 Homotopy-universality and \ER-completeness  are closely linked. 
In an \ER-hardness reduction we transform semi-algebraic sets $S$,
given by polynomials, to instances $I$ of some other problem.
Again we denote by $V(I)$ its solution space.
In order to show \ER-hardness, the transformation has to be in polynomial time, and $S =  \emptyset \Leftrightarrow V(I) = \emptyset$.
Note that the topology of $V(I)$ may be widely different to the topology of $S$.
Similarly, in order to show homotopy-universality, 
we need to transform a semi-algebraic set into some instance $I$.
In this setting, we have no restriction on the running time of the transformation.
The transformations do not even need to be computable to show homotopy-universality.
Instead, 
we need to 
show that $S$ and $V(I)$ are homotopy equivalent.
It may not be so surprising that a transformation of the first type may sometimes be massaged to a transformation of the second type and vice versa.

\paragraph{Art Gallery Problem.}
The art gallery problem is another central problem in computational geometry.
The aim in the art gallery problem is to place as few points~$G$ (guards) as possible
into a given polygon~$P$ such that every point $q\in P$ is seen by a point $g\in G$.
The art gallery problem was invented by Viktor Klee~\cite{o1987art} and is since then
one of the central problems in computational geometry.
Let us mention a few milestones.
1978, Steve Fisk wrote a beautiful one-page proof that $\lfloor \frac{n}{3} \rfloor$ guards
are always sufficient to guard a polygon on $n$ vertices~\cite{chvatal1975combinatorial,Fisk78a}.
1986, Lee and Lin~\cite{LeeLin86} first established NP-hardness, see also~\cite{eidenbenz2001inapproximability,SchuchardtH95}.
1987, Joseph O'Rourke wrote a beautiful book \textit{Art~Gallery Theorems and Algorithms}~\cite{o1987art}, 
summarizing the state of the art at the time.
Surprisingly, the art gallery problem is probably not contained in NP, as it is \ER-complete~\cite{ARTETR}.
In recent years, the art gallery problem has been studied also from various other theoretical perspectives: approximation~\cite{BonnetM17Approx, EfratH02}, smoothed analysis~\cite{SmoothedArt,SmoothingGap}, and parameterized complexity~\cite{AlmostConvex,agrawal2020parameterized,ashok2019efficient,BonnetW1HARD}.
From a practical perspective, various research groups developed, implemented, and tested algorithms that are exact, reliable, and relatively fast, but theoretically may run forever~\cite{PracticalBorrmann,PracticalARTMasterFriedrich,Simon-hengeveld2021practical,tozoni}.

For us the \ER-completeness of the art gallery problem is most relevant. 
Usually, \ER-hardness proofs imply homotopy-universality theorems as well.
Although it is plausible that such a statement follows from the \ER-hardness reduction it is not formally stated in \cite{ARTETR}.
Instead, a slightly different statement about semi-algebraic sets in the plane is formulated~\cite{ARTETR}.
Still, it leaves an inconvenient gap in our understanding of the art gallery problem.

\paragraph{Results.}
We fill this gap by providing a simple and short proof of homotopy-universality in the context of the art gallery problem.

Our result even holds for a version of the art gallery problem where only the vertices of the polygon have to be guarded. Note that the guards can still be placed anywhere in the polygon. Following the language of Agrawal and Zehavi \cite{agrawal2020parameterized}, we call this problem version the \textsc{Point-Vertex Art Gallery}. 
This problem is contained in NP, for a short proof see \Cref{app:NPcontainment}.
With this notation the ordinary art gallery problem is denoted as the \textsc{Point-Point Art Gallery}
and vertex guarding is denoted as \textsc{Vertex-Point Art Gallery}.

\begin{restatable}[Homotopy-Universality]{@theorem}{Universal}
    \label{thm:universal}
     For every compact semi-algebraic set~$S\neq \emptyset$ there is an instance $I$ of the art gallery problem such that $V(I)$ is homotopy equivalent to~$S$. Furthermore, this also holds when $I$ is considered as an instance of \textsc{Point-Vertex Art Gallery}.
\end{restatable}

\noindent The proof can be found in \Cref{sec:Realization}
There are two observations that may also hold for other algorithmic problems.
    First, homotopy-universality may be generally easier to show than \ER-completeness.
    While a reduction often gives both \ER-hardness and homotopy-universality after 
    sufficiently massaging it, 
    it might be easier to just give two separate proofs that 
    are each more accessible individually. 
    Second, to show homotopy-universality, 
    we only use that the vertices of the polygon
    must be seen. 
    All other interior points are seen automatically in our construction.
    As mentioned above, this version of the art gallery problem lies in NP.
    Thus homotopy-universality is already achievable for NP-complete problems. 

    Based on a preliminary version of this paper, Stade and Tucker-Foltz~\cite{stade2022topological} nicely extended our findings to homeomorphism-universality, that is, they showed that for any compact semi-algebraic set there exists an art gallery with a solution space homeomorphic to that set. 
    However, their results use the fact that the walls of the polygon have to be guarded as well (but not the interior), and thus do not imply homeomorphism-universality of \textsc{Point-Vertex Art Gallery}. The homeomorphism-universality of this problem version remains an open problem.

Our construction for homotopy-universality does not yield homeomorphism-universality, as the dimension of the solution space gets blown up by the freedom of movement of guards we use to enforce homotopy equivalence. To add on to \Cref{thm:universal},  we provide a sequence of tangible examples that have solution spaces homeomorphic
to standard topological spaces, even in \textsc{Point-Vertex Art Gallery}.
See \Cref{fig:Overview_Tangible} for an overview of these examples. 

\begin{restatable}[Tangible $\&$ Homeomorphic]{@theorem}{Tangible}
    \label{thm:tangible}
     For the following topological spaces~$S$ there exists an instance~$I$ of the art gallery problem such that $V(I)$ is homeomorphic to~$S$. For any $k\geq 1$,
     \begin{center}
         1. $k$-clover; \quad 
         2. $k$-chain; \quad 
         3. $4k$-necklace; \quad 
         4. $k$-sphere; \quad 
         5. torus and double torus. 
     \end{center}
\end{restatable}

\begin{figure}[ht]
\centering
\includegraphics[page = 3]{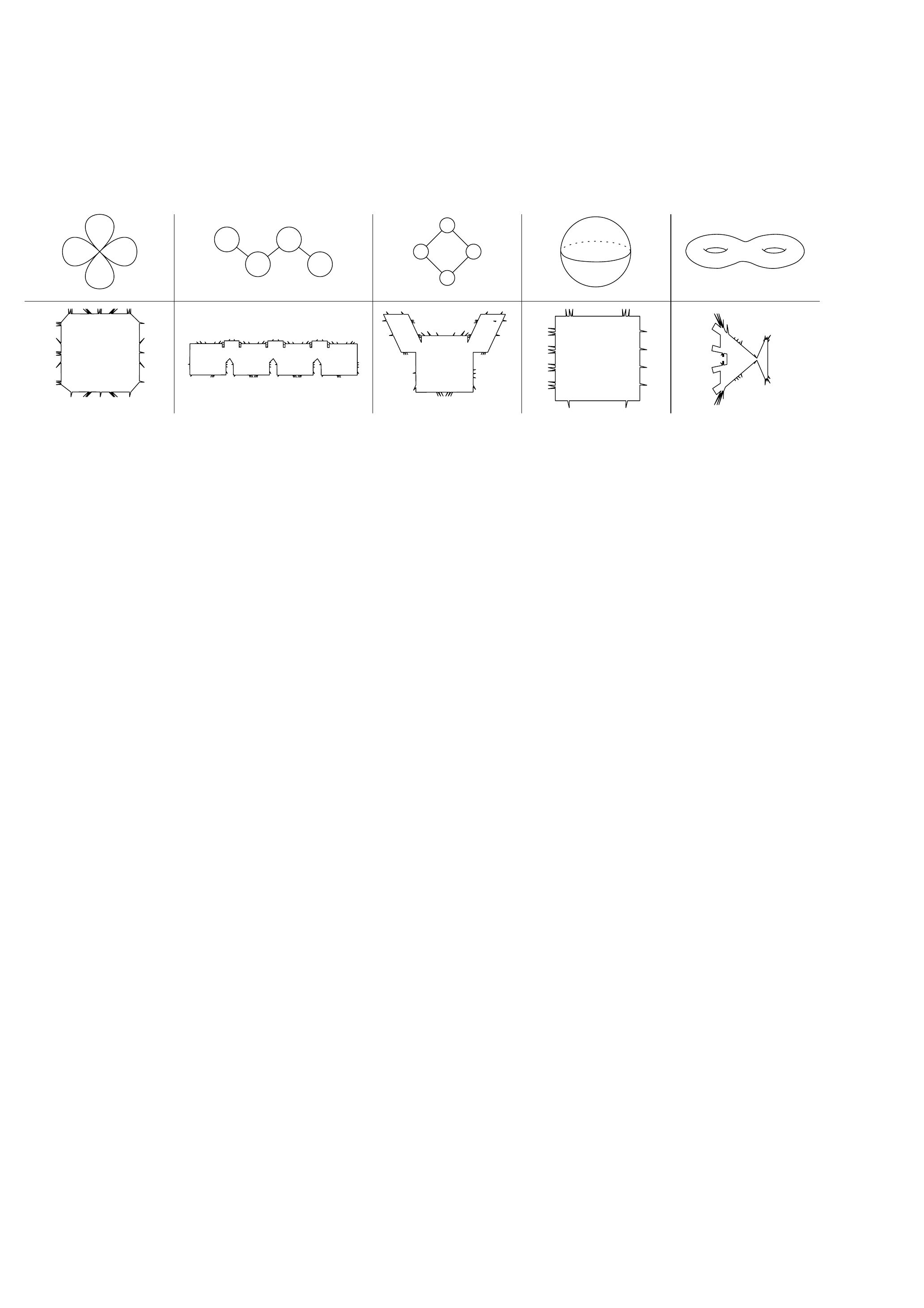}
\caption{Left to right: $4$-clover, $4$-chain, $4$-necklace, $2$-sphere, double torus.
}
\label{fig:Overview_Tangible}
\end{figure}

The proofs can be found in \Cref{sec:tangible}.
To help the imagination with our more involved constructions, we created \textrm{GeoGebra} applications that show the correspondence between guard placements and points in the topological space. 
An accompanying video showcases the most interesting parts.

\begin{center}
    \url{https://www.youtube.com/watch?v=xWBmzB0bhso}
\end{center}

Our tangible examples nicely illustrate the topological complexity that can be achieved with very small polygons and very few guards.
We believe that our results are suitable as classroom material both in Computational Geometry as well as in Algebraic Topology.
Terms like homotopy equivalence and homeomorphism can be explained and their relevance can be illustrated.
The explicit description of our examples, coupled with the fact that the spaces of optimal solutions are very different to typical benchmark instances,
gives motivation to use our polygons as test cases for art gallery algorithms~\cite{abrahamsen2017irrational,Simon-hengeveld2021practical}.
Note that even simple to construct polygons easily lead to very complicated topological solution spaces.
The selection we present here illustrates how we can realize nice and well-known topological spaces.
We do not know whether we can find a tangible instance~$I$ of the art gallery problem such that 
$V(I)$ is homeomorphic to the projective plane or the Klein Bottle.
We find these spaces interesting as both are unorientable surfaces.

\paragraph{Proof Ideas.}
Our constructions are based on \emph{guard segments}, which are also
used in various other works~\cite{abrahamsen2017irrational,ARTETR}. 
A guard segment is a segment inside the polygon that needs to contain at least one guard.
For a simple construction of a guard segment, see \Cref{fig:guard-segment}. 

Our homotopy-universality theorem crucially relies on guard segments to encode variables.
In addition, we build gadgets that enforce that only certain
combinations of intervals on each guard segment can be used.
In this way, it is easy to construct arbitrary unions of hypercube faces.
As we will see this is sufficient to show homotopy-universality.
The technical challenge is to show that we can contract the additional dimensions our construction introduces to the desired solution space.

\begin{figure}[htbp]
    \centering
    \includegraphics{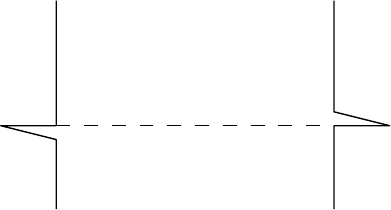}
    \caption{The pockets on the left and on the right enforce a guard to be placed on the dashed segment, if we want that one guard sees both pockets simultaneously.
    }
    \label{fig:guard-segment}
\end{figure}

In our constructions for \Cref{thm:tangible}, guard segments can be touching or crossing one another.
Thus, it is possible to use fewer guards than guard segments, and we can get interesting solution spaces. 
Intersecting guard segments also pose some technical difficulties, which will force us to use a more complex construction with more pockets. 
We will deal with these technicalities in \Cref{sec:tangible}.
We will examine the possible ``extremal'' configurations of guards on guard segments and the possibilities
to move between them. 
In this way, we can determine the structure of the solution space explicitly.
Often, very simple arrangements lead to horribly complicated topological solution spaces.
Even small changes to the construction can lead to completely different spaces.
The challenge is to be inventive to get exactly
those topological spaces that we care about.
Specifically, the double torus in \Cref{sub:doubleTorus} required a
solid understanding of the correspondence of polygons and the solution space.

\begin{figure}[ht]
\centering
\includegraphics[page =3]{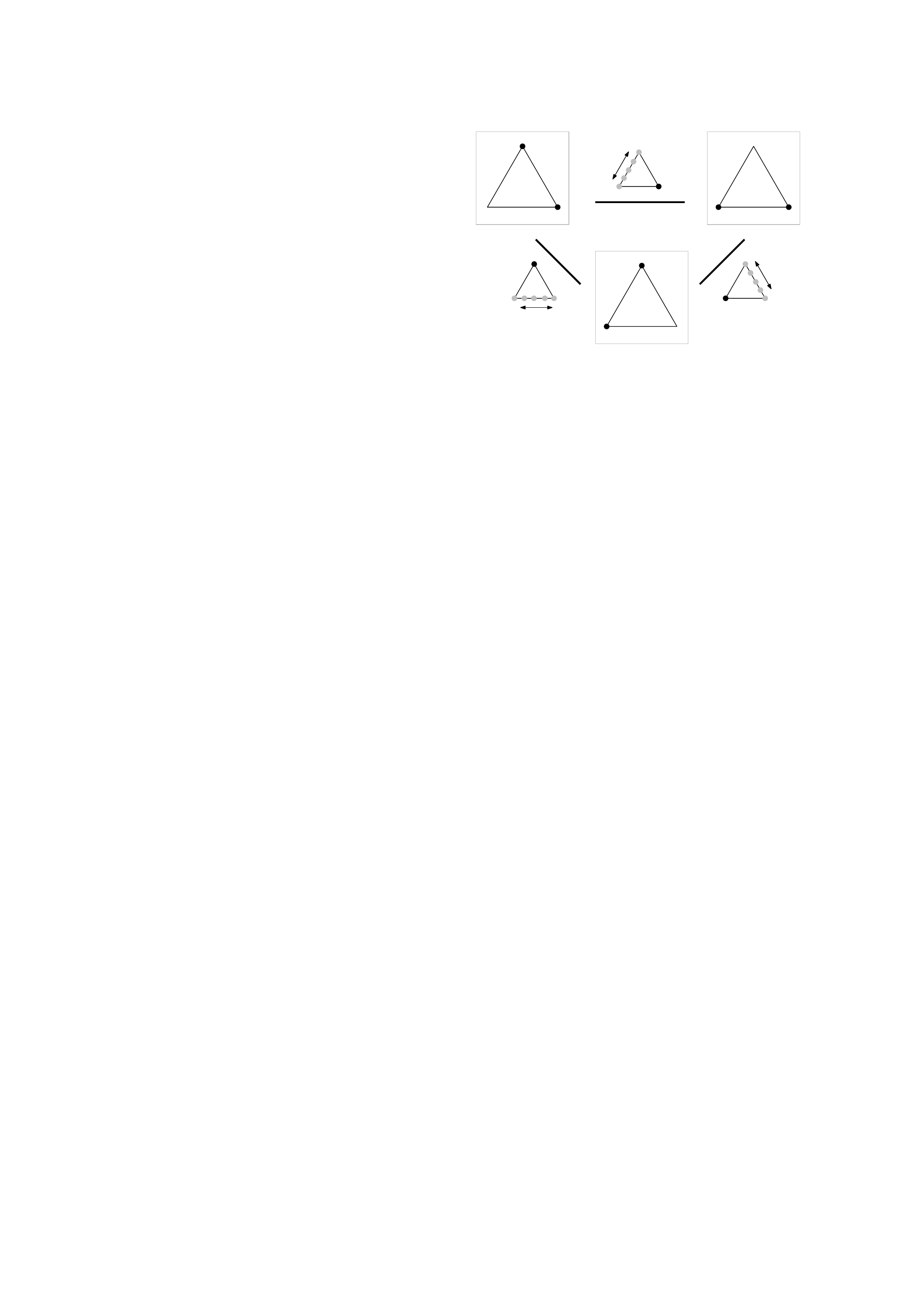}
\caption{If we can force guards onto three specially arranged guard segments forming a triangle, then every solution needs two guards, and the resulting solution space is homeomorphic to the circle.}
	\label{fig:Circle-Intro}
\end{figure}

\paragraph{Simplicity.}
We consider it a strength of our proofs that they are simple.
There are different ways to quantify simplicity.
When measured in \textit{length of the paper}, we note that the paper establishing \ER-completeness~\cite{ARTETR} has $69$~pages
and does not contain the homotopy-universality; including it would have made that paper even longer.
Our proof of the homotopy-universality is not even six pages long, with 
three pages merely being figures, running examples and additional explanations, leaving only three pages for the formal proof.
Yet, we do not consider brevity the main criterion for simplicity.

Rather, we think that \textit{ease of understanding} is the more relevant measure of simplicity.
In order to maximize it we provide {running examples}, 
we {modularize} our proofs, and
we provide {elaborated illustrations}, whenever this seems to be helpful.
Furthermore, we add {exhaustive details} and meticulously give references
and {additional explanations} whenever it might be helpful.
Each of our tangible examples is just a few pages long.
We can directly 
draw the topological space and the configurations that interest us in a single figure.
We believe that this type of integrated representation makes it particularly easy to understand, see \Cref{fig:Circle-Intro} for an example of the circle.
In particular, apart from a few technical explanations the proof
of correctness boils down to inspecting some figure.

\section{Preliminaries}
\label{sec:preliminaries}
\paragraph{Art Gallery Problem.}
Two points $p,q$ inside a polygon~$P\subset \R^2$, can see each other
if the line segment $\seg(p,q)$ is inside~$P$.
In the art gallery problem, we are given a polygon $P$ and an integer~$k$.
We are asked if there is a set $G\subset P$ 
of size $k$ such that
every point in $P$ is seen by a point in $G$. 
We call the points in $G$ \emph{guards}.
Note we consider polygons as closed 
and guards as indistinguishable.

\paragraph{Existential Theory of the Reals.}
The complexity class \ER can be defined as the set of 
algorithmic problems that is  polynomial time equivalent
 to finding real roots of polynomials $p\in \mathbb{Z}[X_1,\ldots,X_n]$.
As we will not use \ER-completeness, except for pointing out its link to homotopy-universality, we merely refer to some surveys and recent developments~\cite{abrahamsen2021training,CardinalSurvey,SmoothingGap,matousek2014intersection,Reinier-CSP,schaefer2021complexity,TensorRank, GeometricEmbeddingER}.

\paragraph{Hausdorff Distance.}
Given two points $p,q$ in the plane $\R^2$, we define the \textit{Euclidean distance} \[d(p,q) = \|p-q\|_2 = \sqrt{(p_1-q_1)^2 + (p_2-q_2)^2}.\]
Consequently, given two closed sets $F,G \subset \R^2$, we define the \textit{directed Hausdorff distance} as 
\[\overrightarrow{d_H}(F,G) = \max_{x\in F} \min_{y\in G}  d(x,y). \]
The ordinary \textit{Hausdorff distance} is defined by
\[d_H(F,G) = \max \left( \overrightarrow{d_H}(F,G) , \overrightarrow{d_H}(G,F) \right).\]

\paragraph{Solution Space.}
We describe an instance 
of the art gallery problem  by $I = (P,k)$.
We denote by $V(I)$ the set of all possible solutions, i.e.,~the set of all possible placements of $k$ guards such that $P$ is completely guarded.
When we write $V(P)$, omitting $k$, we implicitly assume that $k$ is the smallest possible value that admits at least one solution.
Note our guards are in general not distinguishable, and thus $V(I)$ is not readily embeddable into $\R^{2k}$ as each solution is a set and not a vector.
However, the Hausdorff distance still defines a topology on~$V(I)$, as every metric defines a topological space.
Note that although one could choose a different metric to the Hausdorff metric, we are not aware of any natural metric that would lead to a different topology.

In some cases, we construct instances~$I$ with disjoint guard segments such that 
every guard must be on its private guard segment.
In this case, we can interpret each guard set as a vector in $\R^k$
and the usual topology in $\R^k$ coincides with the topology given by the Hausdorff distance.
To see this simply observe that small neighborhoods are identical and these are sufficient to define a topological space.

\paragraph{Semi-Algebraic Sets.}
A \textit{basic semi-algebraic} set~$S$ is defined by polynomials with integer coefficients $(p_i)_{i=1,\ldots,k}$ and $(q_i)_{i=1,\ldots,m}$,
as follows: \[S = \{x\in \R^n : p_i(x) \geq  0 \land q_i(x) > 0\}.\]
\textit{Semi-algebraic} sets are finite unions of basic semi-algebraic sets. 

\paragraph{Cubical complexes}
A \emph{cubical complex} is a set $C$ of hypercubes that satisfy the following conditions:
\begin{enumerate}
    \item every face of a hypercube of $C$ is also in $C$;
    \item any non-empty intersection of two hypercubes $c_1$ and $c_2$ is a face of both of them.
\end{enumerate}
One example of cubical complexes that is particularly important to us are unions of (closed) faces of some hypercube.

\paragraph{Topology Terms.}
We say that two topological spaces $A,B$ are \emph{homeomorphic} if there is
a homeomorphism $\varphi \colon A \rightarrow B$.
A \emph{homeomorphism} is a bijective, continuous function, whose inverse is continuous as well.

Two maps, $f_0,f_1 : A \rightarrow B$ are \textit{homotopic}, if there is a continuous map
$H: [0,1]\times A \rightarrow B$ such that $H(0,x) = f_0(x)$ and $H(1,x) = f_1(x)$.
We call $H$ a \emph{homotopy}, and write $f_t$ for the function $H(t,\cdot)$.

Given two topological spaces $A$ and $B$, a \emph{homotopy equivalence} between $A$ and $B$ is a continuous map $f: A \rightarrow B$ for which there exists another continuous map $g : B \rightarrow A$, such that $g \circ f$ is homotopic to the identity map $\textrm{id}_A$ and $f \circ g$ is homotopic to $\textrm{id}_B$. 
If such a map exists, then $A$ and $B$ are said to be \textit{homotopy equivalent}, and we write $A\homo B$.

\section{Realization of Unions of Hypercube Faces}
\label{sec:Realization}

In this section, we aim to prove \Cref{thm:universal}.

\Universal*

To this end, we first show that for every semi-algebraic set $S$, there is a homeomorphic \satobject{} (\Cref{lem:semialgebraic}). 
Then, we show that for each \satobject{} $U$, we can build an instance of the art gallery problem, such that its solution space can be projected to a space homeomorphic to $U$ (\Cref{lem:Facial-Complex}). 
This projection is the reason for our construction only yielding homotopy equivalence and not homeomorphism. 
In our constructed polygons, every guard set guarding all vertices automatically guards the whole polygon, thus the solution space is the same when considering the two versions of the problem.

For the proofs we will need some formal definitions.
A \emph{hypercube} in dimension $n$ is the set of points $\{X\in \mathbb{R}^n:X\in [0,1]^n\}$. A \emph{face} of the hypercube is a closed subset of the hypercube given by some number of additional constraints of the form $X_i=0$ or $X_i=1$. 
A \textit{\satobject}~$U$ in dimension $n$ is the union of faces of the hypercube in dimension $n$. We call a face $F$ \emph{maximal} in $U$ if $F\not\subset F'$ for every other face $F'\subseteq U$. 

\begin{figure}[phtb]
    \centering
    \includegraphics{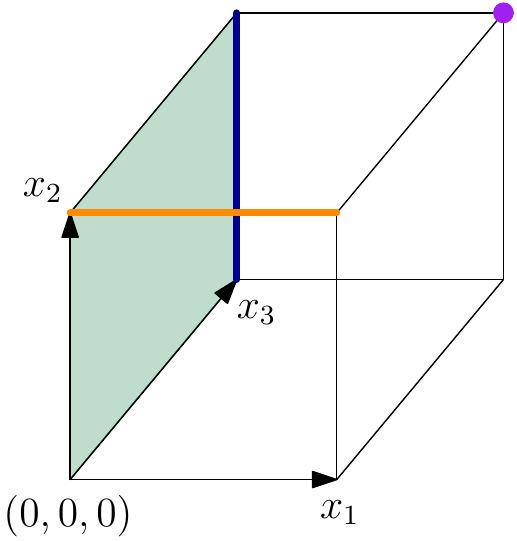}
    \caption{An example of a \textit{\satobject}~in dimension $3$ consisting of $3$ maximal faces. The blue edge is a non-maximal face, as it is a subset of the green face.}
    \label{fig:complex}
\end{figure}

\begin{lemma}\label{lem:semialgebraic}
For every compact semi-algebraic set $S$, there is a \satobject{} $U$, such that $U$ is homeomorphic to $S$.
\end{lemma}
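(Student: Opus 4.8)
The plan is to reduce the general compact semi-algebraic set to something built from hypercube faces in two conceptual moves: first make the set "cubical" (i.e., a cubical complex), then observe that any cubical complex can be realized as a union of faces of a single higher-dimensional hypercube. The key external fact I would invoke is that every compact semi-algebraic set admits a triangulation, i.e., is homeomorphic to (the geometric realization of) a finite simplicial complex. This is a standard result in real algebraic geometry (see e.g. Bochnak–Coste–Roy, or the fact that semi-algebraic sets are triangulable). So from $S$ we obtain a finite simplicial complex $K$ with $|K| \cong S$.

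\textbf{Step 1: From simplices to cubes.} The barycentric subdivision $\mathrm{sd}(K)$ of a simplicial complex has the property that each simplex of $K$, when subdivided, decomposes naturally into combinatorial cubes — more precisely, the subdivision of an $n$-simplex by its barycenters yields a cubical subdivision (each top cube corresponds to a maximal chain $v_0 < v_1 < \dots < v_n$ of faces, and is the product $[0,1]^n$ of "go toward the next barycenter" coordinates). This gives a cubical complex $C$ with $|C| \cong |K| \cong S$. Alternatively, one can cite that every simplicial complex admits a cubical subdivision. I would spell this out just enough to be convinced the gluing maps are indeed face-to-face cube identifications, so that $C$ is a genuine cubical complex in the sense defined above.

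\textbf{Step 2: From an abstract cubical complex to a union of faces of one hypercube.} Say $C$ has $m$ maximal cubes $c_1, \dots, c_m$. I want to embed $|C|$ as a union of faces of $[0,1]^N$ for suitable $N$. The natural idea: take $N = \dim c_1 + \dots + \dim c_m$ (or just $N$ large enough), and place each cube $c_j$ on its own block of coordinates, fixing all other coordinates to a constant. A cube $c_j$ of dimension $d_j$ becomes the face of $[0,1]^N$ where coordinates in block $j$ are free and every other coordinate is set to $0$ or $1$. The subtlety is the \emph{shared faces}: if $c_i$ and $c_j$ meet along a common face $f$, the embeddings must agree on $f$. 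One clean way to handle this is to realize each cube together with copies of its face lattice, and use the "0 or 1" constraints in the inactive blocks to encode which lower face is being glued where — i.e., choose the constant values on block $i$'s coordinates (when representing a face of $c_j$) to match the position of that face inside $c_i$'s own block. Because faces of a cube are themselves coordinate subcubes, this is consistent. Iterating over the (finite) poset of faces, one assembles $U$ as the union over all faces $f$ of $C$ of the corresponding coordinate face of $[0,1]^N$, and checks that two such coordinate faces intersect exactly in the coordinate face of their common face in $C$ — giving the required homeomorphism $|C| \cong U$.

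\textbf{The main obstacle} I expect is Step 2: getting the shared-face identifications exactly right so that the resulting collection of coordinate faces intersects correctly (no spurious intersections, all intended ones realized), rather than just getting a homeomorphic image "up to subdivision". A safe fallback, if the direct embedding is fiddly, is to first barycentrically subdivide $C$ once more so that every cube meets every other cube in at most a single vertex or is disjoint-ish, or to argue more abstractly: a finite cubical complex embeds piecewise-linearly into some $\mathbb{R}^N$ with all cells being coordinate-parallel (a "rectangular" / "integral" realization), and then one enlarges to a single hypercube. Steps 1 and the triangulation input are essentially citations, so the write-up weight sits almost entirely on making the coordinate bookkeeping of Step 2 airtight.
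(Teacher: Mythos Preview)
Your approach is exactly the paper's: triangulate the compact semi-algebraic set (the paper cites Hironaka; you cite Bochnak--Coste--Roy), then convert the simplicial complex into a cubical one realized as a union of faces of a single hypercube (the paper simply cites Theorem~1.1 of Blass~\cite{blass1972cubical} for this second step rather than reproving it).

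One remark on your Step~2, since you flag it as the main obstacle: the ``one coordinate block per maximal cube'' scheme is indeed awkward to make consistent on shared faces, and there is a much cleaner embedding that falls directly out of your Step~1. The vertices of the cubical subdivision of $K$ are exactly the simplices of $K$; with $n=|V(K)|$, send each simplex $\sigma$ to its characteristic vector $\chi_\sigma\in\{0,1\}^n$. Each cube of the subdivision corresponds to an interval $\sigma\subseteq\tau$ in the face poset and is mapped to the hypercube face $\{x\in[0,1]^n:\chi_\sigma\le x\le\chi_\tau\}$. All incidences are then automatic, with no bookkeeping required --- this is essentially the Blass construction the paper cites.
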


\begin{proof}
It was shown by Hironaka that all compact semi-algebraic sets can be triangulated \cite{hironaka1975triangulations}, that is, for every such semi-algebraic set $S$ there exists a simplicial complex $T$ which is homeomorphic to $S$.
While the proof of Hironaka is phrased only for algebraic sets, it goes through for semi-algebraic sets, see e.g.~\cite{hofmann2009triangulation}.
Furthermore, each simplicial complex can be refined into a cubical complex which can be embedded into a hypercube as the union of some of its faces (\cite{blass1972cubical}, Theorem 1.1).
So in particular, there exists a \satobject{} $U$ which is homeomorphic to $T$ and thus also to $S$.
\end{proof}

Note that computing this union of hypercube faces $U$ implicitly determines whether the semi-algebraic set~$S$ is empty or non-empty. This is the underlying reason why our reduction does not yield \ER-hardness.
Furthermore, note that the description complexity of $U$ is not necessarily polynomial in the description complexity of~$S$.

The previous lemma shows that we can focus on \satobjects.

\begin{lemma}
\label{lem:Facial-Complex}
    Let~$U$ be a non-empty \satobject{} in dimension $n$. Then there exists an instance of the art gallery problem~$I$ such that $V(I)$ is homotopy equivalent to $U$. Furthermore, $V(I)$ can be projected to a set that is homeomorphic to $U$.
\end{lemma}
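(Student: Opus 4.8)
The plan is to realize an arbitrary \satobject{} $U \subseteq [0,1]^n$ by a single polygon whose guards are constrained to lie on a family of parallel, disjoint \emph{guard segments}, one segment for each coordinate direction. Each guard segment, via a pocket pair as in \Cref{fig:guard-segment}, is forced to contain exactly one guard; identifying the position of the $i$-th guard on its segment with the value $X_i \in [0,1]$ gives a canonical map from $V(I)$ to $[0,1]^n$. The core construction is a set of \emph{selection gadgets} (additional pockets, each visible only from a prescribed sub-interval of one or more guard segments) that together enforce exactly the following: the vector $(X_1,\dots,X_n)$ lies in $U$. Since $U$ is a union of hypercube faces, each maximal face $F$ is specified by fixing a subset of coordinates to $0$ or $1$ and letting the rest range freely; a face is "selected" when the corresponding coordinates are at their prescribed endpoints, and we add, for each free coordinate of each face, pockets that can only be covered by the endpoint guards of the face being used. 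I would first describe one such gadget carefully, then argue that the disjunction over maximal faces of $U$ is realized by placing the gadgets for different faces in disjoint regions of the polygon so they do not interfere.

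The second ingredient is verifying the two claims of the lemma. For the projection claim: I define $\pi \colon V(I) \to [0,1]^n$ by reading off guard positions, and I must show $\pi$ is continuous (immediate, since on disjoint private segments the Hausdorff topology agrees with the Euclidean topology on $\R^n$ as noted in the Preliminaries), that its image is exactly $U$ (one inclusion from "$U$ selected $\Rightarrow$ all pockets covered", the other from "all pockets covered $\Rightarrow$ the position vector lies on some selected face"), and that the whole polygon — not just its vertices — is guarded by any guard set in $V(I)$, so that the Point-Point and Point-Vertex solution spaces coincide. For the homotopy-equivalence claim: the fibers of $\pi$ are not points — a guard set may have guards parked anywhere consistent with the selected face, i.e. extra "slack" guards or extra freedom — so $\pi$ is not a homeomorphism, but I would show each fiber $\pi^{-1}(x)$ is contractible and that these contractions can be performed continuously in $x$ (a fiberwise deformation retraction of $V(I)$ onto a section of $\pi$), which gives a homotopy equivalence $V(I) \homo U$. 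Concretely I would exhibit a continuous section $s \colon U \to V(I)$ and a homotopy $H \colon [0,1] \times V(I) \to V(I)$ from $\id$ to $s \circ \pi$ that slides each guard monotonically to its canonical position while never leaving $V(I)$; feasibility is preserved because moving a guard toward its canonical endpoint can only increase the set of points it sees relative to what the gadget demands.

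The main obstacle I expect is precisely this last step: designing the construction so that the fiberwise retraction \emph{stays inside $V(I)$} at every time $t$. Naively sliding guards can temporarily uncover a pocket; the construction must be arranged so that along the straight-line motion toward the canonical configuration, coverage is monotone (or at least never drops below the required threshold). This is where the "extra dimensions" in the abstract come from — we deliberately give guards room to move so that straightening them out is a legal homotopy — and it is also the reason the result is only homotopy equivalence, not homeomorphism, since collapsing that extra room changes the homeomorphism type while preserving homotopy type. A secondary, more bookkeeping-type obstacle is checking that the selection gadgets for distinct maximal faces are genuinely independent: I would place them in separate "wings" of the polygon attached through narrow corridors so that a guard serving one gadget is invisible to the pockets of another, and argue that the only way to cover every wing is to have the global position vector lie on at least one face. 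Once these two points are handled, combining with \Cref{lem:semialgebraic} immediately yields \Cref{thm:universal}.
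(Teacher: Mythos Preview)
Your high-level outline (variable guard segments, pockets for constraints, projection to $[0,1]^n$, fiberwise contraction) matches the paper's, but the core step --- encoding the \emph{disjunction} over maximal faces --- is missing, and what you do say about it is backwards. The art gallery problem naturally encodes conjunctions: every pocket must be guarded simultaneously. You write that the disjunction ``is realized by placing the gadgets for different faces in disjoint regions of the polygon so they do not interfere,'' but independence of wings yields a conjunction (every wing must be covered), not a disjunction. The sentence ``the only way to cover every wing is to have the global position vector lie on at least one face'' is the wrong direction: if wing $W_i$ can only be covered when the variable guards realize face $F_i$, then covering all wings forces the point into $\bigcap_i F_i$, not $\bigcup_i F_i$. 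The paper's actual mechanism is a CNF rewriting: it introduces boolean satisfier variables $S_1,\dots,S_m$, encoded by $m-1$ \emph{dummy guard segments} arranged so that at most $m-1$ of the $S_i$ can be true, and turns each face-conjunction $C_i$ into 2-clauses $(X'_j=v_j)\lor S_i$, each realized by a pinhole \emph{disjunction gadget} with a helper guard that can look into at most one of two corridors. Some device of this kind --- converting ``at least one face holds'' into a conjunction of small clauses, each realizable by a single pocket --- is the missing idea in your proposal.

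Your plan for homotopy equivalence via an explicit global deformation retraction is also more delicate than you suggest. You assert that ``moving a guard toward its canonical endpoint can only increase the set of points it sees,'' but in the paper's construction the legal interval for a helper guard depends on which corridor is already covered by the variable and dummy guards, so the ``canonical'' position of each auxiliary guard varies with the base point in $U$; there is no single continuous section $s\colon U\to V(I)$ with the monotonicity property you want. The paper explicitly flags this (``it is not clear that these local contractions can be chosen in a way that they fit together globally'') and sidesteps it: it shows each fiber $\pi^{-1}(x)$ is a product of intervals, hence contractible, and then invokes Smale's theorem to conclude that a proper continuous surjection with contractible fibers between cubical complexes is a homotopy equivalence. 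If you want to avoid Smale, you must design the auxiliary gadgets so that a uniform straight-line retraction stays feasible; that is an additional constraint on the construction, not something that comes for free.
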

The mentioned projection will become apparent in the proof.

To prove this lemma, we first find a formula that encodes $U$.
We define a \textit{variable constraint} on a real variable $X_i$ as either the formula $X_i=0$ or $X_i=1$.
We define a \textit{facial conjunction} as the conjunction of variable constraints for distinct variables, i.e., 
\[C = \big((X'_{1}=v_1)\land \ldots \land (X'_{k}=v_k)\big)\]
for some distinct $X'_1,\ldots,X'_k\in\{X_1,\ldots,X_n\}$ and for $v_1,\ldots ,v_k\in \{0,1\}$.

We define a \textit{facial disjunctive formula} $\varphi$ as a disjunction of finitely many facial conjunctions, i.e., 
\[\varphi = C_1 \lor C_2 \lor \ldots \lor C_m.\]

\begin{observation}
    For every non-empty \satobject{} $U$ there exists a facial disjunctive formula~$\varphi$ such that
    $U = \{ x\in [0,1]^n : \varphi(x)\} $.
    Similarly, every facial disjunctive formula~$\varphi$ defines a \satobject{} $\{ x\in [0,1]^n : \varphi(x)\}$.
\end{observation}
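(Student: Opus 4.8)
The plan is to set up a bijective dictionary between the faces of the hypercube $[0,1]^n$ and the facial conjunctions on $X_1,\dots,X_n$, and then observe that finite unions on the geometric side correspond exactly to disjunctions on the logical side.

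First I would record that $[0,1]^n$ has only finitely many faces: a face is specified by choosing, for each coordinate $i\in\{1,\dots,n\}$, one of the three options ``impose $X_i=0$'', ``impose $X_i=1$'', or ``leave $X_i$ free'', so there are exactly $3^n$ of them. To a face $F$ I associate the facial conjunction $C_F$ listing precisely the constraints $X_i=v_i$ that $F$ imposes; the variables occurring in $C_F$ are automatically distinct, so $C_F$ is a legal facial conjunction and $F=\{x\in[0,1]^n:C_F(x)\}$. Conversely, any facial conjunction $C$ equals $C_F$ for the face $F=\{x\in[0,1]^n:C(x)\}$, since $C$ imposes constraints of the allowed form on distinct coordinates. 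This is the desired dictionary.

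For the first assertion, let $U\neq\emptyset$ be a union of hypercube faces; by the finiteness just noted it is a \emph{finite} union $U=F_1\cup\dots\cup F_m$ with $m\ge 1$, and I put $\varphi:=C_{F_1}\lor\dots\lor C_{F_m}$. Then for $x\in[0,1]^n$ we have $\varphi(x)$ iff $C_{F_j}(x)$ holds for some $j$, iff $x\in F_j$ for some $j$, iff $x\in U$; hence $U=\{x\in[0,1]^n:\varphi(x)\}$. (The only degenerate case, $U=[0,1]^n$, corresponds to the empty, vacuously true conjunction.) For the converse, given $\varphi=C_1\lor\dots\lor C_m$, each $C_j$ defines a face $F_j=\{x\in[0,1]^n:C_j(x)\}$, and then $\{x\in[0,1]^n:\varphi(x)\}=F_1\cup\dots\cup F_m$ is a union of hypercube faces by definition.

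I do not expect any real obstacle here: the statement is pure unwinding of definitions. The two points deserving a moment of care are the finiteness of the set of faces — which is what lets an arbitrary union of faces be written as a finite disjunction — and the bookkeeping that a face and its facial conjunction genuinely determine each other, including fixing a convention for the full hypercube.
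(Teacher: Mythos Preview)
Your proof is correct; the paper does not give an explicit proof of this observation at all, treating it as an immediate consequence of the definitions and illustrating it only with an example. Your write-up is exactly the natural unwinding one would do to make the observation rigorous, including the finiteness point and the handling of the full-hypercube face via the empty conjunction.
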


For example, the union of hypercube faces in \Cref{fig:complex} is encoded by the formula
\[\varphi = (X_1 = 0) \lor (X_2 = 1 \land X_3 = 0) \lor (X_1 = 1 \land X_2 = 1 \land X_3 = 1).\]
Note that the encoding uses one facial conjunction for each maximal face, the non-maximal blue face does not have to be encoded as it is a subset of the green face.

We can now prove \Cref{lem:Facial-Complex}. Let $\varphi$ be some facial disjunctive formula that encodes~$U$ and contains~$m$ 
facial conjunctions, one for each maximal face in $U$, i.e., $\varphi = C_1\lor \ldots \lor C_m$.

The proof goes in several steps.
First, we show that $\varphi$ can be rewritten as a conjunction of disjunctions, using $m$ additional binary \emph{satisfier variables} $S_i$, of which at least one has to be false.
Then, we show how the real variables $X_i$ and binary satisfier variables $S_i$ can be encoded using guard segments, and how the encoding ensures at least one satisfier variable to be false.
Then, we show how to encode a single disjunction as a pocket in the polygon. The conjunction of these disjunctions is naturally encoded by the art gallery problem requiring the whole polygon to be guarded at the same time.
At last it remains to show the desired topological relationships between $V(P)$ and the original \satobject~$U$.

\paragraph{Rewriting Conjunctions.}
Let $C_i = \big((X'_{1}=v_1)\land \ldots \land (X'_{k}=v_k)\big)$ be a conjunction of $\varphi$.
We introduce the boolean variable $S_i$ as a satisfier variable. Intuitively, $S_i$ being true means that $C_i$ is \emph{not} the conjunction that satisfies the whole formula $\varphi$.
We now rewrite $C_i$ as 
\[E_i = \big((X'_{1} = v_1) \lor S_i\big) \land \ldots \land \big((X'_{k} = v_k) \lor S_i\big). \]
It can now be seen that $\varphi=C_1\lor\ldots\lor C_m$ is equivalent to 
$\bar{\varphi}:=E_1\land \ldots\land E_m\land (\neg S_1\lor\ldots\lor \neg S_m)$, as each $E_i$ is equivalent to $C_i\lor S_i$.
Note that $\bar{\varphi}$ is in conjunctive normal form, and all clauses except the last contain exactly two constraints.
In the following construction, this last clause will be handled separately from the others.

As an example we apply this rewriting to the example \satobject{} from \Cref{fig:complex}.
\begin{alignat*}{2}
    &&\varphi = \quad & (X_1 = 0) \lor (X_2 = 1 \land X_3 = 0) \lor (X_1 = 1 \land X_2 = 1 \land X_3 = 1) \hfill \\
    \rlap{is rewritten to}\\
    &&\bar{\varphi} = \quad & (X_1 = 0 \lor S_1) \\
    &&\land\; & (X_2 = 1 \lor S_2) \land (X_3 = 0 \lor S_2) \\
    &&\land\; & (X_1 = 1 \lor S_3) \land (X_2 = 1 \lor S_3) \land (X_3 = 1 \lor S_3) \\
    &&\land\; & (\neg S_1 \lor \neg S_2 \lor \neg S_3)
\end{alignat*}

\paragraph{Placing Variable/Dummy Guard Segments.}
\begin{figure}[tbph]
    \centering
    \includegraphics{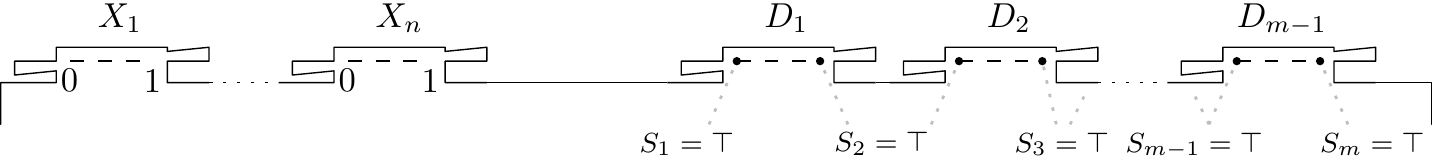}
    \caption{We define for every variable $X_i$ one variable guard segment to the left.
    Furthermore, we define $m-1$ dummy guard segments $D_i$, whose guards can satisfy at most one of two adjacent satisfier variables $S_i$ and $S_{i+1}$ by standing at either the left-most or the right-most point of the segment.
    }
    \label{fig:variables}
\end{figure}
The construction of variables can be seen in \Cref{fig:variables}.
Variables $X_i$ are realized by horizontal \emph{variable guard segments} (called $X_i$, interchangeably) at the top of the art gallery polygon, which are enforced to contain at least one guard in any optimal solution, as otherwise two guards are needed to see both pockets. The value $x_i$ of the variable $X_i$ in any optimal guard assignment is the location $x_i$ of the single \emph{variable guard} on this segment, where $0$ corresponds to the left endpoint of the segment, and $1$ to the right.

For the binary satisfier variables $S_i$, $m-1$ additional pockets with \emph{dummy guard segments} $D_1,\ldots D_{m-1}$ are added. 
In any optimal guard placement, the satisfier variable $S_i$ is considered \emph{true} ($S_i=\top)$ if the location $d_i$ of the \emph{dummy guard} on segment $D_i$ is equal to the left-most point $l(D_i)$ of the segment, or if the dummy guard on segment $D_{i-1}$ is at the right-most point $r(D_{i-1})$ of its segment. If neither is the case, the satisfier variable is considered \emph{false}.

This means that the guard on segment $D_i$ can only make either variable $S_i$ or $S_{i+1}$ true, but not both at the same time. Clearly, with only $m-1$ dummy guard segments and thus only $m-1$ dummy guards, at most $m-1$ satisfier variables are considered to be true in any optimal guard placement. 
This encodes the last clause of $\bar{\varphi}$.

\begin{figure}[htbp]
    \centering
    \includegraphics[]{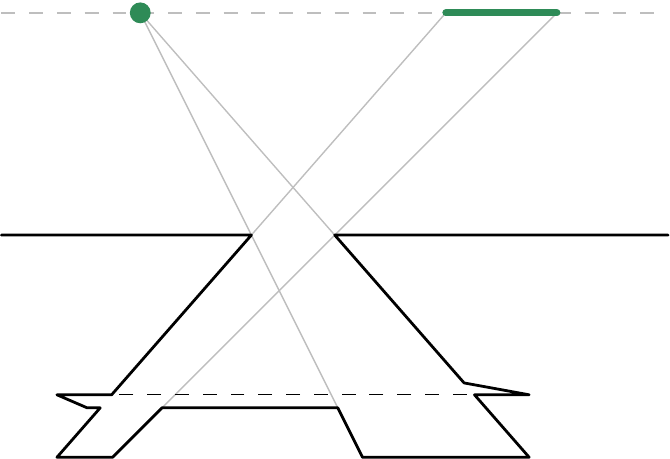}
    \caption{
    Here, we show how to encode the constraint that there is at least one guard on one of the two green intervals.
    This corresponds to a disjunction of two constraints. The helper guard that is enforced to be on the dashed segment can see into at most one of the two corridors.
    }
    \label{fig:disjunction}
\end{figure}

\paragraph{Encoding Disjunctions.}
The construction of disjunction gadgets can be seen in \Cref{fig:disjunction}.

Each disjunction of $\bar{\varphi}$, except the already handled last clause, is encoded as a pocket on the bottom side of the polygon. The entrance of the pocket functions like the pinhole on a pinhole camera, and should be sufficiently small.

Each disjunction contains one constraint of the form $X_i=v_i$ and one constraint of the form $S_j=\top$. 
The constraint $X_i=v_i$ means that the variable guard for $X_i$ is standing at the endpoint $v_i$ of its guard segment, which can be viewed as a degenerate interval $I_1:=[v_i,v_i]$. The constraint $S_j=\top$ means that at least one of the dummy guards on $D_{j-1}$ or on $D_j$ is at the correct position, which is equivalent to saying that some guard is standing on the interval $I_2:=[r(D_{j-1}),l(D_{j})]$ (for satisfier variables $S_1$ and $S_m$, this interval also degenerates to a point).

Now, for both constraints in our disjunction, we build one pocket which can only be seen through the pinhole by a guard standing anywhere on the satisfying interval. To achieve this, the left (right) point of the satisfying interval is connected with the left (right) end of the pinhole, and a sufficiently deep corridor is drawn.

To ensure that the disjunction gadget is guarded exactly if the disjunction is satisfied by any of the two constraints, one horizontal \emph{helper guard segment} $H_i$ is added above the corridors, which needs to contain at least one \emph{helper guard}. This segment is placed in such a way that the helper guard can see into at most one corridor at once from any position. We denote the position of the helper guard on segment $H_i$ by $h_i$.

\begin{observation}\label{obs:guardingdisjunction}
     The disjunction gadget can be completely guarded with exactly one guard inside of the disjunction pocket if and only if there is at least
    one guard on one of the two satisfying intervals $I_1,I_2$, assuming that all guards stand on guard segments.
\end{observation}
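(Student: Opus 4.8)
The plan is to treat the observation as an assembly of visibility properties that the construction of \Cref{fig:disjunction} is designed to guarantee, so I first isolate exactly those properties and then prove the two directions of the equivalence from them. Let $Q$ denote the body of the disjunction pocket (the region between the pinhole and the point where the two corridors branch off), let $K_1,K_2$ be the two corridors, and let $H_i\subset Q$ be the helper segment. From the construction I extract three facts. \emph{(P1)} The deep end of corridor $K_\ell$ is visible through the pinhole from a point $p$ lying outside $Q$ if and only if $p$ lies on the satisfying interval $I_\ell$; this is exactly what ``connect the endpoints of $I_\ell$ to the endpoints of the pinhole and draw a sufficiently deep corridor'' achieves, using the pinhole as the aperture of a pinhole camera. \emph{(P2)} A guard anywhere on $H_i$ sees all of $Q$ together with everything in $K_1\cup K_2$ except possibly the deep ends, and from any single position it sees into at most one of $K_1,K_2$; moreover the two endpoints of $H_i$ see into $K_1$ and into $K_2$ respectively. \emph{(P3)} No point outside $Q$ sees any point of $Q$ except through the pinhole, and $Q$ contains no guard segment other than $H_i$.

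For the ``if'' direction, assume without loss of generality that some guard sits on $I_1$. That guard lies outside $Q$ (its segment is a variable or dummy segment, not $H_i$), and by \emph{(P1)} it sees the deep end of $K_1$ through the pinhole. Now place the single interior guard at the endpoint of $H_i$ that sees into $K_2$; by \emph{(P2)} this guard sees all of $Q$, all of $K_2$ including its deep end, and all of $K_1$ except its deep end. Hence the whole gadget is guarded, using exactly one guard inside the pocket. The case of a guard on $I_2$ is symmetric. For the ``only if'' direction, assume the gadget is completely guarded, all guards lie on guard segments, and exactly one of them lies inside the pocket. By \emph{(P3)} that interior guard lies on $H_i$, so it is a helper guard, and by \emph{(P2)} it fails to see the deep end of at least one corridor, say $K_2$. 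Since no other guard lies inside $Q$, the deep end of $K_2$ is seen by a guard outside $Q$, and by \emph{(P3)} it is seen through the pinhole; so by \emph{(P1)} that guard stands on $I_2$. Thus at least one of $I_1,I_2$ carries a guard, completing the equivalence.

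The step I expect to be delicate is establishing \emph{(P2)}: one must choose the height and horizontal extent of the helper segment so that a single helper guard, from \emph{every} position, already covers the entire ``junction'' part of $Q$ near the pinhole while never seeing into both corridors at once, and so that the two extreme positions reach the two corridor ends. This is a purely planar visibility computation of the same flavour as the pinhole--corridor construction already used for \emph{(P1)}, and for the purposes of the observation I would phrase it as a property guaranteed by the construction, referring to \Cref{fig:disjunction} rather than recomputing coordinates. A secondary point to keep in mind is the degenerate cases, where $I_1=[v_i,v_i]$ is always a single point and $I_2$ degenerates to a point when $j\in\{1,m\}$; these do not affect the argument, since \emph{(P1)} is stated for a guard ``standing anywhere on the satisfying interval'' and a point is a legitimate such interval.
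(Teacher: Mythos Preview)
Your proposal is correct and makes explicit exactly what the paper leaves implicit: in the paper the statement is recorded as an observation with no proof, to be read off directly from the construction described around \Cref{fig:disjunction}, and your properties (P1)--(P3) are precisely the visibility features that construction is designed to guarantee. Your remark that (P2) is the delicate point is apt; both you and the paper ultimately defer its verification to inspection of the figure, which is the appropriate level of rigour for an observation of this kind.
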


\paragraph{Completing the Art Gallery.}
We have defined the upper and the lower boundary of our polygon $P$. We want to ensure that the parts of these boundaries between the variable/disjunction gadgets as well as the left and right boundary can be guarded in a controlled fashion. To this end, we add one more \emph{auxiliary guard segment}. This guard segment is actually only a single point $a$ from where all of the non-interesting parts of the polygon, but no corridor of a disjunction gadget can be guarded. In any optimal solution, there will be one \emph{auxiliary guard} standing at this point.

The complete art gallery polygon corresponding to the example in \Cref{fig:complex} can be found in \Cref{fig:completepolygonexample}.

\begin{figure}[htbp]
    \centering
    \includegraphics[]{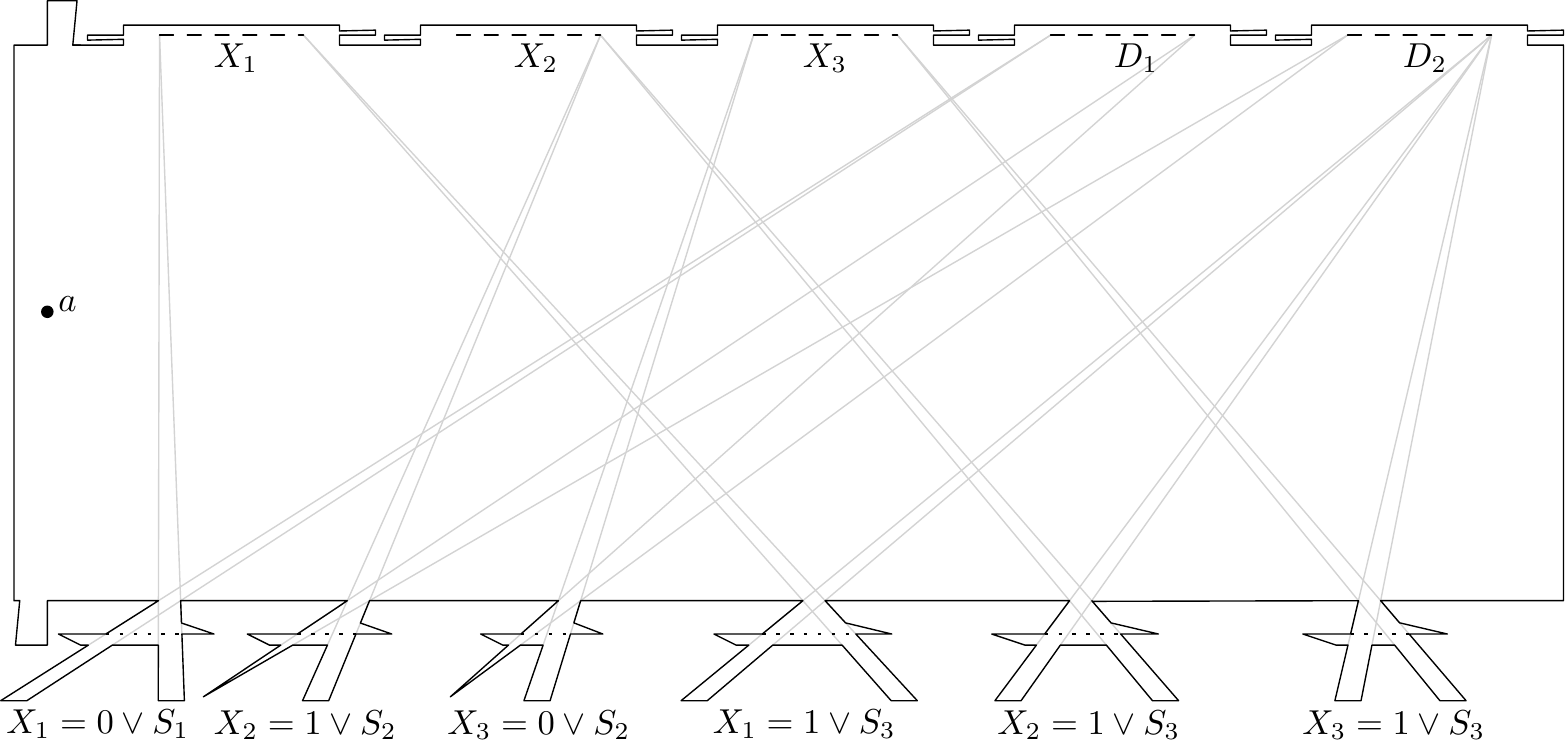}
    \caption{The complete polygon encoding the \satobject{} from \Cref{fig:complex}.}
    \label{fig:completepolygonexample}
\end{figure}

\paragraph{Solution Space.}

We will now analyze the solution space $V(P)$. First, we derive a lower bound on the number of required guards, and then an upper bound, which holds assuming $U$ is non-empty.

In the following, let $g$ denote the number of disjunction gadgets, and recall $n$ being the dimension of $U$ and $m$ being the number of conjunctions in $\varphi$.

\begin{lemma}
    At least $n+m+g$ guards are required to guard the polygon $P$.
\end{lemma}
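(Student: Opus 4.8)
The plan is to show that the guard segments I have constructed are pairwise disjoint and that each one must independently host at least one guard, so that no guard can cover two of them. The construction uses exactly these guard segments: $n$ variable guard segments $X_1,\dots,X_n$ at the top, $m-1$ dummy guard segments $D_1,\dots,D_{m-1}$, one helper guard segment $H_i$ for each of the $g$ disjunction gadgets, and the single auxiliary point $a$. This totals $n+(m-1)+g+1 = n+m+g$ segments, so it suffices to argue that these $n+m+g$ segments force $n+m+g$ distinct guards.

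First I would verify the local ``pocket'' argument for each segment type. For each variable segment $X_i$, the pair of pockets flanking it was placed (as in \Cref{fig:guard-segment}) so that no single point of the polygon sees into both pockets \emph{unless} it lies on $X_i$; hence any guard set missing $X_i$ must spend two guards on these two pockets, which a counting argument will rule out. The same reasoning applies to each helper segment $H_i$ (its two flanking pockets are the two corridor mouths of the disjunction gadget, placed so that only a point on $H_i$ can see both), and to each dummy segment $D_i$ and the auxiliary point $a$, each of which similarly has a private pair of pockets visible simultaneously only from that segment. The key geometric fact to check is that these $n+m+g$ segments are pairwise non-intersecting and, moreover, that the pocket pairs are ``private'': the deep, narrow corridors and pinhole entrances guarantee that a point outside segment $S$ cannot see both pockets associated with $S$, and in particular a point on a \emph{different} guard segment cannot.

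From these local facts the global bound follows by a standard exchange/induction argument: given any feasible guard set $G$, process the $n+m+g$ segments one at a time; if $G$ already contains a guard on the current segment, charge it to that segment and remove it from $G$; if not, then since the two private pockets of that segment must both be guarded and no point sees both, $G$ contains at least two distinct guards seeing these pockets — assign one of them to this segment, delete it, and continue. Since the pocket pairs are private and the segments disjoint, a guard deleted for one segment is never needed again by a later segment (its two pockets are distinct from all others), so this assigns a distinct guard to each of the $n+m+g$ segments, giving $|G|\ge n+m+g$.

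The main obstacle I anticipate is making the ``privacy'' of the pocket pairs fully rigorous, i.e.\ checking that the corridors and pinholes can genuinely be drawn so that the visibility region covering both pockets of a segment is \emph{exactly} that segment (or contained in it) — this requires that the corridors be sufficiently deep and the pinhole entrances sufficiently narrow, and that the flanking geometry of adjacent gadgets does not accidentally create a common viewpoint. Everything else — disjointness of the segments and the exchange argument — is routine once the gadgets are laid out as in \Cref{fig:completepolygonexample}.
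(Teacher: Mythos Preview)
Your approach is correct and essentially the same as the paper's: the paper's entire proof is the two-line observation that there are $n+(m-1)+g+1=n+m+g$ pairwise disjoint guard segments (variable, dummy, helper, auxiliary), hence at least that many guards. Your exchange/induction argument is an unnecessary elaboration of this disjointness fact—once the gadgets are well-separated, each segment has a private witness point deep in a pocket whose visibility cone misses every other segment and pocket, which immediately forces one distinct guard per segment without any charging scheme.
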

\begin{proof}
    There are $n+m+g$ pairwise disjoint guard segments, namely the $n$ variable guard segments, the $m-1$ dummy guard segments, the $g$ helper guard segments, and the auxiliary guard segment. Because they are disjoint, at least one guard is needed for each such guard segment.
\end{proof}

\begin{lemma}\label{lem:pointtosolution}
    For any point $p\in U$, a placement of the variable guards such that $x_i=p_i$ can be extended to a solution using $n+m+g$ guards. 
\end{lemma}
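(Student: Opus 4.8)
The plan is to take a point $p \in U$, place each variable guard $X_i$ at position $x_i = p_i$, and then show that the remaining $m-1$ dummy guards, $g$ helper guards, and one auxiliary guard can be positioned so that the entire polygon $P$ is guarded. Since $p \in U$, the formula $\varphi$ is satisfied at $p$, so some maximal-face conjunction $C_j$ holds at $p$, i.e.\ the constraints of $C_j$ are exactly the coordinates of $p$ that are pinned to $0$ or $1$. The first step is to use this $j$ to decide the satisfier variables: set $S_j = \bot$ (false) and $S_i = \top$ (true) for all $i \neq j$. By the dummy-guard mechanism described before \Cref{fig:disjunction}, making $S_i$ true for all $i \neq j$ while leaving $S_j$ false is achievable with the $m-1$ dummy guards: think of the dummy guards $D_1,\dots,D_{m-1}$ as a chain where $D_i$ can satisfy either $S_i$ (by going left) or $S_{i+1}$ (by going right); choosing to ``skip'' index $j$ leaves exactly that one satisfier false, and every other satisfier is covered by the dummy guard immediately to its left or right. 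I would spell out this assignment explicitly: for $i < j$ put $D_i$ at its right endpoint (satisfying $S_{i+1}$), and for $i \geq j$ put $D_i$ at its left endpoint (satisfying $S_i$); one checks $S_j$ is the unique unsatisfied one.

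Next I would verify that every disjunction gadget is guarded. Each disjunction in $\bar\varphi$ (other than the final clause) comes from some $E_\ell$ and has the form $(X_i = v_i) \lor S_\ell$. If $\ell \neq j$, then by our assignment $S_\ell = \top$, so the interval $I_2 = [r(D_{\ell-1}), l(D_\ell)]$ contains a dummy guard, and by \Cref{obs:guardingdisjunction} that gadget is guarded with one guard placed inside the pocket. If $\ell = j$, then the constraint $X_i = v_i$ is one of the defining constraints of $C_j$, and since $p$ satisfies $C_j$ we have $p_i = v_i$, hence $x_i = v_i$: the variable guard for $X_i$ sits at the degenerate interval $I_1 = [v_i, v_i]$, so again \Cref{obs:guardingdisjunction} applies. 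In either case the single guard required inside each disjunction pocket is the helper guard on $H_\ell$: it is placed in the position that looks down the one satisfied corridor. I should note that the helper guard also needs to see the non-corridor part of the disjunction pocket, which the gadget is designed to allow from any position on $H_\ell$. Finally, the auxiliary guard at the point $a$ covers all the remaining ``uninteresting'' parts of $P$ — the boundary stretches between gadgets and the left/right walls — by construction. Summing up: $n$ variable guards, $m-1$ dummy guards, $g$ helper guards, $1$ auxiliary guard, total $n + m + g$, and every vertex (indeed every point) of $P$ is seen.

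The main obstacle, and the part deserving genuine care rather than a one-line appeal to the figures, is the dummy-guard bookkeeping in the first step: one must confirm that the ``chain'' of $m-1$ dummy guards can realize the assignment ``all true except $S_j$'' simultaneously and without conflict — in particular that a single dummy guard is never asked to be at both its left and its right endpoint, and that $S_1$ and $S_m$ (whose intervals degenerate to points $r(D_0)$ and $l(D_m)$, which do not exist as dummy segments) are handled correctly by the boundary cases of the chain. A secondary point to check is that placing the helper guard to cover one corridor does not prevent it from covering the rest of its own disjunction pocket, and that the small ``pinhole'' entrances indeed isolate each corridor so that guards on other segments neither help nor hinder; these follow from the gadget design but are worth stating. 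Everything else is routine once \Cref{obs:guardingdisjunction} is invoked.
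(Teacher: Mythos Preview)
Your approach is essentially the same as the paper's: pick a satisfied conjunction $C_j$, set $S_j$ false and all other $S_k$ true, then place helper and auxiliary guards accordingly. The paper does not even spell out the explicit dummy-guard assignment, so your level of detail exceeds what is strictly needed.

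One small slip: your explicit assignment of the dummy guards is reversed. With $D_i$ at its right endpoint for $i<j$ and at its left endpoint for $i\geq j$, you get $S_2,\ldots,S_{m-1}$ (and in particular $S_j$) true, while $S_1$ and $S_m$ are left false. The correct assignment is the mirror image: put $D_i$ at its \emph{left} endpoint for $i<j$ (so $S_1,\ldots,S_{j-1}$ are true via $d_i=l(D_i)$) and at its \emph{right} endpoint for $i\geq j$ (so $S_{j+1},\ldots,S_m$ are true via $d_{i}=r(D_i)$); then $S_j$ is the unique false one since $d_{j-1}=l(D_{j-1})\neq r(D_{j-1})$ and $d_j=r(D_j)\neq l(D_j)$. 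You flagged this bookkeeping as the point needing care, which is exactly right; just swap the two cases.
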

\begin{proof}
    Because $p\in U$, this assignment of variables must satisfy at least one conjunction $C_j$ of $\varphi$. Therefore the formula $E_j$ is satisfied even if $S_j$ is false. If we place the dummy guards such that all other $S_k$ for $k\not=j$ are true, every disjunction is satisfied and thus every disjunction gadget has at least one of the two corridors being guarded. The helper guards can now be placed such that every corridor is guarded, and the auxiliary guard stands at its fixed point $a$, leading to a solution using $n+(m-1)+g+1$ guards.
\end{proof}

As a corollary, we get the following observation.

\begin{observation}
    If $U$ is non-empty, exactly $n+m+g$ guards are required to guard $P$. Furthermore, in every optimal solution there is exactly one guard on each guard segment, and every guard must stand on a guard segment.
\end{observation}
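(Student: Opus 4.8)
The plan is to obtain the statement as a direct corollary of the two preceding lemmas plus one pigeonhole step. For the numerical part, the lower-bound lemma already shows that any solution needs at least $n+m+g$ guards; conversely, since $U\neq\emptyset$ we may fix a point $p\in U$ and apply \Cref{lem:pointtosolution} to extend the placement $x_i=p_i$ to a solution using exactly $n+m+g$ guards. Hence the minimum number of guards is exactly $n+m+g$, and every optimal solution attains it.

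For the structural claim I would first isolate the \emph{occupancy property}: any solution that uses only $n+m+g$ guards places at least one guard on each of the $n+m+g$ pairwise disjoint guard segments (the $n$ variable, the $m-1$ dummy, the $g$ helper, and the single auxiliary segment). This follows from the defining behaviour of the double-pocket gadget of \Cref{fig:guard-segment} together with a "no waste" argument: the two pockets attached to a variable/dummy/helper segment can be seen simultaneously only by a guard lying on that segment, and a guard placed elsewhere in $P$ can help see at most one of them and contributes to no other guard segment's pockets; likewise the auxiliary point is the unique vantage point for the leftover boundary stretches. So if some guard segment were left empty, that segment's pockets (or leftover boundary) would force a second guard devoted to it, while each of the other $n+m+g-1$ disjoint segments still needs its own guard, giving at least $n+m+g+1$ guards and contradicting optimality. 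Given occupancy, the conclusion is immediate: an optimal solution has exactly $n+m+g$ guards, each of the $n+m+g$ disjoint segments must receive one of them, and disjointness forbids a single guard from lying on two segments; hence the guards are in bijection with the segments, i.e., exactly one guard per guard segment and no guard lying off all of them.

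The only step that needs genuine attention is the occupancy property, i.e., verifying for each of the four gadget types that an empty segment really costs an extra guard. This is the same geometric reasoning already used (somewhat informally) in the proof of the lower-bound lemma and in \Cref{obs:guardingdisjunction}, so I would simply make it explicit here; once it is in place, the observation reduces to a one-line counting argument and I do not anticipate any further obstacle.
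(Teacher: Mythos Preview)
Your proposal is correct and takes the same approach as the paper, which simply records the observation as an immediate corollary of the two preceding lemmas without giving any further argument. Your explicit treatment of the occupancy property is in fact more careful than the paper's: the paper takes it as built into the guard-segment gadget (and implicitly into the one-line proof of the lower-bound lemma) that every guard segment must contain a guard in any solution, after which the pigeonhole step you spell out is exactly what is intended.
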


Now we will show the opposite direction of \Cref{lem:pointtosolution}.

\begin{lemma}\label{lem:solutiontopoint}
    Let $S$ be a placement of $n+m+g$ guards in $V(P)$. 
    Then the point $p$ with $p_i=x_i$ is contained in $U$.
\end{lemma}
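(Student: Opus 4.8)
The plan is to show the converse of \Cref{lem:pointtosolution}: given an optimal guard placement $S\in V(P)$, the vector $p=(x_1,\dots,x_n)$ of variable guard positions lies in $U$. By the preceding observation, in any optimal solution there is exactly one guard on each of the $n+m+g$ disjoint guard segments, and no guard is anywhere else; in particular the variable guards, dummy guards, helper guards and the auxiliary guard are all accounted for, and the quantities $x_i$, $d_i$, $h_i$ are well-defined. So the entire polygon is guarded by exactly these guards, and I need to extract from "the polygon is guarded" the statement that the induced boolean assignment satisfies $\bar\varphi$, hence that the induced $X$-assignment satisfies $\varphi$, hence $p\in U$.

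The key steps, in order: First, observe that the last clause $(\neg S_1\lor\dots\lor\neg S_m)$ is automatically satisfied — this is the content of the dummy-guard construction, since there are only $m-1$ dummy guards and each can make at most one of $S_i,S_{i+1}$ true by the definition of "true" for satisfier variables; thus at most $m-1$ of the $S_i$ are true, so some $S_j$ is false. Second, for each disjunction gadget (one clause $(X_i=v_i)\lor(S_j=\top)$ of $\bar\varphi$), invoke \Cref{obs:guardingdisjunction}: since the gadget's pocket is guarded and all guards stand on guard segments, the single guard inside the pocket must be a helper guard, and the gadget being fully guarded forces at least one guard to lie on one of the two satisfying intervals $I_1=[v_i,v_i]$ or $I_2=[r(D_{j-1}),l(D_j)]$; but the only guards that can reach those intervals are the variable guard for $X_i$ (giving $x_i=v_i$) or a dummy guard witnessing $S_j=\top$ — the auxiliary guard cannot see into any corridor by construction, and helper guards are confined to their own $H$ segments. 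Hence each clause of $\bar\varphi$ (other than the last) is satisfied by the assignment $X_i\mapsto x_i$, $S_i\mapsto$ (its induced value). Third, combine: the assignment satisfies all of $\bar\varphi$, and since $\bar\varphi$ is equivalent to $\varphi$ (each $E_i$ is equivalent to $C_i\lor S_i$, and the last clause forces some $S_j$ false, so $E_j$ forces $C_j$), the $X$-assignment $x$ satisfies $\varphi$. By the Observation characterizing $U$ via $\varphi$, we conclude $p=(x_1,\dots,x_n)\in U$.

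The main obstacle, and the step requiring the most care, is the bookkeeping in the second step: ensuring that the only way a disjunction gadget's corridor can be guarded is by a guard on one of the two designated satisfying intervals, and that the guard positions on the variable/dummy segments that "witness" a satisfied clause are consistent with the boolean semantics assigned to $S_i$. This relies on the geometry of the pinhole-camera corridors (only a guard on the correct interval sees down the correct corridor), on the helper guard segment $H_i$ being placed so one helper sees at most one corridor, and on the auxiliary guard's point $a$ being chosen to see none of the corridors — all of which were set up in the construction. One should also note that a single dummy guard on $D_{j-1}$ or $D_j$ can only witness $S_j=\top$ via the interval $I_2=[r(D_{j-1}),l(D_j)]$ precisely when it sits at $r(D_{j-1})$ or $l(D_j)$, matching the definition of $S_j$ being true; this is why the interval $I_2$ is taken to be exactly $[r(D_{j-1}),l(D_j)]$, and the consistency is immediate. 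The rest is a direct logical deduction using the equivalence $\varphi\equiv\bar\varphi$ already established.
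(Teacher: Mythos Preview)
Your proof is correct and follows essentially the same approach as the paper: invoke \Cref{obs:guardingdisjunction} on each disjunction gadget to conclude that the induced assignment satisfies every clause of $\bar\varphi$, then use the equivalence $\bar\varphi\Leftrightarrow\varphi$ to get $p\in U$. You spell out more of the bookkeeping (the last clause, which guards can hit which intervals) than the paper's three-sentence proof does, but the argument is the same.
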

\begin{proof}
    As every disjunction gadget is completely guarded in $S$, by \Cref{obs:guardingdisjunction} there must be a guard on at least one of the two intervals corresponding to assignments of variables satisfying the disjunction.
    This means that the corresponding assignment of variables and satisfier variables must satisfy $\bar{\varphi}$. 
    Therefore the assignment of variables $X_i$ must satisfy $\varphi$, and the point $p$ such that $p_i=x_i$ must be in $U$.
\end{proof}

We have therefore established a relationship between optimal guard placements and points in~$U$.

\paragraph{Projection and Homotopy Equivalence.}
We can now show the existence of the projection claimed in \Cref{lem:Facial-Complex}.
Consider any guard placement of one guard per variable, dummy, and helper guard segment, and the auxiliary guard at $a$ and consider the function $\pi$
\begin{equation}\label{eqn:projection}
\begin{aligned}
    \pi: \quad [0,1]^{n+(m-1)+g}\times\{a\} &\longrightarrow [0,1]^{n+(m-1)+g}\times\{a\} \\
    (x_1,\ldots ,x_n,d_1,\ldots ,d_{m-1},h_1,\ldots ,h_g,a) &\longmapsto(x_1,\ldots ,x_n,0\ldots ,0,0\ldots ,0,a)
\end{aligned}
\end{equation}
which maps such a placement by moving all dummy and helper guards to the left-most point of their guard segment. By definition, $\pi$ is equal to its square, therefore $\pi$ is a projection. By \Cref{lem:pointtosolution,lem:solutiontopoint}, the image $\pi(V(P))$ of the solution space under this projection is exactly the embedding of $U$ into the movement of the variable guards.
In particular, $\pi$ restricted to $V(P)$ is surjective onto $U$.
Recall that a map between topological spaces is called \textit{proper} if inverse images of compact subsets are compact.
For the map $\pi$ for any open or compact set of $U$, its pre-image is open or compact.
Thus, the projection $\pi$ is continuous and proper.

To finalize the proof of \Cref{lem:Facial-Complex}, it remains to show homotopy equivalence between $V(P)$ and $U$. 
To this end, we first show the following claim.

\begin{claim}\label{claim:contractible}
For each fixed assignment of variable guards that corresponds to a point in $U$, the set of optimal guard placements extending this assignment with helper and dummy guards is contractible.
\end{claim}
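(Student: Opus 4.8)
The plan is to show that, for a fixed assignment $x = (x_1,\ldots,x_n)$ of the variable guards corresponding to a point $p \in U$, the fiber $F_x \subseteq V(P)$ of guard placements extending $x$ is not merely nonempty but \emph{star-shaped} with respect to a canonical "default" placement, and hence contractible via a straight-line homotopy. The default placement will be the one constructed in the proof of \Cref{lem:pointtosolution}: since $p \in U$, some conjunction $C_j$ is satisfied by $x$, so we may set $d_i$ so that every satisfier variable $S_k$ with $k \ne j$ is true (for instance $d_i = l(D_i)$ for all $i$, which makes every $S_i$ with $i \ge 1$ and also $S_{i+1}$... — more carefully, pick the lexicographically-first dummy configuration making all $S_k$, $k\ne j$, true), and then place each helper guard $h_i$ at the position that sees into the one corridor of gadget $i$ that still needs covering. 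Call this point $q_x \in F_x$.

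The key structural observation is that, once the variable guards are frozen at $x$ and the auxiliary guard is frozen at $a$, the constraints decouple across the remaining guard segments almost completely: a placement $(d_1,\ldots,d_{m-1},h_1,\ldots,h_g)$ lies in $F_x$ iff (i) the satisfier variables it induces, together with $x$, satisfy every clause of $\bar\varphi$, and (ii) each helper guard $h_i$ actually sees into a corridor that is not already covered by a satisfied $X$-constraint. The first condition is a condition only on the $d_i$'s (given $x$); the second couples each $h_i$ to the $d_i$'s but to no other $h$. So the plan is: first contract the helper coordinates, then contract the dummy coordinates. Concretely, I would write the homotopy in two stages. Stage 1: linearly move, for each $i$, the helper guard $h_i$ from its current position toward the position in $q_x$; I must check this path stays feasible, i.e., at every intermediate time the disjunction gadget $i$ remains fully guarded. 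This is where the "pinhole" geometry does the work — a helper guard sweeping along $H_i$ sees a contiguous range, and I need the set of feasible $h_i$-positions (given the fixed $x$ and the fixed $d$'s along the path) to be a single interval containing both endpoints, so the straight line stays inside it. Stage 2, with all helpers now at their $q_x$-values, contract the dummy coordinates $d_i$ linearly to their $q_x$-values; here I must check that (a) the induced satisfier assignment keeps $\bar\varphi$ satisfied throughout — true because $S_k$ for $k \ne j$ stays true as $d_i$ moves monotonically toward an endpoint, and $S_j$ is never needed — and (b) the helper guards, now frozen, still see the corridors they need to, which holds because moving $d_i$ toward its $q_x$ endpoint only enlarges the $S$-intervals and hence only makes more corridors covered "for free," never fewer. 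Composing the two stages gives a deformation retraction of $F_x$ onto $\{q_x\}$.

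The main obstacle I anticipate is Stage 1: verifying that the feasible set for a single helper guard $h_i$, holding everything else fixed along the homotopy, is genuinely an interval (so that the linear path between two feasible points is feasible). This reduces to a visibility computation in the disjunction gadget of \Cref{fig:disjunction}: one must confirm that the set of points on $H_i$ from which corridor-left is visible is an interval, likewise for corridor-right, that these two intervals overlap in at most... — actually they should be arranged so their union is all of $H_i$ but each is a proper sub-interval, and that the "needed coverage" is always exactly one of the two corridors (the other being covered by a true $X$- or $S$-constraint), so the feasible set for $h_i$ is exactly one of these two intervals and in particular connected. I would handle this by appealing to \Cref{obs:guardingdisjunction} together with the explicit construction: since the pinhole acts as a camera, visibility of each corridor from $H_i$ is determined by which side of a fixed line through the pinhole the guard lies on, which cuts $H_i$ into two sub-intervals. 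A secondary, lower-risk obstacle is checking continuity of the whole two-stage homotopy jointly in $(t, \text{placement})$, which is routine since each stage is a linear interpolation with continuously-varying endpoints. Once Stage 1 is settled, Stage 2 is a clean monotone contraction and the claim follows.
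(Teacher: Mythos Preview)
Your Stage~2 does not work: the assertion that ``$S_k$ for $k \ne j$ stays true as $d_i$ moves monotonically toward an endpoint'' is false, because $S_k$ is true exactly when $d_{k-1} = r(D_{k-1})$ or $d_k = l(D_k)$, so moving a dummy guard \emph{away from} one endpoint toward the other immediately destroys the satisfier it was supporting. Concretely, take $m=3$ and suppose $x$ satisfies $C_1$ and $C_3$ but not $C_2$, so that $S_2$ must be true in every valid dummy configuration. The placement $d = (r(D_1), r(D_2))$ is valid (it makes $S_2$ and $S_3$ true); choosing $j=3$, your target is $d^* = (l(D_1), l(D_2))$. On the straight-line path from $d$ to $d^*$, both coordinates are strictly interior for every $t \in (0,1)$, so $S_2$ is false there, and any disjunction gadget belonging to $C_2$ is then unguarded no matter where its helper stands. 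The set of valid dummy configurations is a union of axis-aligned ``L-shaped'' paths, not a convex or linearly star-shaped region, so a straight-line homotopy to a fixed $d^*$ cannot succeed in general.

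The paper handles this by analysing the structure of the dummy space directly: between any two consecutive satisfier variables that are allowed to be false, the intervening dummy guards are constrained to a one-parameter family (from ``all at the left'' to ``all at the right'', sliding one guard across at a time), and these families are independent across such pairs, so the dummy space is a cube. The contraction is then to the leftmost \emph{possible} position of each dummy, which follows the L-shaped path geometry rather than cutting across it. Your overall two-stage plan (helpers first, then dummies) matches the paper's, and Stage~1 is essentially right once you resolve the ambiguity in $q_x$'s helper values when both corridors of a gadget are already covered; the genuine gap is purely in the dummy contraction.
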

\begin{proof}
First, we show how to contract the degrees of freedom in the helper guards.
For each fixed assignment of variable guards and dummy guards, each helper guard is either completely free to move on its whole helper guard segment (if both conditions for its disjunction gadget are met by the variable and dummy guards already), or it is only free to move within the interval where it can observe the corridor that is not yet guarded by variable and dummy guards. In either situation, the space each helper guard can move on is an interval, and all helper guards are independent from each other. Therefore the space spanned by the helper guards for a fixed assignment of variable and dummy guards forms a $g$-dimensional cube, which is contractible. In particular, moving each helper guard to its leftmost possible position contracts this $g$-dimensional cube.

For any given placement of the variable guards corresponding to a point $p\in U$, $p$ can be part of one or more maximal faces. If the point is part of a single maximal face, there is a one-to-one correspondence of the point to a dummy guard assignment. If the point is part of multiple maximal faces, there is some freedom in the placement of dummy guards.

\begin{figure}
    \centering
    \includegraphics{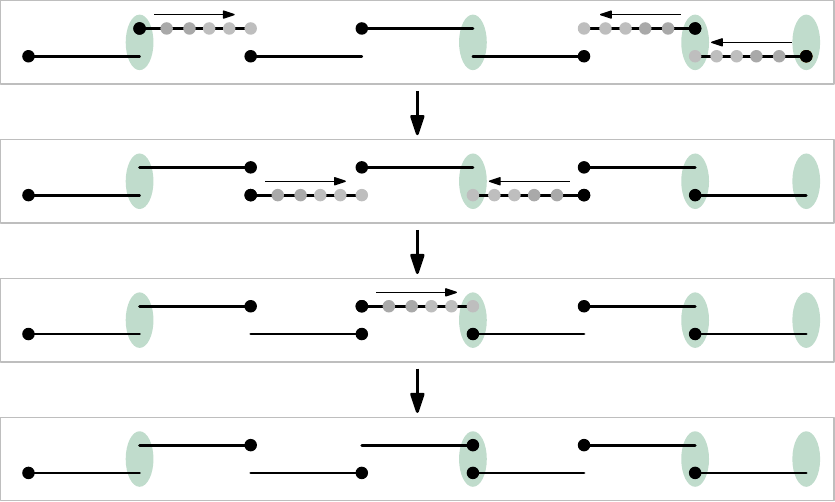}
    \caption{How dummy guards can be moved when multiple dummy variables do not need to be guarded (green ellipses). Between two adjacent such variables, guards can switch between two extremal positions: all guards to the left, or all guards to the right.
    }
    \label{fig:dummy-movement}
\end{figure}

With the point being a part of $v$ maximal faces, $v$ satisfier variables $S'_{1},...,S'_{v}\in \{S_1,\ldots ,S_m\}$ (in order) do not need to be true. In this case, between every consecutive pair $S'_{i}, S'_{i+1}$, 
the dummy guards have some freedom to move around. As all satisfier variables between $S'_{i}$ and $S'_{i+1}$ still need to be true, the possible movement of the dummy guards on the dummy guard segments between $S'_{i}$ and $S'_{i+1}$ equates to a path/interval between the two extreme solutions -- all guards to the left of their intervals, and all guard to the right of their intervals (see \Cref{fig:dummy-movement}). The dummy guard movements between each such consecutive pair are independent, therefore the solution space spanned by the dummy guards for a fixed assignment of variable guards forms a $v-1$-dimensional cube, which is also contractible. Similar to the helper guards, we can thus also move all dummy guards one-by-one to their leftmost possible position to contract this $v-1$-dimensional cube.

To summarize, moving the helper guards and the dummy guards to their leftmost possible positions define two contractions.
We have therefore shown in two steps that for fixed variable guards, the solution space spanned by the dummy and helper guards is contractible.
\end{proof}

Intuitively, we would now like to contract the solution space spanned by the dummy and helper guards to a point for each configuration of variable guards.
However, the above lemma only shows that this is locally possible, and it is not clear that these local contractions can be chosen in a way that they fit together globally.
The intuition can however be formalized using the following lemma.

\begin{lemma}[\cite{smale1957vietoris}]\label{lem:homotopy}
    Let $X$ and $Y$ be connected cubical complexes. 
    Assume that there exists a continuous proper surjective map $f:X\rightarrow Y$ such that for each point
    $y\in Y$, its pre-image $f^{-1}(y)$ is contractible. Then $X$ and $Y$ are homotopy equivalent.
\end{lemma}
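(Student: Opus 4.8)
The plan is to combine two classical facts: a continuous proper surjection whose point-preimages are all contractible is a \emph{weak} homotopy equivalence -- this is the homotopy version of the Vietoris--Begle mapping theorem, proved in greater generality in \cite{smale1957vietoris} -- and a weak homotopy equivalence between CW complexes is an honest homotopy equivalence, by Whitehead's theorem. So the argument has three movements: set up the CW/ANR framework, deduce the weak equivalence, then upgrade it.

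First I would record the point-set structure. A cubical complex is built by gluing cubes $[0,1]^k$ along their faces; subdividing the boundary of each cube into cells exhibits $X$ and $Y$ as CW complexes. Consequently $X$ and $Y$ are locally contractible, and -- being assembled from finitely many (or at worst locally finitely many) Euclidean cubes -- locally compact, separable and metrizable, hence ANRs; they are connected by hypothesis. These are exactly the regularity conditions under which the cited mapping theorem applies.

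Next I would invoke that theorem. The map $f$ is continuous, proper and surjective, and each fiber $f^{-1}(y)$ is contractible, hence $n$-connected for every $n\ge 0$. Smale's argument -- lifting a map of a sphere (or of a pair $(D^{k},S^{k-1})$) cell by cell, using contractibility of the fibers together with the local contractibility of $f$ coming from the ANR structure, with properness keeping everything controlled over compact sets -- then shows that $f_\ast\colon \pi_k(X,x_0)\to\pi_k(Y,f(x_0))$ is an isomorphism for all $k\ge 0$ and all basepoints (for $k=0$ this is just the assumed connectedness, and path-connectedness of the fibers makes the higher statements basepoint-independent in the usual way). Thus $f$ is a weak homotopy equivalence. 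Finally, since $X$ and $Y$ are CW complexes, Whitehead's theorem promotes the weak equivalence $f$ to a genuine homotopy equivalence, so $X\homo Y$; note that $f$ need not be cellular for this last implication.

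The step I expect to be the real work is the middle one: one has to check carefully that cubical complexes are tame enough -- ANRs, with $f$ locally contractible in the precise sense the lifting induction needs -- for the Vietoris-type argument to run, and to make sure the statement invoked is the homotopy version valid in \emph{every} degree rather than merely the homological Vietoris--Begle statement. The CW setup and the Whitehead step are routine once these inputs are in place.
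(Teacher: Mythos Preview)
Your proposal is correct and follows essentially the same route as the paper's own proof in Appendix~\ref{app:Smale}: verify that connected cubical complexes satisfy the point-set hypotheses of Smale's main theorem (locally compact, separable, metric, locally contractible), deduce that $f$ is a weak homotopy equivalence, and then apply Whitehead's theorem using that cubical complexes are CW complexes. The only differences are cosmetic---you frame the regularity via the ANR property and sketch the cell-by-cell lifting mechanism, whereas the paper simply lists the conditions and cites Smale---but the logical skeleton is identical.
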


\Cref{lem:homotopy} follows easily from a more general result by Smale~\cite{smale1957vietoris}.
In order to apply Smale's result
we merely need to review 
a handful of technical topological conditions on the involved spaces.
As this is a standard step
we defer the checks to \Cref{app:Smale}.

It remains to apply \Cref{lem:homotopy} to $X= V(P)$, $Y = U$ and $f = \pi$.
First, being a union of faces of a hypercube, $U$ is naturally a cubical complex.
In the proof of \Cref{claim:contractible} we have shown that $V(P)$ consists of cubes. These cubes can be subdivided such that they fulfill the conditions of a cubical complex, therefore $V(P)$ is a cubical complex too.

By \Cref{claim:contractible}, it also follows that the pre-image under $\pi$ of every point of $U$ is contractible.
Further, as noted above, $\pi: V(P)\rightarrow U$ is indeed continuous, proper and surjective.
Finally, if $U$ is not connected, we can just consider each connected component separately, as the parts are then also disconnected in V(P).
Thus \Cref{lem:homotopy} shows that $U$ and $V(P)$ are indeed homotopy equivalent.

\section{Tangible Examples}
\label{sec:tangible}
In this section we show that some simple art gallery problem instances have solution spaces homeomorphic to nice topological spaces, in particular, we prove \Cref{thm:tangible}.

\Tangible*

Our examples also work for the \textsc{Point-Vertex Art Gallery} version. Thus, the existence of such examples is not implied by the work of Stade and Tucker-Foltz~\cite{stade2022topological}. Furthermore, our examples are very easy to understand, while the approach of~\cite{stade2022topological} leads to galleries with lots of vertices even for simple spaces like the torus.

Beginning with $1$-dimensional solution spaces will give us insights into the concepts used and starting with \Cref{sub:Spheres} we show how to get higher-dimensional, yet still recognizable solution spaces homeomorphic to commonly known topological spaces.

\subsection{Guard Segments}
\label{sub:guardsegments}

Already in \Cref{sec:Realization} we used pockets in the polygon to ensure that guards stand on certain \emph{guard segments}. There, each guard had its own private segment, and these segments did not interact at all. In particular, the extensions of these segments in both directions (until hitting the border of the polygon) did not intersect. This allowed us to use a very simple construction of two pockets to enforce such a guard segment, see for example \Cref{fig:variables}.

For the following constructions of our example spaces we need guard segments that may intersect. Sadly, our previous construction using two pockets per segment does not enforce guard segments properly in this setting. To see this, note that once a guard gets close to a guard segment from one side, both pockets of the guard segment are guarded if there is another guard close to the segment on the other side. Thus, there are possible solutions in which the polygon is guarded but no guard stands on the segment exactly.

We thus search for a new construction of guard segments which properly enforces the property that in each possible solution, each guard segment needs to contain a guard. We suggest the construction shown in \Cref{fig:guard_segments_new}.

\begin{figure}[ht]
\centering
\includegraphics{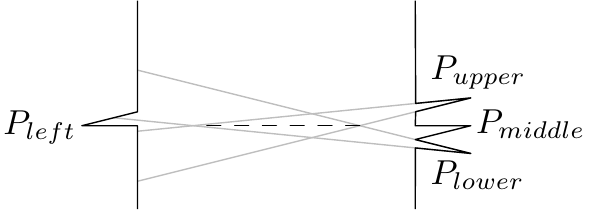}
\caption{Extending the idea of using pockets to enforce a guard segment. Instead of only using one pocket on each side, we use three on one side and one on the other.}
\label{fig:guard_segments_new}
\end{figure}

The idea is that the cones from which these four pockets are visible intersect in exactly the guard segment. Thus, hopefully in an optimal solution any guard would rather be on this line segment and so we indeed get our guard segment. In the remainder of this subsection we will be dealing with the technicalities of these four pockets and prove that they really enforce guard segments.

Without additional arguments, the now four pockets per guard segment only guarantee that at least one guard stands in the union of the four visibility areas of each of the pockets.
This union is some thin area around the extension of the guard segment in both directions. By moving the pockets closer to each other, this area can be made arbitrarily thin. Let us denote this area as the \emph{sausage} around the guard segment.

Our goal is to prove that under certain conditions, the set of solutions using $k$ guards such that each \emph{sausage} contains at least one guard is the same as the set of solutions using $k$ guards such that each \emph{guard segment} contains at least one guard. First, note that each guard segment is contained within its sausage. Thus, it trivially follows that the second set is always a subset of the first. For the other direction, we phrase the following lemma.

\begin{lemma}\label{lem:guard-segments}
In any solution $S$, a guard $g$ standing in the sausage of guard segment $L$ is forced by pockets as shown in \Cref{fig:guard_segments_new} to stand on $L$ itself, if $S$ fulfills either of the following conditions:
\begin{itemize}
    \item[(a)] There is no guard except $g$ standing in the sausage of $L$.
    \item[(b)] There is one other guard $g'\in S$ also in the sausage of $L$, but not exactly on the line extending $L$. There is another guard segment $L'$, such that $L\cap L'\not=\emptyset$, and $L'$ completely lies on the same side of $L$ as $g'$. Furthermore, $g$ is enforced to stand on $L'$ exactly. 
    \item[(c)] There is one other guard $g'\in S$ also in the sausage of $L$, but not exactly on $L$. There are two other guard segments $L_1,L_2$, such that $L_1\cap L$ and $L_2\cap L$ are exactly the two endpoints of $L$. Furthermore, $g$ is enforced to stand on $L_1$, and $g'$ is enforced to stand on $L_2$. 
\end{itemize}
These three cases correspond to the three arrangements shown in \Cref{fig:guard-segment-arrangements}.
\end{lemma}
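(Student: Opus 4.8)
The plan is to distil the four‑pocket gadget into a handful of clean geometric facts and then close all three cases by a one‑line pigeonhole count on ``which guard sees which pocket'', in the spirit of the figure‑driven proofs promised above. Write $\ell$ for the line through $L$, let $L=[p,q]$, and let $P_1,P_2,P_3$ be the three pockets on one side of $\ell$ and $P_4$ the single pocket on the other side, with $C_i$ the set of points of the polygon from which \emph{all} of $P_i$ is visible through its pinhole (a convex region). First I would record the three properties that the construction of \Cref{fig:guard_segments_new} is designed to guarantee, and which hold once the four pockets are made narrow and deep enough and pulled sufficiently close together: \textbf{(I)} every $C_i$ contains $L$, and in any solution some guard lies in $C_i$; in particular every guard responsible for one of $L$'s pockets lies in the \emph{sausage} $\Sigma:=C_1\cup C_2\cup C_3\cup C_4$. \textbf{(II)} $C_1\cap C_2\cap C_3\cap C_4=L$, i.e.\ a point sees all four of $L$'s pockets if and only if it lies on $L$. \textbf{(III)} any point of the polygon \emph{not} on $\ell$ lies in at most one $C_i$, i.e.\ it sees at most one of the four pockets; this is exactly where the three same‑side cones must ``fan apart'' and $C_4$ must sit on the far side of $\ell$.

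Given these, all three cases are immediate. In case (a), $g$ is the only guard in $\Sigma$, so by \textbf{(I)} it must lie in every $C_i$, hence $g\in C_1\cap C_2\cap C_3\cap C_4=L$ by \textbf{(II)}. In cases (b) and (c), the hypotheses say $g$ and $g'$ are the only guards in $\Sigma$, so for each $i$ one of them lies in $C_i$. Here $g'$ does not lie on $\ell$: in (b) this is assumed outright, and in (c) it follows because $g'$ lies on the guard segment $L_2$, which is transversal to $\ell$ and meets $L$ (hence $\ell$) only at $q$, while $g'\neq q$ since $g'\notin L$. By \textbf{(III)}, $g'$ then lies in at most one $C_i$, so $g$ lies in at least three of them; applying the contrapositive of \textbf{(III)} once more, $g\in\ell$. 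Finally $g$ lies on $L'$ in case (b), resp.\ on $L_1$ in case (c); each of these is a guard segment transversal to $\ell$ meeting it in a single point, and that point lies on $L$ --- it is $L\cap L'$ in case (b) (which lies on $L$ because $L\cap L'\neq\emptyset$ and $L\subseteq\ell$), and it is $p$ in case (c). Hence $g\in L$, as claimed, and the three hypotheses are precisely the three arrangements drawn in \Cref{fig:guard-segment-arrangements}.

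The work, and the main obstacle, sits entirely in establishing \textbf{(II)} and \textbf{(III)} for the gadget: that four suitably placed pinhole pockets produce visibility cones whose common intersection is \emph{exactly} the segment $L$ rather than a fat neighbourhood of it, and which are pairwise disjoint away from the line $\ell$, robustly enough that the transversal guard segments $L',L_1,L_2$ re‑enter the sausage only through their contact point with $L$. I expect this to reduce to an explicit coordinate description of \Cref{fig:guard_segments_new} --- choosing the pocket apices and pinhole widths so that the relevant bounding lines of the cones pass through the endpoints $p$ and $q$, then shrinking the pinholes and pulling the pockets together until the thin ``sausage'' collapses onto $\ell$ in the required sense --- followed by verifying the half‑plane containments used above for each of the three arrangements of \Cref{fig:guard-segment-arrangements}. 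Everything after that is the pigeonhole count of the previous paragraph.
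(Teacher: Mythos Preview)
Your case (a) is fine and matches the paper. The gap is property \textbf{(III)}, on which your treatment of (b) and (c) rests entirely. First, the gadget has been misread: in \Cref{fig:guard_segments_new} the three pockets and the single pocket sit at the two \emph{ends} of $L$ (one end carries three pockets labelled $P_{\mathrm{upper}},P_{\mathrm{middle}},P_{\mathrm{lower}}$, the other carries $P_{\mathrm{left}}$), not in the two half-planes determined by $\ell$. With the pockets at the ends, each $C_i$ is a thin wedge whose axis is $\ell$ and which necessarily contains points strictly on \emph{both} sides of $\ell$: for any $x$ in the relative interior of $L$ the sight-line to the pocket tip is non-grazing, so a full neighbourhood of $x$ lies in $C_i$. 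Concretely, a point just above $\ell$ near the middle of $L$ lies simultaneously in $C_{\mathrm{left}}$ and in $C_{\mathrm{lower}}$, so it sees two of the four pockets. Thus \textbf{(III)} is false for the actual gadget (and, by the neighbourhood argument, is not achievable by any placement of pinhole pockets while keeping $L\subseteq C_i$). Your pigeonhole step ``$g'$ sees at most one, hence $g$ sees at least three, hence $g\in\ell$'' therefore collapses.

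The paper does not count symmetrically; it tracks \emph{which} two pockets the off-line guard can reach. In case~(b), a guard above $\ell$ can see at most $P_{\mathrm{left}}$ and $P_{\mathrm{lower}}$, so $g$ must cover $P_{\mathrm{upper}}$ and $P_{\mathrm{middle}}$; but $L'$ lies on the \emph{same} side of $\ell$ as $g'$, so every point of $L'\setminus\ell$ also misses $P_{\mathrm{upper}}$ and $P_{\mathrm{middle}}$, forcing $g$ to the unique point $L'\cap\ell=L'\cap L$. The hypothesis ``$L'$ on the same side as $g'$'' is exactly what makes the two-pocket bound for $g'$ and for $g\in L'\setminus\ell$ coincide; your side-blind count cannot exploit it. Case~(c) needs still more: the paper uses that $g'\in L_2$ meets $L$ at an endpoint at a not-too-small angle, which is what rules out $g'$ seeing both $P_{\mathrm{upper}}$ and $P_{\mathrm{lower}}$ and rules out $g'$ seeing $P_{\mathrm{left}}$; only then do two specific pockets remain for $g$, pinning it on $L_1$ to $L_1\cap L$. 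So the identification of the residual pair of pockets, and with it the side/angle hypotheses in (b) and (c), are doing real work that a uniform ``at most one'' bound cannot replace.
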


\begin{figure}[ht]
\centering
	\begin{subfigure}{.32\linewidth}
		\centering
		\includegraphics[page = 2]{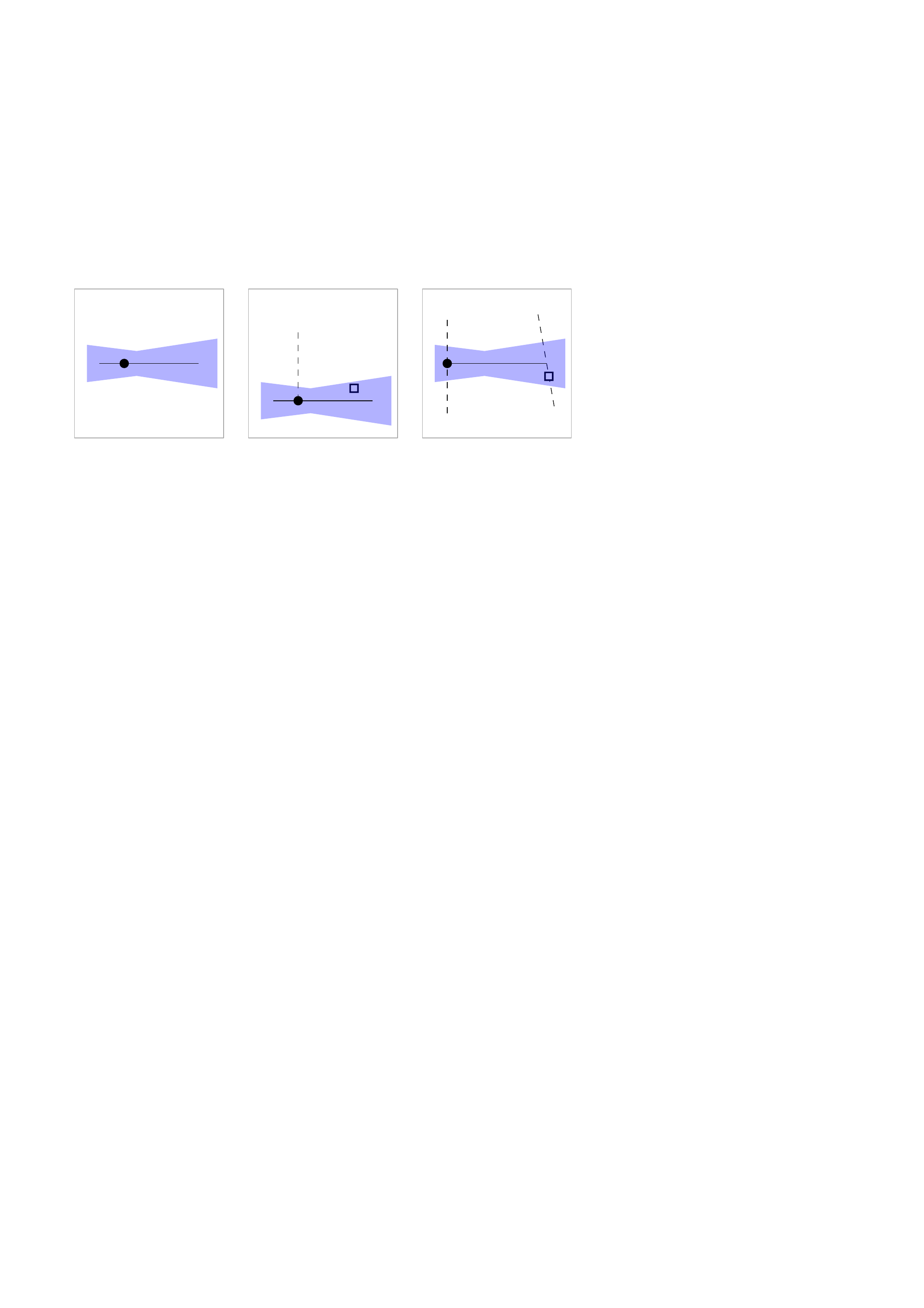}
		\caption*{(a)}
	\begin{minipage}{.1cm}
	\vfill
	\end{minipage}
    \end{subfigure}
    \hfill
	\begin{subfigure}{.32\linewidth}
		\centering
		\includegraphics[page = 3]{figures/Circle/sausage_arrangements.pdf}
		\caption*{(b)}
	\begin{minipage}{.1cm}
	\vfill
	\end{minipage}
	\end{subfigure}
	\hfill
	\begin{subfigure}{.32\linewidth}
		\centering
		\includegraphics[page = 4]{figures/Circle/sausage_arrangements.pdf}
		\caption*{(c)}
	\begin{minipage}{.1cm}
	\vfill
	\end{minipage}
	\end{subfigure}
	\caption{The three situations describing the fundamental ways how other guards (square) are allowed to enter the sausage such that the guard is still forced to be on the guard segment exactly.}
	\label{fig:guard-segment-arrangements}
\end{figure}

\begin{proof} We prove this separately for each condition, and always assume that the respective condition holds. \begin{itemize}
    \item[(a)] $g$ must be on $L$, as no other guard can see any pocket realizing $L$, and the four pockets can only be seen simultaneously from $L$.
    \item[(b)] Without loss of generality, assume guard $g'$ is above $L$. $g'$ can at most see pockets $P_{left}$ and $P_{lower}$ of $L$. By the condition (b), $g$ must be on guard segment $L'$, but there must also be a guard seeing pockets $P_{upper}$ and $P_{middle}$ of $L$. There exists only a single point on $L'$ which can see both of these, which is the intersection point of $L$ and $L'$, and $g$ must therefore stand at this point.
    \item[(c)] Assume $g'$ is below $L$, and $L\cap L_2$ is the right endpoint of $L$. We can furthermore assume that the angle between $L$ and $L_2$ is large enough that any point on $L_2$ except $L_2\cap L$ can only see either 
    $P_{upper}$ or $P_{lower}$ of $L$, but not both. If this assumption would be wrong, we could further move the pockets of $L$ closer together until it holds. We thus know that $g'$ (which is on $L_2$) can at most see $P_{middle}$ and $P_{lower}$. By the condition, $g$ must be on $L_1$, and we know that there must be some guard seeing pockets $P_{upper}$ and $P_{left}$. Again, assuming the angle between $L$ and $L_1$ not being too small, the only point on $L_1$ which can see all of these is the intersection point of $L$ and $L_1$, and $g$ must therefore stand at this point.
\end{itemize}
The proof for (c) works similarly for $L_1$ being at the other endpoint of $L$, or $g'$ being above $L$.
\end{proof}

Of course, our pocket construction may also properly enforce guard segments for some cases not covered by \Cref{lem:guard-segments}, but the three cases (a)--(c) are enough to prove correctness of all of our constructions for the proof of \Cref{thm:tangible}.

\Cref{lem:guard-segments} will allow us to analyze the solution space of our polygons in terms of guard segments instead of sausages.

\subsection{The Circle}
\label{sub:circle}

As a first application of our new construction of guard segments, we show how we can get the simplest yet non-trivial topological space, namely the circle. Recall that each guard segment requires at least one guard to be placed on that segment. 

\begin{figure}[ht]
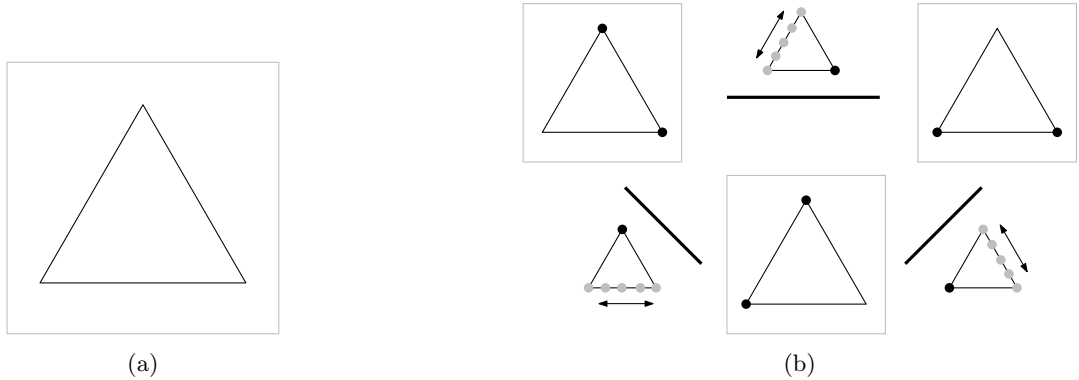

\centering
	\begin{subfigure}{.41\linewidth}
		\centering
		\includegraphics[page = 2]{figures/Circle/circle-intro.pdf}
		\caption{}
		\label{subfiga:Circle}
	\begin{minipage}{.1cm}
	\vfill
	\end{minipage}
    \end{subfigure}
    \hfill
	\begin{subfigure}{.58\linewidth}
		\centering
		\includegraphics[page = 3]{figures/Circle/circle-intro.pdf}
		\caption{}
		\label{subfigb:Circle}
	\begin{minipage}{.1cm}
	\vfill
	\end{minipage}
	\end{subfigure}
	\caption{If we can force guards onto three specially arranged guard segments forming a triangle (a), then every solution needs two guards. The resulting solution space is homeomorphic to the circle (b).}
	\label{fig:Circle}
\end{figure}

Consider the arrangement of three guard segments given in \Cref{subfiga:Circle}. Note that in this case we need at least two guards to guard everything, as a single guard can guard at most two of the segments simultaneously. Possible solutions are drawn in \Cref{subfigb:Circle}, indicated by black dots. Furthermore, it can be seen that the solution space of a polygon inducing these guard segments is homeomorphic to the circle. Whenever one of the guards is placed at the intersection of two different guard segments, the other guard is free to walk along the remaining unoccupied guard segment. In this way we can start in some configuration, move the first guard, move the second guard, and move the first guard once again and we are back in the configuration we started with. It is important to note that we work with unlabelled guards and thus we really start and end in the exact same configuration. Since there are no solutions with guards not on the triangular set of guard segments one can see that the solution space is indeed homeomorphic to a circle.

\begin{figure}[ht]
\centering
	\begin{subfigure}{.49\linewidth}
		\centering
		\includegraphics[page = 2]{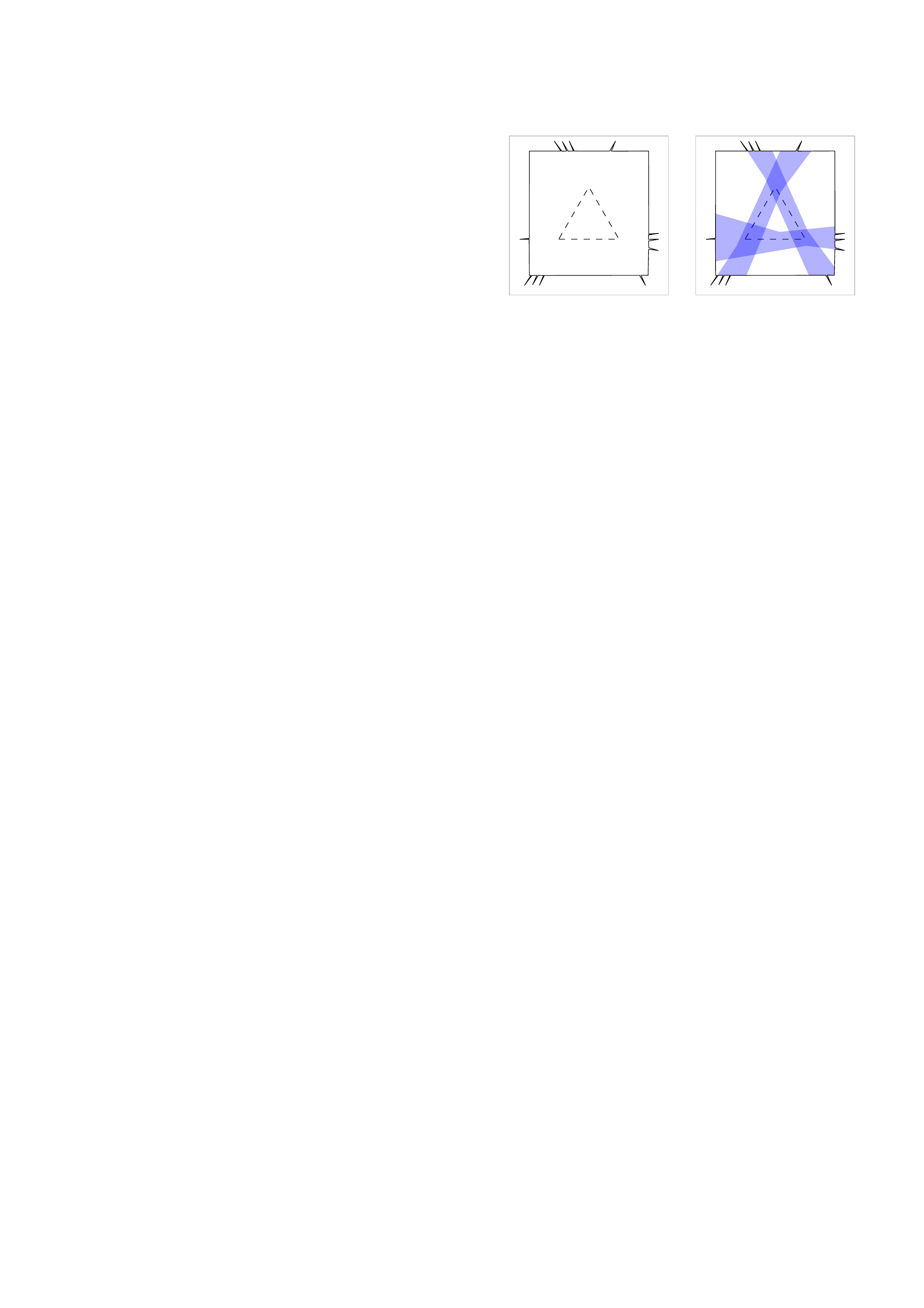}
		\caption{}
		\label{subfiga:Circle_2}
	\begin{minipage}{.1cm}
	\vfill
	\end{minipage}
    \end{subfigure}
    \hfill
	\begin{subfigure}{.49\linewidth}
		\centering
		\includegraphics[page = 3]{figures/Circle/1_circle_second_polygon.pdf}
		\caption{}
		\label{subfigb:Circle_2}
	\begin{minipage}{.1cm}
	\vfill
	\end{minipage}
	\end{subfigure}
	\caption{The second attempt to get guard segments forming a triangle (a). For each guard segment we use four pockets as described in \Cref{fig:guard_segments_new}. In (b) the sausages of the guard segments are indicated in blue.}
	\label{fig:Circle_2}
\end{figure}

We now show that this triangular arrangement of guard segments is enforced properly by our pockets (shown in \Cref{subfiga:Circle_2}). Recalling \Cref{lem:guard-segments}, we only need to prove for each optimal guard placement of guards on the sausages (\Cref{subfigb:Circle_2}), at least one of the three conditions of \Cref{lem:guard-segments} applies. We only show this for the top left solution in \Cref{subfigb:Circle}, for the other solutions the statement follows from symmetry.

For the left and bottom guard segment, there is only one guard on the sausage of the segment, therefore case~(a) applies and the guards are enforced to stand on these segments exactly. Now, for the right segment, for both guards, case~(b) (as well as case~(c)) applies, so at least one of the guards needs to be at an intersection point of segments exactly.

We will refrain from proving the conditions of \Cref{lem:guard-segments} for the more complex constructions, as it is merely tedious but not insightful.
Note that one can always find a sausage containing only a single guard to bootstrap the proof using case (a), as if every sausage would contain two or more guards, a guard could be removed and every sausage would still be guarded, therefore such a guard placement could not be optimal.

\subsection{Clovers}
We now begin proving \Cref{thm:tangible} by showing how to achieve solution spaces homeomorphic to \emph{$k$-clovers}. 
The $k$-clover is also called wedge (sum) of circles or bouquet of circles in the literature.
A $k$-clover consists of $k$ circles glued together at a single point. Note that in the previous section we have shown that we can build a polygon with solution space homeomorphic to the circle, which can also be seen as a $1$-clover.
\paragraph{$2$-clover.}
We first consider the following polygon, see \Cref{subfiga:Clover0}. 

\begin{figure}[ht]
\centering
	\begin{subfigure}{.32\linewidth}
		\centering
		\includegraphics[page = 2]{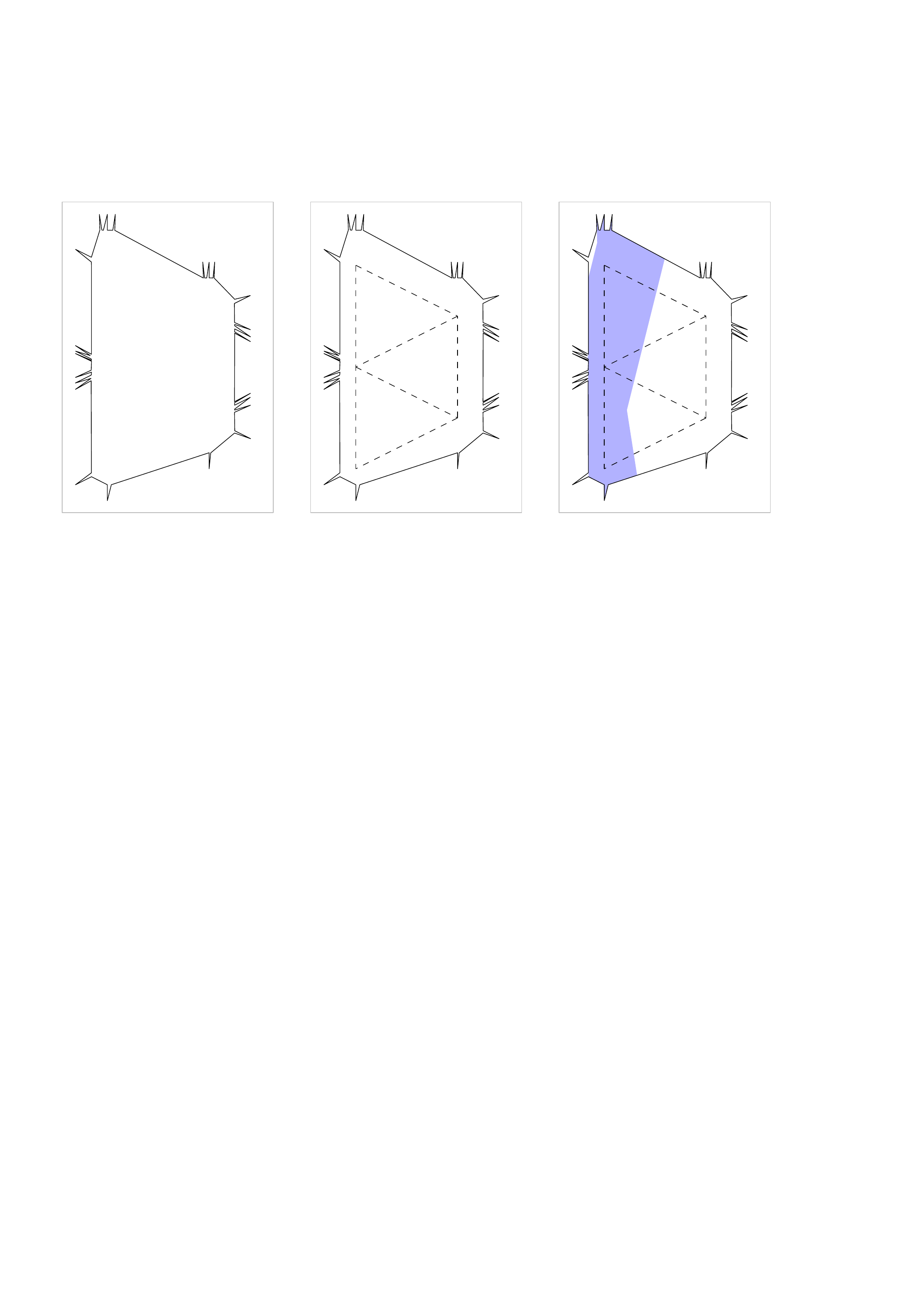}
		\caption{}
		\label{subfiga:Clover0}
	\begin{minipage}{.1cm}
	\vfill
	\end{minipage}
    \end{subfigure}
    \hfill
	\begin{subfigure}{.32\linewidth}
		\centering
		\includegraphics[page = 3]{figures/Clovers/2_Clover_Polygon.pdf}
		\caption{}
		\label{subfigb:Clover0}
	\begin{minipage}{.1cm}
	\vfill
	\end{minipage}
	\end{subfigure}
	\hfill
	\begin{subfigure}{.32\linewidth}
		\centering
		\includegraphics[page = 4]{figures/Clovers/2_Clover_Polygon.pdf}
		\caption{}
		\label{subfigc:Clover0}
	\begin{minipage}{.1cm}
	\vfill
	\end{minipage}
	\end{subfigure}
	\caption{(a) The polygon gives us as a solution space the $2$-clover (i.e., the symbol $\infty$). (b) Guards are forced to live on $6$ specific guard segments (dashed). (c) One of the visibility regions (i.e.,~sausages) is shown on the right, they can be made arbitrarily thin.}
	\label{fig:Clover0}
\end{figure}

The idea is to force an arrangement of $6$ guard segments (cf.~\Cref{subfigb:Clover0}) that each need to have one guard on them. Note that we cannot guard this arrangement with $2$ guards and so any optimal solution needs $3$ guards. As can be seen in \Cref{subfigc:Clover0}, the pockets can be moved together close enough such that the nerve of the sausages of the guard segments is the same as the nerve of the guard segments themselves.

Note that there are multiple solutions with $3$ guards on intersections of guard segments, and for all of these solutions, only one guard can move away from its intersection at a time. \Cref{fig:Clover1} shows a solution with $3$ guards and the movement of one of the guards.

\begin{figure}[ht]
\centering
\includegraphics{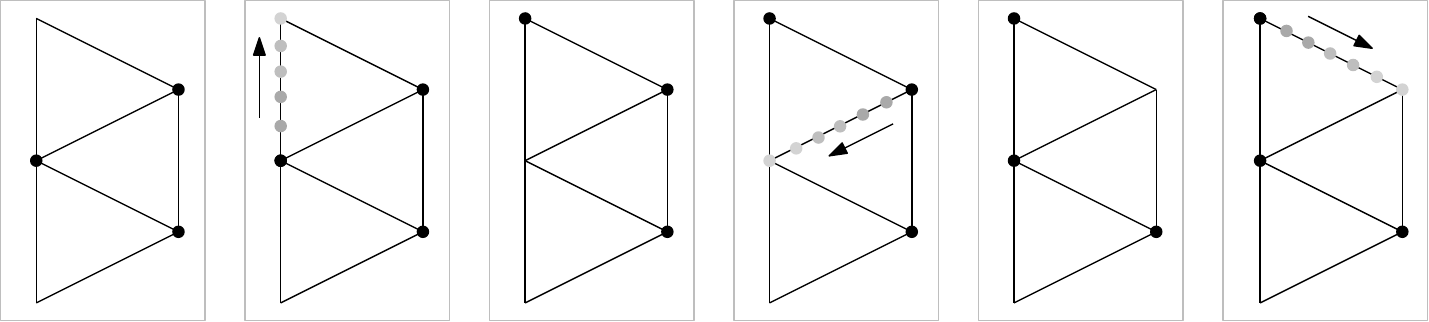}
\caption{The $6$ guard segments from the polygon above. The figure shows how moving guards around is possible, and it shows that there is at least one cycle in the solution space.}
	\label{fig:Clover1}
\end{figure}

By using movements similar to the ones described in \Cref{sub:circle} we can discover the whole solution space, and we arrive at the graph in \Cref{fig:Clover2graph}, where edges mean movement of a single guard along one single guard segment. 

\begin{figure}[htbp]
    \centering
    \includegraphics{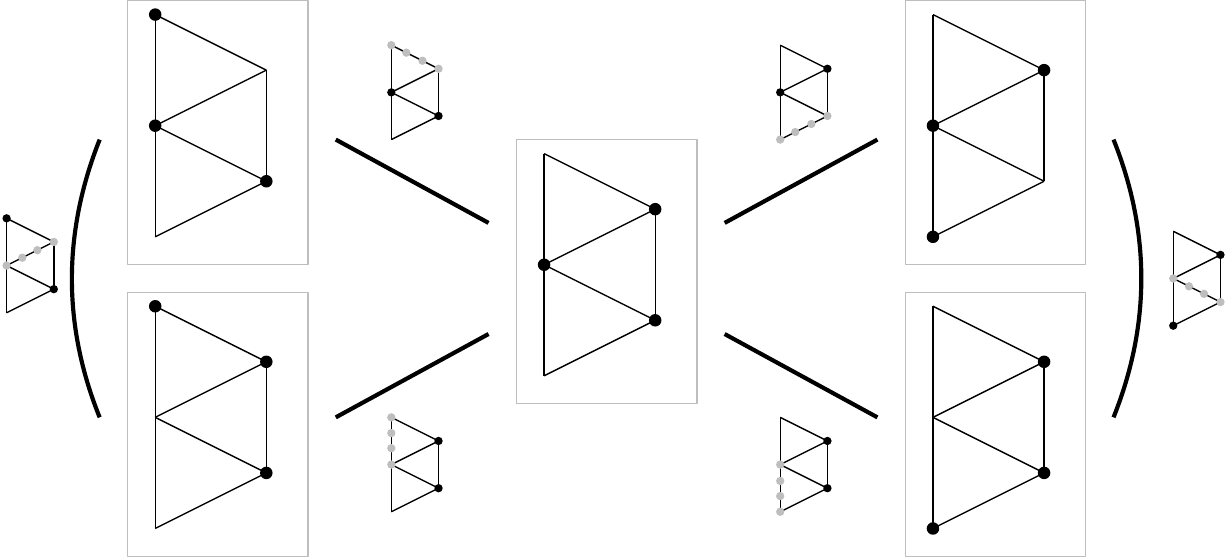}
    \caption{The complete solution space of the polygon. The situations in boxes are connected by single guard movements indicated next to the edges.}
    \label{fig:Clover2graph}
\end{figure}

Finally, using \Cref{lem:guard-segments}, we can conclude that in each solution all guards need to stand on their respective segments, and thus the solution space of this art gallery problem instance is homeomorphic to the symbol $\infty$ or the $2$-clover.

\paragraph{$k$-clovers.}
For showing that we can have any $k$-clover as a solution space we describe the following arrangement of guard segments. Let $k \geq 3$ and create guard segments on the sides of a regular $k$-gon. Additionally, consider the set $M$ of middle points of the already present guard segments and let us add a complete graph onto these (middle) points. 
In other words, we add a guard segment connecting any two points in $M$. Let us denote the resulting set of lines as the \emph{$k$-clover art gallery guard segments}. Examples of the resulting arrangement can be found in \Cref{fig:Clovern}.

\begin{figure}[ht]
    \centering
    \includegraphics{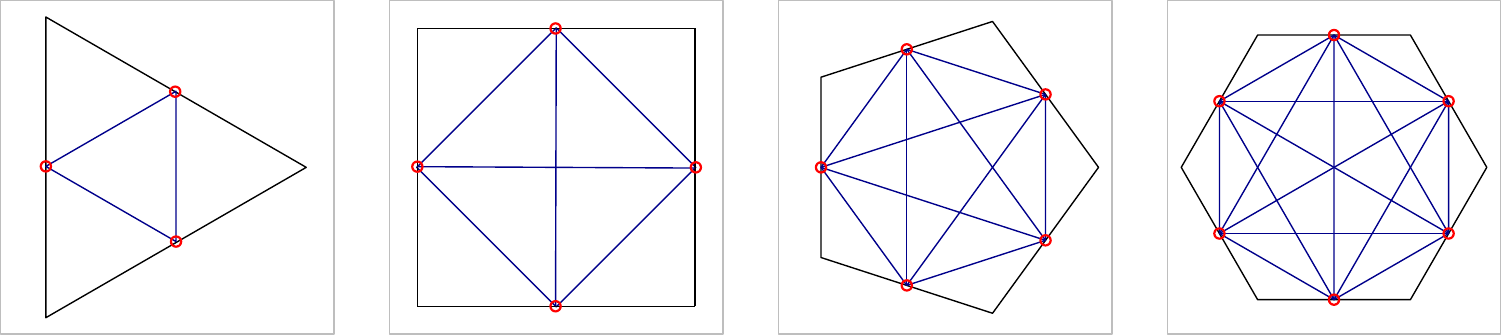}
    \caption{Examples of the $k$-clover art gallery guard segments for small values of $k$. The red circles indicate the set $M$ with the blue lines describing the inner complete graph $K_k$. The black guard segments are the ones we start our construction with, the regular $k$-gon. }
    \label{fig:Clovern}
\end{figure}

For clarity, let us denote the inner part as the inner $K_k$ (blue in \Cref{fig:Clovern}) and the set of intersection points of guard segments as the set of vertices. 

\begin{lemma}
\label{lm:minnumberofguards}
The minimum number of guards to guard every guard segment in the $k$-clover art gallery guard segments is $k$.
\end{lemma}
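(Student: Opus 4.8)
The plan is to establish the lower bound $k$ and the matching upper bound $k$ separately. For the upper bound, I would exhibit an explicit placement of $k$ guards: place one guard at each of the $k$ vertices of the regular $k$-gon where two consecutive $k$-gon edges meet. I need to check that each of the $\binom{k}{2}$ inner segments of the complete graph $K_k$ is also guarded by one of these $k$ vertex guards. This requires a small geometric observation: each inner segment connects two midpoints of $k$-gon edges, and I would argue that such a segment passes through (or can be arranged via the construction to pass through) one of the $k$-gon vertices, or alternatively that each $k$-gon vertex guard sees a whole fan of inner segments. Actually the cleaner route is the following: observe that every inner segment of $K_k$ has both endpoints on $k$-gon edges, and every $k$-gon edge is fully guarded by the two guards at its endpoints; so I should instead place guards at midpoints or pick the placement so the incidences work out. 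I would settle this by a direct case inspection referencing \Cref{fig:Clovern}, choosing the $k$ guards to be the $k$-gon vertices and verifying that the segments of the inner $K_k$ each contain a $k$-gon vertex is false in general, so the correct argument is that each guard at a $k$-gon vertex sees along both incident $k$-gon edges and additionally into the interior, covering the inner segments incident to the two adjacent midpoints; a counting check that every inner segment is thereby covered finishes the upper bound.

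For the lower bound, I would argue that no $k-1$ guards can cover all guard segments. The key structural fact is that the $k$ edges of the regular $k$-gon are pairwise ``independent'' in the sense that no single point of the polygon lies on (or sees, once realized as sausages) two non-adjacent $k$-gon edges, and a point can lie on at most two of them simultaneously only if it is a shared vertex. More carefully: a guard covers a guard segment only if it lies on that segment (this is the content of how guard segments behave, once we pass to the sausage picture and invoke \Cref{lem:guard-segments} implicitly — but here we are only counting, so I can phrase it purely combinatorially in terms of the line arrangement). So I want to show the line arrangement has the property that every point lies on at most two of the $k$ outer segments, and moreover the points lying on two outer segments are exactly the $k$-gon vertices, each of which lies on exactly two. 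Then a set of $k-1$ guards covering the $k$ outer segments would need, by a pigeonhole/covering argument, at least $\lceil k/2 \rceil$ guards just for the outer segments — which is less than $k$ for $k \geq 3$, so this alone is not enough. Hence I must also bring in the inner $K_k$ segments: I would show that the inner segments force additional guards, e.g. by finding a sub-collection of guard segments (some outer, some inner) that is pairwise non-co-punctual except at few points, yielding the bound $k$.

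The main obstacle, and where I would focus the real work, is the lower bound: producing a clean combinatorial certificate that $k-1$ guards are insufficient. The natural tool is a fractional/combinatorial covering argument on the intersection hypergraph of the line arrangement — I would look for $k$ guard segments together with an assignment showing that any point covers few of them, so that $k$ guards are necessary. Concretely, I expect the cleanest argument to be: the $k$ outer $k$-gon edges have the property that the only points incident to two of them are the $k$ vertices, and each vertex is incident to exactly two consecutive edges; model this as needing a vertex cover of the cycle $C_k$ (vertices = $k$-gon edges, edges = the $k$ shared vertices), which has size $\lceil k/2\rceil$; then augment with the inner structure. If the pure outer-edge argument gives only $\lceil k/2 \rceil$, I will need the inner segments of $K_k$ to push it up to $k$, and the delicate point is checking that guards placed to cover inner segments cannot ``double up'' on outer edges — i.e. that the positions forced by inner-segment coverage are incompatible with sitting at $k$-gon vertices. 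I would resolve this by the geometric observation that the midpoints (hence the inner segments' endpoints) are in the relative interiors of the $k$-gon edges, so a guard on an inner segment that also lies on an outer edge would have to be at such a midpoint, covering only that one outer edge and only the inner segments through that midpoint; a careful count of how many guard segments each candidate guard position can simultaneously cover then yields the bound $k$.
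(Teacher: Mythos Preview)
Your upper bound is built on the wrong placement. Putting the $k$ guards at the corners of the regular $k$-gon covers each outer edge (each corner lies on two of them), but it covers \emph{none} of the inner $K_k$ segments: those segments join midpoints of $k$-gon sides and, in a regular $k$-gon, the line through two such midpoints does not pass through any $k$-gon vertex (check $k=3$ or $k=4$ by hand). Since ``covering'' here means the guard literally lies on the segment, the ``counting check that every inner segment is thereby covered'' would fail. The placement that actually works is the obvious one you briefly mention and then abandon: put one guard at each midpoint in $M$. Each midpoint lies on its outer $k$-gon edge and on all $k-1$ inner segments incident to it, so every segment is covered.

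Your lower-bound plan also does not close. Starting from the outer cycle alone gives only $\lceil k/2\rceil$, as you note, and the ``augment with the inner structure'' step is where the entire content lies; you have not supplied an argument there, and the observation that a guard on an inner segment meeting an outer edge must sit at a midpoint does not by itself produce the bound $k$. The paper's route is the reverse of yours and is much cleaner: first show that the inner $K_k$ alone already requires $k-1$ guards. The point is that any guard not at a point of $M$ lies at an interior crossing, and at such a crossing the inner segments through distinct unoccupied points of $M$ are pairwise non-concurrent, so $i$ interior guards cannot cover a $K_{i+2}$ on the $i+2$ unoccupied midpoints. Once you have $k-1$ guards forced onto the inner $K_k$, at least one midpoint of $M$ is unoccupied, and the outer $k$-gon edge through that midpoint is uncovered (no interior crossing lies on an outer edge), forcing a $k$th guard.
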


\begin{proof}
Note that if there is a guard placement guarding everything, then there is a guard placement with the same number of guards, with guards only on vertices. This is because moving a guard from a single guard segment to a vertex only increases the set of segments covered by this guard. 

Assume for the sake of contradiction, that we can guard the inner $K_k$ with at most $k-2$ guards. If all $k-2$ guards are on points in $M$, then there are two unoccupied points in $M$ and the edge connecting them cannot be covered by any guard. Hence, let us assume that $i \geq 1$ of the guards are in the interior of the polygon spanned by the vertices of the inner $K_k$. Consider the induced graph $G$ formed by unoccupied points $M' \subseteq M$. By the construction of the $k$-clover art gallery guard segments, we know that $G = K_{i+2}$ and recall that there is no guard on any point in $M'$. Since every point in $M'$ has $i+1$ adjacent edges in $G$ that do not intersect in the interior of the polygon spanned by the vertices of inner $K_k$, we cannot guard everything with only $i$ guards. Overall this shows that there are at least $k-1$ guards needed in the inner $K_k$. 

Suppose now that we have $k-1$ guards in the inner $K_k$. Then there is at least one unoccupied point of $M$ and with it the corresponding guard segment of the regular $k$-gon is uncovered. Thus any solution to the $k$-clover art gallery guard segments needs at least $k$ guards.

Finally, note that putting a guard on every point of $M$ gives a valid assignment.
\end{proof}

\begin{lemma}
The solution space of the $k$-clover art gallery guard segments is the $k$-clover.
\end{lemma}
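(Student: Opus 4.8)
The plan is to pin down every optimal placement explicitly and then recognize the solution space as a graph. Write $O_1,\dots,O_k$ for the sides of the regular $k$-gon, with $O_i$ having midpoint $m_i$ and endpoints the polygon vertices $v_{i-1,i}$ and $v_{i,i+1}$ (indices mod $k$), and write $N_{ij}=\overline{m_im_j}$ for the inner segments. By \Cref{lm:minnumberofguards} a solution uses exactly $k$ guards; since $k-1$ guards never suffice, no guard may lie off all segments (it would cover nothing), so in any solution all $k$ guards lie on segments and, forming a set of size $k$, are pairwise distinct. By \Cref{lem:guard-segments} this arrangement is realizable by a polygon in which every optimal placement keeps each guard on its segment, so it suffices to analyze the arrangement itself.

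The core step is a classification of solutions by $t:=|T|$, where $T$ is the set of indices $i$ with no guard exactly at $m_i$. The $k-t$ guards at the midpoints $m_l$ ($l\notin T$) already cover every $O_l$ with $l\notin T$ and every $N_{ij}$ with $\{i,j\}\not\subseteq T$, so the remaining $t$ guards --- none at a midpoint --- must cover $\{O_i:i\in T\}$ together with the $\binom t2$ edges of the complete graph on $\{m_i:i\in T\}$. I would rule out $t\ge 3$ exactly as in the proof of \Cref{lm:minnumberofguards}: fixing one $m\in\{m_i:i\in T\}$, the $t-1$ such edges incident to $m$ pairwise meet only at $m$ and no guard occupies $m$, so each non-midpoint guard covers at most one of them; hence at least $t-1$ of the $t$ remaining guards are consumed by these edges, leaving at most one guard to cover the $t\ge 3$ segments $\{O_i:i\in T\}$, impossible since a single point lies on at most two of the $O_i$. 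For $t=2$ the same count forces $T=\{i,i+1\}$ to be a consecutive pair (otherwise the one leftover guard cannot cover the two non-adjacent sides). The surviving solutions are: $t=0$, the single placement $c_0$ with a guard at every midpoint; $t=1$ with $T=\{i\}$, where $k-1$ guards sit at $\{m_l:l\ne i\}$ and one guard ranges freely over $O_i$ --- a closed arc $B_i\cong O_i\cong[0,1]$ with $c_0$ in its interior (at the parameter $m_i$) and endpoints $P_i^-:=\{m_l:l\ne i\}\cup\{v_{i-1,i}\}$ and $P_i^+:=\{m_l:l\ne i\}\cup\{v_{i,i+1}\}$; and $t=2$ with $T=\{i,i+1\}$, where $k-2$ guards sit at $\{m_l:l\ne i,i+1\}$, one guard sits at the vertex $v_{i,i+1}$ (the only point covering both $O_i$ and $O_{i+1}$), and one guard ranges over the relative interior of the medial-polygon side $N_{i,i+1}$, which, being a side of the convex hull of the midpoints, is crossed by no other inner segment --- an open arc $C_i$ whose two limits are $P_i^+$ and $P_{i+1}^-$, i.e.\ the $v_{i,i+1}$-endpoints of $B_i$ and of $B_{i+1}$.

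It remains to read off the topology. The solution space is the union of the $2k$ closed arcs $B_1,\dots,B_k$ and $\overline{C_1},\dots,\overline{C_k}$; each is a continuous injective image of $[0,1]$ (move one guard, fix the rest) and hence, $[0,1]$ being compact, an embedded arc. A short bookkeeping of guard-set equalities, organized around the invariant ``which midpoints are occupied'', shows that two of these arcs intersect only in the points $c_0$ and $P_i^{\pm}$: the point $c_0$ is interior to every $B_i$, the endpoint $P_i^+$ of $B_i$ coincides with one endpoint of $\overline{C_i}$, the endpoint $P_i^-$ of $B_i$ coincides with one endpoint of $\overline{C_{i-1}}$, and there are no further incidences. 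Subdividing each $B_i$ at $c_0$ presents the solution space as a finite connected graph with one vertex $c_0$ of degree $2k$, the $2k$ further vertices $P_i^{\pm}$ of degree $2$ (each joined to $c_0$ by half of $B_i$), and $3k$ edges; smoothing the degree-$2$ vertices pairs the $2k$ edge-germs at $c_0$ into $k$ loops, so the space is a wedge of $k$ circles, i.e.\ the $k$-clover.

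The main obstacle is the classification --- above all, ruling out $t\ge 3$ while handling the geometry of the regular $k$-gon carefully. One needs that the midpoints lie in convex position (so no three are collinear and each $N_{i,i+1}$ is crossed by no other inner segment) and that each polygon vertex $v_{i,i+1}$ lies on exactly the two segments $O_i,O_{i+1}$; pleasantly, the concurrences that genuinely occur among the long inner diagonals (the centre of the polygon when $k$ is even, and similar) never enter, because the ``$t-1$ edges at a fixed $m$'' argument is indifferent to how the remaining edges cross. The residual tasks --- checking that each arc is embedded and that distinct arcs meet only in the listed vertices --- are routine once one tracks the set of occupied midpoints.
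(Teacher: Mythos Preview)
Your proof is correct and follows essentially the same approach as the paper: both identify the all-midpoints configuration as the wedge point and describe each loop as ``move one guard along its outer side to a polygon vertex, then slide the freed neighbour along the inner edge back to the vacated midpoint.'' Your classification by the number $t$ of unoccupied midpoints makes rigorous what the paper argues more informally via guard movements, and your explicit graph bookkeeping (the arcs $B_i$, $C_i$ and their incidences) is a welcome tightening of the paper's sketch.
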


\begin{proof}
The concepts used to analyze the solution space of the $2$-clover polygon apply here as well. The common ``point'' (i.e.,~center of the clover) is the canonical configuration already described above, where the $k$ guards are all placed on the points of $M$ (see \Cref{fig:Cloverk} leftmost). Note that in this specific configuration every single guard can be moved along its corresponding outer guard segment.
However; we can only move one of the guards at a time, as moving two of them simultaneously would leave the edge between their original locations unguarded. Note also that moving a guard into the inside of the $K_k$ would leave the outer guard segment (the one from the $k$-gon) unguarded.

Whenever we move a guard far enough, until it reaches a corner of the regular $k$-gon, this frees exactly one other guard (namely the one on the side of the $k$-gon which is newly covered by the moved guard). This freed guard can now only walk on one particular inner edge to the old position of the first guard, ending back at the canonical solution. The described movements can be seen in  \Cref{fig:Cloverk} for the case of the $4$-clover.

\begin{figure}[ht]
\centering
	\begin{subfigure}{.19\linewidth}
		\centering
		\includegraphics[page = 2]{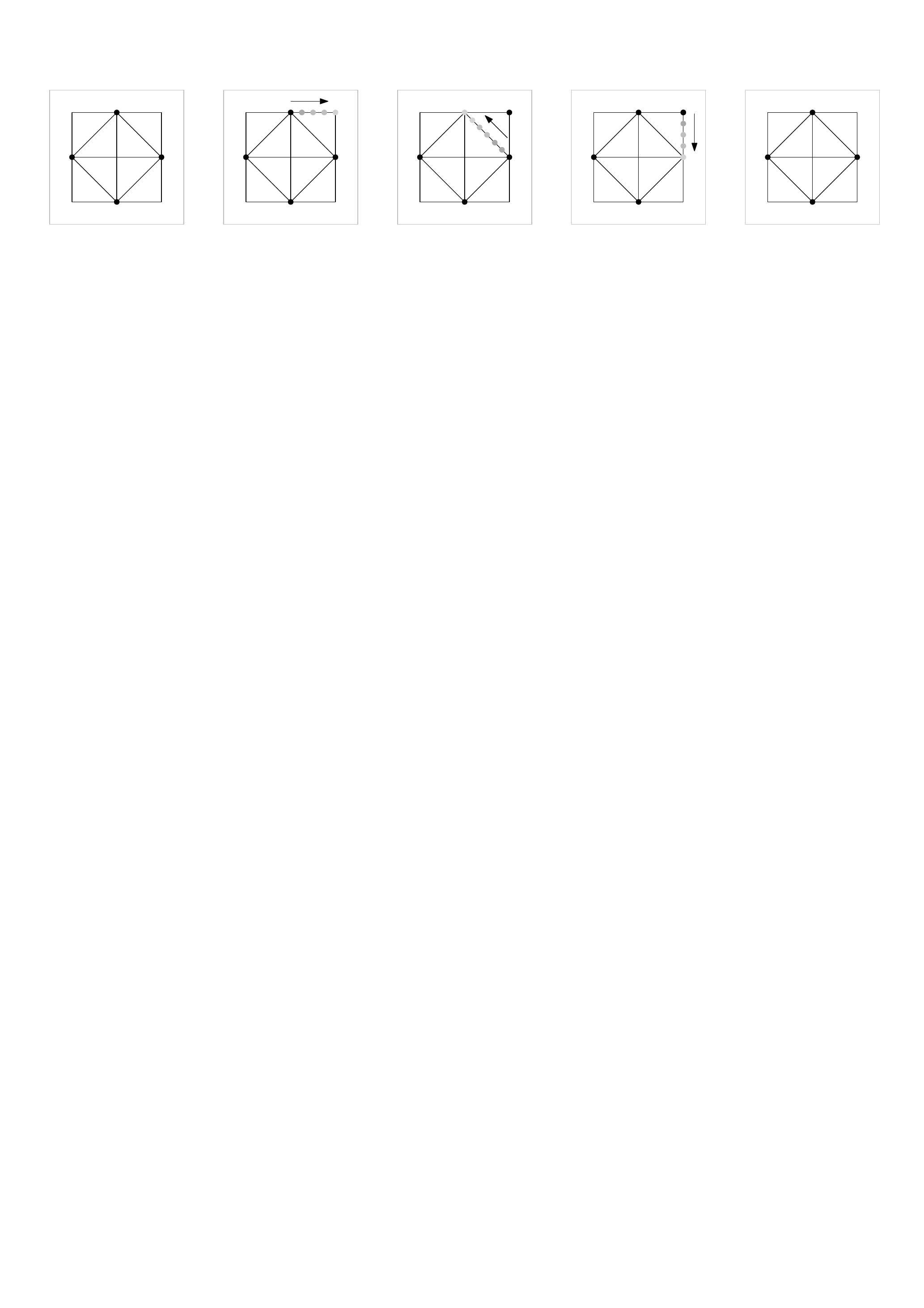}
	\begin{minipage}{.1cm}
	\vfill
	\end{minipage}
        \end{subfigure}
        \hfill
	\begin{subfigure}{.19\linewidth}
		\centering
		\includegraphics[page = 3]{figures/Clovers/4_Clover_movement.pdf}
	\begin{minipage}{.1cm}
	\vfill
	\end{minipage}
        \end{subfigure}
        \hfill
	\begin{subfigure}{.19\linewidth}
		\centering
		\includegraphics[page = 4]{figures/Clovers/4_Clover_movement.pdf}
	\begin{minipage}{.1cm}
	\vfill
	\end{minipage}
	\end{subfigure}
        \hfill
	\begin{subfigure}{.19\linewidth}
		\centering
		\includegraphics[page = 5]{figures/Clovers/4_Clover_movement.pdf}
	\begin{minipage}{.1cm}
	\vfill
	\end{minipage}
	\end{subfigure}
        \hfill
	\begin{subfigure}{.19\linewidth}
		\centering
		\includegraphics[page = 6]{figures/Clovers/4_Clover_movement.pdf}
	\begin{minipage}{.1cm}
	\vfill
	\end{minipage}
	\end{subfigure}
	\caption{Moving guards along the edges of the $k$-clover art gallery guard segments.}
	\label{fig:Cloverk}
\end{figure}

As already mentioned, whenever a guard is on a single edge, no other guard is free to move. Hence, by the structure of the graph, we get exactly $k$ circles attached to a common point and thus the $k$-clover as claimed.
\end{proof}

\begin{lemma}
We can realize the $k$-clover art gallery guard segments as an art gallery problem of a simple polygon. 
\end{lemma}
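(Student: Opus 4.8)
The plan is to turn the abstract arrangement $\A$ of the $k$-clover art gallery guard segments into the boundary of a simple polygon by the same pocket mechanism already used in \Cref{sec:tangible}, and then to transfer the two preceding lemmas. First I would enclose $\A$ in a large convex polygon $B$ whose interior contains $\A$ together with a small neighbourhood of it. The polygon $P$ is then obtained from $B$ by drilling, for every segment $L\in\A$, the four-pocket gadget of \Cref{fig:guard_segments_new} into $\partial B$: each pocket is a thin, deep corridor with a small pinhole mouth on $\partial B$, and the mouths and corridor directions are chosen so that, through the pinholes, the four corridors of $L$ are visible exactly from a thin \emph{sausage} around $L$, while no corridor of $L$ is visible from any point of another segment $L'\in\A$. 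Since $B$ and $\A$ are fixed whereas I am free to choose the positions of the finitely many pinholes on $\partial B$, the size of $B$, and the thinness and depth of the corridors, this can be arranged; if the exact regular arrangement contains inconvenient coincidences (e.g.\ several inner diagonals collinear with a pinhole), I would first perturb $\A$ slightly, which leaves unchanged the incidence structure on which \Cref{lm:minnumberofguards} and the preceding lemma rely.

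The boundary produced this way is a single simple closed curve — the convex polygon $B$ with finitely many pairwise disjoint indentations — so $P$ is a simple polygon, and the instance is $I=(P,k)$. It then remains to identify $V(I)$ with the solution space of the abstract arrangement, which was already shown to be the $k$-clover. Any guard set guarding $P$ must in particular see the bottom of every pocket, hence must place a guard in every sausage; once the sausages are thin enough their incidence pattern coincides with that of $\A$, so the counting argument of \Cref{lm:minnumberofguards} applies verbatim and gives $|G|\ge k$. If $|G|=k$, there is exactly one guard per sausage, and \Cref{lem:guard-segments} — precisely cases (a)--(c), exactly as bootstrapped in \Cref{sub:circle} — forces every guard onto its guard segment. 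Conversely, a placement of $k$ guards on $\A$ that guards $\A$ also guards all of $P$: each pocket is seen by the guard on its own segment, and every point of the bulk $B\setminus(\text{pockets})$ lies in the visibility region of $\A$, using that $B$ is convex and the indentations are arbitrarily thin, so a chord from a point of $\A$ to a bulk point stays inside $P$. Hence $V(I)$ equals the set of placements of $k$ guards on $\A$ guarding $\A$, which is homeomorphic to the $k$-clover by the previous lemma; and since any guard set that guards $P$ guards in particular its vertices, the same instance also works for \textsc{Point-Vertex Art Gallery}.

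I expect the main obstacle to be the second half of the first paragraph: showing that the four-pocket gadgets of all segments can be installed simultaneously without the corridors or their visibility cones interfering, and in particular that the visibility cone of each pocket meets its own segment but no other. This is a general-position argument — finitely many algebraic constraints against a continuum of admissible pinhole placements and corridor shapes — but it requires genuine care for the dense inner $K_k$, where many distinct segments appear nearly aligned when viewed from far away, so that the four cones of one segment must be steered between the other segments; everything else in the proof is routine.
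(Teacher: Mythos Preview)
Your approach is the same as the paper's --- enclose the arrangement, install four-pocket gadgets, and invoke \Cref{lem:guard-segments} --- and you spell out considerably more than the paper, whose proof is essentially a figure for $k=4$ plus a one-line appeal to the lemma. However, two of your intermediate claims are false, and one of them breaks the argument.

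First, ``no corridor of $L$ is visible from any point of another segment $L'\in\A$'' cannot be arranged: the $k$-clover segments intersect by design, and a guard at $L\cap L'$ must see all four pockets of $L$ --- this is precisely how $k$ guards manage to cover $k+\binom{k}{2}$ segments. What you actually need is only that the sausages have the same \emph{nerve} as the segments themselves, which is the weaker statement the paper's figures are asserting.

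Second, and more seriously, ``if $|G|=k$, there is exactly one guard per sausage'' is wrong. The arrangement has $k+\binom{k}{2}$ segments but only $k$ guards; in the canonical placement every inner diagonal of the $K_k$ carries two guards (its two endpoints in $M$), and during the moves along a clover leaf even some outer-segment sausages briefly contain two guards. Hence case~(a) of \Cref{lem:guard-segments} is unavailable for most sausages, and your direct appeal to it is a genuine gap. The bootstrapping the paper has in mind is rather: in any optimal placement there is \emph{some} sausage with a single guard (otherwise a guard would be redundant), so case~(a) pins that guard; then propagate via cases~(b) and~(c). For an inner diagonal $L$ with endpoints $m_i,m_j\in M$, take $L_1,L_2$ to be the two outer $k$-gon sides through $m_i$ and $m_j$; these meet $L$ exactly at its endpoints, so case~(c) applies once the guards on $L_1,L_2$ have been pinned. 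Your last paragraph correctly flags the pocket-placement general-position argument as delicate, but it is this multi-guard case analysis, not the pocket placement, where your current write-up actually fails.
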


\begin{proof}
We again use small pockets to force guard segments. One example of a possible resulting polygon for $k=4$ can be found in \Cref{fig:4_Clover_Polygon}. 
\begin{figure}[ht]
    \centering
    \includegraphics{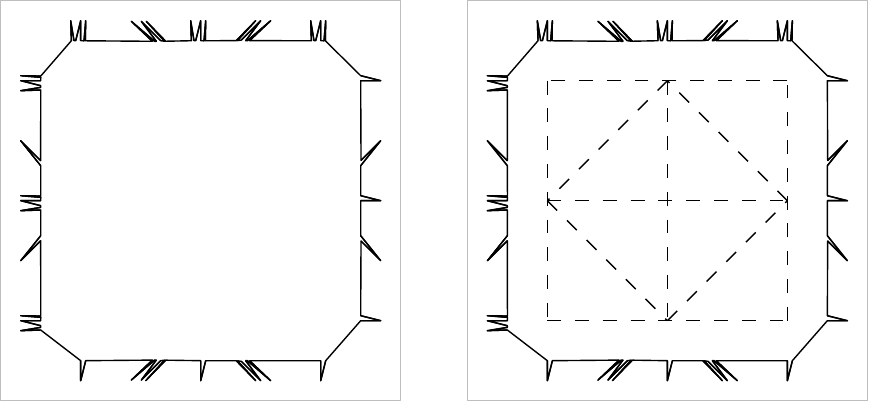}
    \caption{The polygon on the left gives the $4$-clover art gallery guard segments as possible locations for guards. Thus the solution space of this instance of the art gallery problem is homeomorphic to the $4$-clover.}
    \label{fig:4_Clover_Polygon}
\end{figure}
Similarly to the cases before, one can check that every optimal guard placement in this polygon is a solution with guards only on the guard segments, using \Cref{lem:guard-segments}. Thus the polygons describe the $k$-clover art gallery guard segments. 
\end{proof}

\subsection{Chains}
\label{sub:chains}
We continue proving \Cref{thm:tangible}, namely in this section we show that there are instances of the art gallery problem with solution spaces homeomorphic to $k$-chains. The main idea used in this section is to have different sets of guards for the different things we want to chain up. We need to restrict the valid placement of guards with guard segments such that at each moment in time only one set of guards can move freely. 

For simplicity we will restrict ourselves to chaining up circles; however, we believe that the idea can be used to chain up different, more complex spaces as well. Let $k \geq 1$ and consider a path of $k$ vertices. Instead of using vertices we use circles. We denote the resulting set as a \emph{$k$-chain}, some examples for small $k$ can be found in \Cref{fig:k_chain}.

\begin{figure}[tbph]
    \centering
    \includegraphics{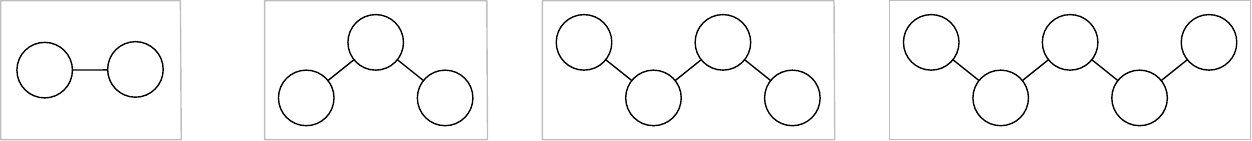}
    \caption{$k$-chains for small values of $k$.}
    \label{fig:k_chain}
\end{figure}

Consider the arrangement of guard segments in \Cref{fig:link_1}. Note that any guard can only cover two guard segments simultaneously and since there are $9$ different segments we need at least $5$ guards. We have already seen in \Cref{sub:circle} that if guard segments form a triangle, we always need two guards on the triangle. Therefore in any optimal guard placement there is a guard on the horizontal guard segment in the middle, we call it the \emph{link guard} on the \emph{link segment}.

\begin{figure}[tbph]
    \centering
    \includegraphics{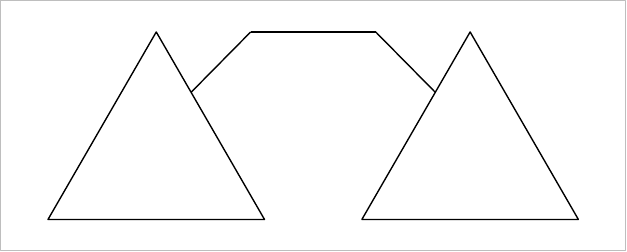}
    \caption{Guard segments forming two triangles that are connected by a link consisting of three individual guard segments.}
    \label{fig:link_1}
\end{figure}

The left and right guard segments of the link ensure that whenever they are not guarded by the link guard, the guards on the corresponding triangle need to maintain one particular position. This means that the link guard can move from one end of its segment to the other in order to lock one circle and unlock the other. In the solution space, this results in a segment linking the two circles, see \Cref{fig:link_2}.

\begin{figure}[ht]
    \centering
    \includegraphics{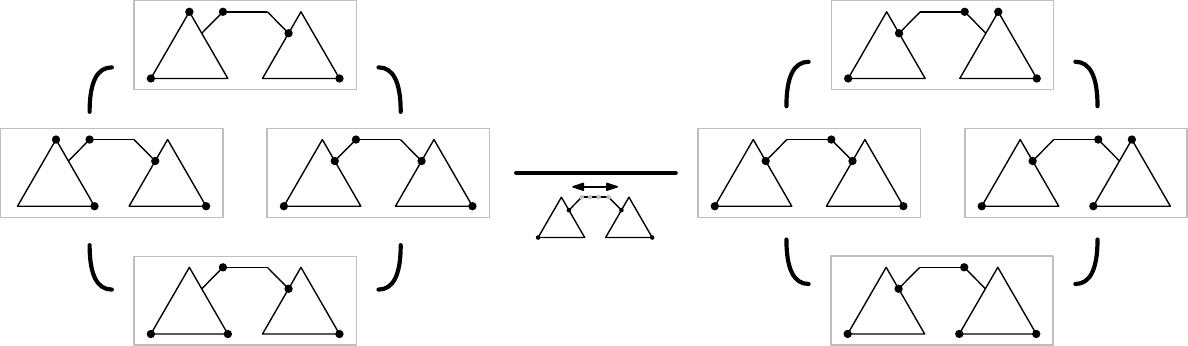}
    \caption{The solution space of the guard segments given in \Cref{fig:link_1} is a $2$-chain.}
    \label{fig:link_2}
\end{figure}

\begin{figure}[ht]
    \centering
    \includegraphics{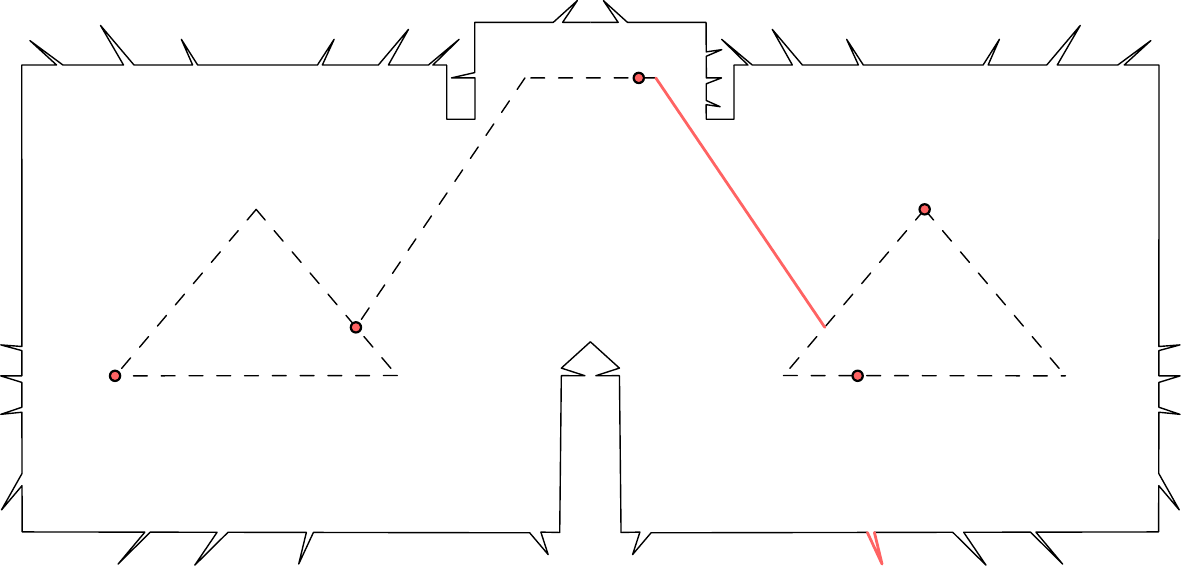}
    \caption{An example of the full polygon forcing guard segments; with solution space homeomorphic to a $2$-chain. Marked is the possible spurious solution that arises from analyzing sausages of guard segments. Note that this placement does not guard the left bottom pocket of the unoccupied guard segment, emphasized in red. This is therefore not actually a spurious solution.}
    \label{fig:linkpoly}
\end{figure}

\Cref{fig:linkpoly} shows one possible instance of the art gallery problem that aims to achieve these desired guard segments. Note that no matter how thin the sausages are made, the sausages of the left and right guard segments of the link actually intersect with the sausages of the bottom triangle segments. We have to argue that this additional intersection does not create spurious solutions that would not assign all guard segments at least one guard. 

Recall that on each sausage there must be one guard. We first observe that the link segment, and each of the triangle segments must contain a guard (by the same arguments as in \Cref{sub:circle}). There is a single type of possible spurious pseudo-solution under these conditions, where all sausages contain a guard, but one of the guard segments of the link does not. This solution is noted in \Cref{fig:linkpoly}, and one can verify that this pseudo-solution can not satisfy the left bottom pocket of the right segment of the link (emphasized in red). We can therefore conclude that in each optimal guard placement, all guard segments must contain a guard.

Note that by repeatedly linking triangles together, we can realize a chain of any length, and the argument above generalizes.

\subsection{Necklaces}
\label{sub:necklace}
We have just described how one can chain up different spaces. In particular we have given art gallery problem instances with solution spaces homeomorphic to chains of circles. Here we want to investigate whether it is possible to close the chain (of circles), that is, we want to construct a necklace of topological spaces, in particular circles. Let us denote a closed $k$-chain as a \emph{$k$-necklace}, see \Cref{fig:k_necklace} for some small examples.
\begin{figure}[h]
    \centering
    \includegraphics{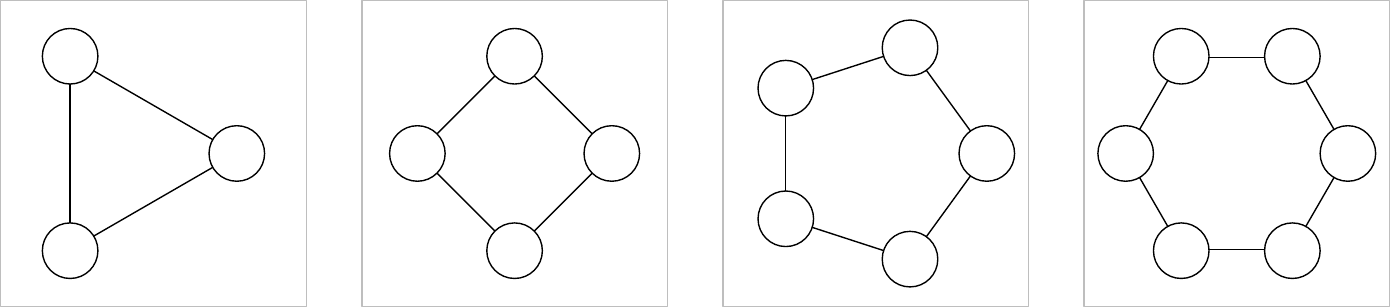}
    \caption{$k$-necklaces for small values of $k$.}
    \label{fig:k_necklace}
\end{figure}

The two topological spaces, chains and necklaces are very similar, the only difference being at the ends of the chain. When we move in the chain, say from the left-most circle to the right-most circle, we want to have some way of going back to the first circle without going through all other circles in-between. This would then close the chain and we would end up with some necklace. 

A first attempt is to link the first and last circle of the chain in a similar way as we linked circles in-between. Then there are as many links as circles. Recall that the link guards cannot guard the whole link, therefore the guards on the circles are forced to guard one of the link segments. Thus, we can never get a solution space homeomorphic to a necklace. So, the construction we present here uses another idea: we duplicate the chain using small guard segment squares and link up the ends of the different copies.
\begin{figure}[ht]
\centering
	\begin{subfigure}{.40\linewidth}
		\centering
		\includegraphics[page = 1]{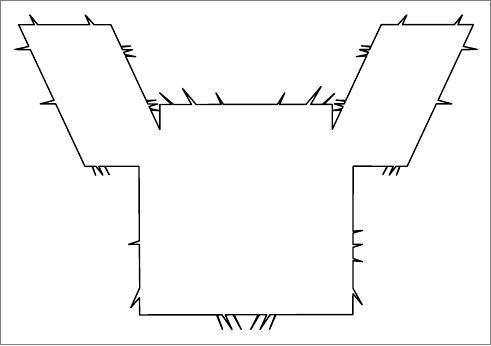}
		\caption{}
		\label{subfiga:4_necklace_polygon}
	\begin{minipage}{.1cm}
	\vfill
	\end{minipage}
    \end{subfigure}
    \hfill
	\begin{subfigure}{.59\linewidth}
		\centering
		\includegraphics[page = 2]{figures/Necklaces/4_necklace_polygon.pdf}
		\caption{}
		\label{subfigb:4_necklace_polygon}
	\begin{minipage}{.1cm}
	\vfill
	\end{minipage}
	\end{subfigure}
	\caption{(a) A polygon and (b) its induced guard segments. They form a simple triangle with two attached flags. The solution space is homeomorphic to the $4$-necklace.}
	\label{fig:4_necklace_polygon}
\end{figure}

Consider the polygon given in \Cref{subfiga:4_necklace_polygon}. The induced guard segments are drawn in \Cref{subfigb:4_necklace_polygon}, where we omitted the polygon for clarity. The guard segments form a $1$-chain (the triangle in the middle) with two small squares attached. Let us denote the small squares, as the \emph{flags}. They are attached at different points of the triangle. Note that if the guards are forced to live on the indicated guard segments, then we need at least $6$ guards in any optimal solution. This is because there are $11$ segments and each guard can occupy at most two simultaneously. Furthermore, two guards need to be on the triangle as well as on each of the flags.

Note that the squares have exactly two discrete solutions and thus by adding the flags to both ends of the triangle will not only duplicate the chain but multiply it by four. In other words, any solution on the triangle can be completed to four different solutions on all guard segments. Note that the flags can only change their state if the guards on the adjacent triangle are already guarding the guard segment that connects the flags to the triangle. This is only possible in exactly one placement of guards. Further the guards in the flags can only move horizontally. Analyzing the possible movements of guards allows us to discover the solution space, \Cref{fig:necklace4} shows that we arrive at a $4$-necklace. 

\begin{figure}[h]
    \centering
    \includegraphics{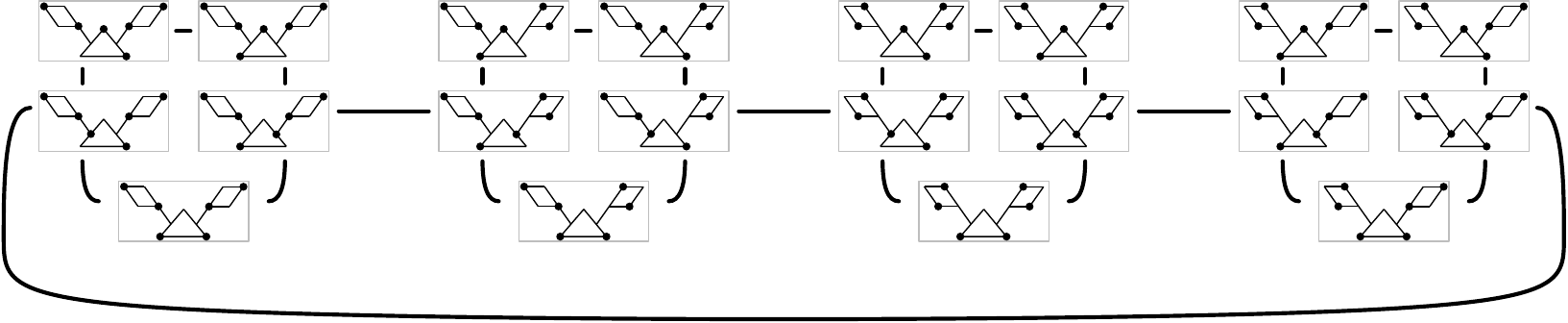}
    \caption{The different guard placements and their connections for guards standing on guard segments as given in \Cref{subfigb:4_necklace_polygon}. The solution space is homeomorphic to a $4$-necklace.}
    \label{fig:necklace4}
\end{figure}

It remains to show that we can indeed force guards onto the guard segments given in \Cref{subfigb:4_necklace_polygon}. By the exact same reasons as presented in \Cref{sub:chains}, it follows that the polygon given in \Cref{subfiga:4_necklace_polygon} does the job.

All the ideas and arguments presented here generalize to any chain. By attaching flags at opposite ends of the chain, we multiply the solution space. Since flags only change their state in one guard placement on the chain, the copies of the chain are linked together nicely. We can therefore get an instance of the art gallery problem with solution space homeomorphic to any $4k$-necklace. We believe that by a slight modification we could also get $2k$-necklaces; however, it remains an open question whether there are simple instances of the art gallery problem with solution spaces homeomorphic to necklaces of odd length.

\subsection{Balls and Spheres}
\label{sub:Spheres}

For the remaining parts of the proof of \Cref{thm:tangible}, we consider multidimensional topological spaces. For simplicity, let us once more study how to get a simple polygon with a solution space homeomorphic to the circle, or sphere $S^1$. This time, the construction will generalize to higher dimensions and allow us to show that there are instances of the art gallery that have solution spaces homeomorphic to balls and spheres. Let us define an \emph{$(h,v)$-art gallery grid}, that is a (graph-theoretical) grid, 
with $h$ horizontal guard segments and $v$ vertical guard segments. Again, let us denote the intersection points of guard segments as \emph{vertices}. Since the vertical and horizontal segments are disjoint, we exactly know how many guards are needed in any optimal solution of an art gallery grid.

\begin{observation}\label{obs:numbgrid}
    On any $(h,v)$-art gallery grid, the minimum number of guards required is $\max(h,v)$. If $h\geq v$, then in any optimal solution there is no guard solely on a vertical segment (and vice versa). 
\end{observation}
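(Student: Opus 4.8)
The plan is to establish \Cref{obs:numbgrid} by a standard counting argument, exploiting the fact that horizontal and vertical guard segments are pairwise disjoint (as graph-theoretical segments, they only meet at vertices, which are single points shared between one horizontal and one vertical segment, but the segments themselves are distinct). First I would observe that any guard placed at a vertex covers exactly one horizontal and one vertical segment, and that a guard placed anywhere in the interior of a segment covers only that one segment. Moreover, as in the proof of \Cref{lm:minnumberofguards}, moving a guard from the interior of a segment to one of its endpoints (a vertex) only increases the set of segments it covers, so without loss of generality every optimal solution places all guards on vertices. Hence each guard covers at most one horizontal and at most one vertical segment.

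From this, the lower bound is immediate: to cover all $h$ horizontal segments we need at least $h$ guards, and to cover all $v$ vertical segments we need at least $v$ guards, so at least $\max(h,v)$ guards are required. For the matching upper bound, I would exhibit an explicit placement: assuming without loss of generality $h \geq v$, pick any system of $h$ vertices, one on each horizontal segment, such that the first $v$ of them lie on the $v$ distinct vertical segments (this is possible since the grid is the full $h \times v$ grid of intersection points), and place the remaining $h-v$ guards at arbitrary vertices on the remaining horizontal segments. This covers every horizontal segment and every vertical segment, so $\max(h,v)=h$ guards suffice.

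For the second claim, suppose $h \geq v$ and consider an optimal solution with exactly $h$ guards. By the above we may assume all guards sit on vertices, so each guard covers exactly one horizontal segment. Since there are $h$ horizontal segments and $h$ guards, and every horizontal segment must be covered, the map sending each guard to the horizontal segment it lies on must be a bijection — in particular, every guard lies on some horizontal segment, i.e.\ no guard is solely on a vertical segment. The case $v \geq h$ is symmetric.

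I do not expect a genuine obstacle here; the only point requiring mild care is the phrase ``solely on a vertical segment,'' which I read as ``at a point belonging to a vertical segment but to no horizontal segment'' — the argument above shows every one of the $h$ guards in fact lies on a horizontal segment, which is the intended conclusion. The reduction to guards-at-vertices is the one step worth stating carefully, but it is exactly the monotonicity observation already used in \Cref{lm:minnumberofguards}, so it can be invoked by analogy rather than reproved.
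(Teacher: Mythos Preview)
The paper does not prove this observation at all; it is stated as self-evident, preceded only by the sentence ``Since the vertical and horizontal segments are disjoint, we exactly know how many guards are needed.'' Your argument is more detailed than anything the paper offers, and your lower and upper bound arguments are correct. The vertex reduction is in fact unnecessary even there: the fact that each guard lies on at most one horizontal and at most one vertical segment holds for any point whatsoever, simply because the horizontal segments are pairwise disjoint (and likewise the vertical ones).

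There is, however, a genuine slip in your proof of the second claim. You write ``by the above we may assume all guards sit on vertices.'' That reduction is legitimate for establishing the \emph{minimum number} of guards (if $k$ guards suffice, then $k$ guards on vertices suffice), but it modifies the solution and therefore cannot be used to prove a statement about \emph{every} optimal solution. Indeed, there are optimal solutions with guards off vertices---the entire point of the subsequent analysis is that the solution space is continuous---and the claim must hold for those as well. Once you have moved all guards to vertices, the conclusion ``every guard lies on some horizontal segment'' is a tautology about the modified solution, not a fact about the original one.

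The fix is to drop the vertex reduction and run your pigeonhole argument directly on an arbitrary optimal solution: the $h$ horizontal segments are pairwise disjoint, so each of the $h$ guards lies on at most one of them; since all $h$ horizontal segments must be covered, each guard lies on exactly one. In particular no guard is solely on a vertical segment. This is the argument you essentially already have, just without the unnecessary detour through vertices.
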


\begin{claim}
\label{claim:S1}
The $(3,2)$-art gallery grid has a solution space homeomorphic to $S^1$.
\end{claim}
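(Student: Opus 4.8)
The plan is to analyze the solution space of the $(3,2)$-art gallery grid directly by enumerating the ``extremal'' guard configurations and the one-parameter families of moves connecting them, exactly in the spirit of \Cref{sub:circle}. By \Cref{obs:numbgrid}, since $h=3\geq v=2$, every optimal solution uses exactly $\max(h,v)=3$ guards, and no guard sits solely on a vertical segment; hence in any optimal placement each of the three horizontal segments carries exactly one guard, and these three guards must between them also cover the two vertical segments. So a solution is determined by the triple of positions $(p_1,p_2,p_3)$ of the guards on the three horizontal segments $h_1,h_2,h_3$ (say top to bottom), subject to: for each of the two vertical segments $v_1,v_2$, at least one $p_j$ lies at the intersection $h_j\cap v_\ell$.

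\textbf{Key steps.} First I would set up coordinates: parametrize each horizontal segment by $[0,1]$ so that $0$ and $1$ are (say) the intersections with $v_1$ and $v_2$ respectively. The constraint is then: $\{j : p_j=0\}\neq\emptyset$ and $\{j : p_j=1\}\neq\emptyset$. Second, I would observe that in any solution at most one guard is strictly in the interior $(0,1)$ of its segment: if two guards $p_i,p_j$ were both interior, then neither is at $0$ nor at $1$, so covering $v_1$ and $v_2$ falls to the single remaining guard $p_k$, which cannot be at both $0$ and $1$ — contradiction. So the solution space decomposes according to which (if any) single guard is free to roam: either all three guards are at endpoints (with both endpoint-values occurring), giving finitely many ``hub'' configurations, or exactly one guard $p_j$ is in $[0,1]$ while the other two occupy $\{0\}$ and $\{1\}$ in some assignment. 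Third, I would assemble these pieces: each ``one free guard'' family is an arc homeomorphic to $[0,1]$ whose two endpoints are hub configurations; I would check which hubs are shared between which arcs and verify that gluing the arcs along shared hubs yields a space homeomorphic to $S^1$ (a single cycle). Concretely, with unlabelled guards the hub configurations are the placements where the multiset $\{p_1,p_2,p_3\}$ of positions, recorded together with which segment each sits on, has at least one $0$ and at least one $1$; tracing through the moves (fix a guard at $0$ on one vertical, fix another at $1$ on the other vertical, slide the third; then when the third reaches an endpoint a different guard becomes free; repeat) returns to the start after going around once, giving the circle. Finally, I would invoke \Cref{lem:guard-segments} to pass from the idealized ``guards on guard segments'' analysis to an actual simple polygon: here the two vertical and three horizontal segments meet only in the pattern where two segments cross, so cases (a) and (b) of \Cref{lem:guard-segments} suffice (and, as noted after \Cref{lem:guard-segments}, one can always bootstrap via a sausage containing a single guard since an optimal placement has no redundant guard).

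\textbf{Main obstacle.} The genuinely delicate point is the global gluing: verifying that the collection of arcs (one-free-guard families) together with the hub configurations, with the identifications forced by unlabelled guards, assembles into exactly one cycle rather than, say, a more complicated graph (which would be a wedge of circles or a theta-graph). I expect this to require carefully listing the hubs and the arcs incident to each and checking that every hub has degree exactly $2$ in the resulting incidence structure and that the structure is connected; the unlabelledness of guards is what collapses what might naively look like several parallel arcs into a single circle, and getting that bookkeeping exactly right — rather than off by a factor — is where care is needed. A secondary, more routine obstacle is confirming the polygon realization: that the pocket construction of \Cref{fig:guard_segments_new} can be carried out for this particular arrangement without the sausages interacting in an uncontrolled way, but as in \Cref{sub:chains} this amounts to checking that any putative spurious solution fails to guard some specific pocket, which I would handle by the same sausage-bootstrapping argument and omit the tedious details.
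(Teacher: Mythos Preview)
Your approach is essentially the same as the paper's: use \Cref{obs:numbgrid} to put one guard on each horizontal segment, observe that at most one guard can be strictly interior, enumerate the six hub configurations and the six arcs, and check they form a single cycle. Two small corrections, though. First, your ``main obstacle'' is misdiagnosed: in this grid the three guards sit on three \emph{disjoint} horizontal segments, so they are naturally labelled by their segment and unlabelledness plays no role whatsoever --- the six hubs $(0,0,1),(1,0,1),(1,0,0),(1,1,0),(0,1,0),(0,1,1)$ already form a $6$-cycle in the labelled picture, with each hub having degree exactly $2$ because precisely one coordinate is ``pinned'' (the unique $0$ or unique $1$). There is no collapsing of parallel arcs; contrast this with the triangle in \Cref{sub:circle}, where unlabelledness genuinely matters because guards exchange roles. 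Second, the claim is stated purely for the abstract guard-segment arrangement, so the final paragraph about invoking \Cref{lem:guard-segments} and realizing the grid as a polygon is outside its scope (the paper handles realization separately, after \Cref{lem:Sk}).
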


\begin{proof}
By \Cref{obs:numbgrid}, the $3$ guards are always on horizontal guard segments. Note that whenever one of the guards is not on a vertex, then the other two guards are fixed in one of two (antipodal) states. Similarly, when all three guards are on vertices, we can move two of them, but only one simultaneously. There is a way to go through all the states with guards only on vertices using all the different states and taking no detours, i.e.,~a Hamiltonian cycle. Therefore the solution space of the $(3,2)$-art gallery grid is homeomorphic to $S^1$. For some intuition, consider \Cref{fig:1sphere} below. 
\begin{figure}[ht]
\centering
	\begin{subfigure}{.60\linewidth}
		\centering
		\includegraphics[page = 2]{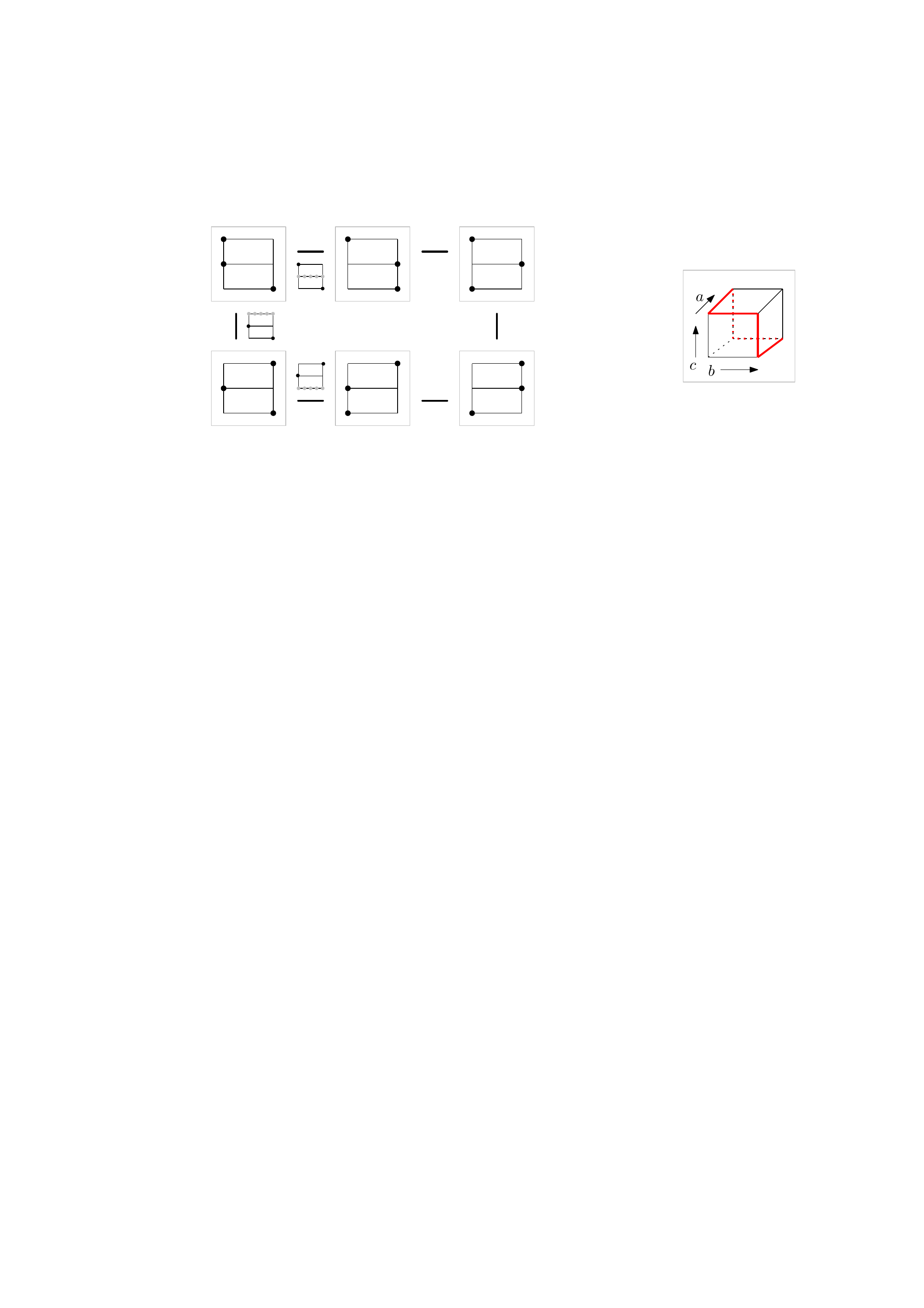}
		\caption*{(a)}
	\begin{minipage}{.1cm}
	\vfill
	\end{minipage}
    \end{subfigure}
    \hfill
	\begin{subfigure}{.39\linewidth}
		\centering
		\includegraphics[page = 3]{figures/Spheres/1_Sphere.pdf}
		\caption*{(b)}
	\begin{minipage}{.1cm}
	\vfill
	\end{minipage}
	\end{subfigure}
	\caption{The solution space of the $(3,2)$-art gallery grid drawn in two different ways. On the left one can see what happens in the grid. 
    On the right we embedded the grid into the cube, one dimension for each guard. The solution space is homeomorphic to the circle $S^1$.}
	\label{fig:1sphere}
\end{figure}
\end{proof}

More generally, we can state the following result, namely the part of \Cref{thm:tangible} about spheres. 
\begin{lemma}
\label{lem:Sk}
For any $k \geq 1$, the solution space of the $(k+2,2)$-art gallery grid is homeomorphic to $S^k$.
\end{lemma}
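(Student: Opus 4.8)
The plan is to first strip away the geometry and identify the solution space with an explicit subset of a cube, and then recognize that subset as a sphere by an induction whose inductive step is essentially a suspension.

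\medskip

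\noindent\textbf{Step 1: a combinatorial model for the solution space.} Set $n:=k+2$. By \Cref{obs:numbgrid}, every optimal placement on the $(n,2)$-art gallery grid uses exactly $n$ guards, none of them solely on a vertical segment; since the $n$ horizontal segments are pairwise disjoint, each horizontal segment carries exactly one guard. Parametrize the $i$-th horizontal segment as $[0,1]$ so that its crossing with the first vertical segment $V_1$ sits at $0$ and its crossing with $V_2$ sits at $1$ (any homeomorphism of the segment onto $[0,1]$ carrying these two grid-vertices to $0$ and $1$ works, whether or not the segment extends past the verticals). Writing $t_i\in[0,1]$ for the position of the guard on the $i$-th horizontal segment, the placement additionally guards $V_1$ iff some $t_i=0$ and guards $V_2$ iff some $t_j=1$. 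Hence the solution space is homeomorphic to
\[
\Sigma_n := \{\, t\in[0,1]^n : \min_i t_i = 0 \ \text{ and }\ \max_i t_i = 1 \,\},
\]
and it suffices to prove $\Sigma_n\cong S^{n-2}$. Note $\Sigma_n$ is exactly the intersection of the ``lower corner'' $A_n:=\{t\in[0,1]^n:\min_i t_i=0\}$ (the union of the $n$ facets of the cube through $\mathbf 0$) and the ``upper corner'' $B_n:=\{t:\max_i t_i=1\}$, and $A_n\cup B_n=\partial[0,1]^n$, with the reflection $t\mapsto\mathbf 1-t$ swapping $A_n\leftrightarrow B_n$ and fixing $\Sigma_n$.

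\medskip

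\noindent\textbf{Step 2: $A_m$ is a ball with boundary $\Sigma_m$.} I would prove, by induction on $m\geq1$, the combined statement: \emph{$A_m$ is homeomorphic to the closed $(m-1)$-ball $D^{m-1}$, and its manifold boundary is $\partial A_m=\Sigma_m$.} Given this, $\Sigma_n=\partial A_n\cong\partial D^{n-1}=S^{n-2}$, which is the lemma. The base cases are immediate: $A_1=\{0\}$ is a point and $\Sigma_1=\varnothing$; $A_2=(\{0\}\times[0,1])\cup([0,1]\times\{0\})$ is an ``L''-shaped arc with endpoints $(0,1),(1,0)=\Sigma_2$; and $m=3$ is precisely \Cref{claim:S1}, a useful consistency check. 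For the step $m\to m+1$, slice $A_{m+1}$ by the last coordinate: the slab $\{t_{m+1}=0\}$ contributes the whole facet $[0,1]^m\times\{0\}$, an $m$-ball, while $\{t_{m+1}>0\}$ forces $\min_{i\leq m}t_i=0$ and contributes $A_m\times(0,1]$, which by the inductive hypothesis is $D^{m-1}\times[0,1]\cong D^m$. These two $m$-balls meet exactly along $A_m\times\{0\}$, an $(m-1)$-ball lying in the boundary of each; gluing two $m$-balls along a common $(m-1)$-subball of their boundaries yields an $m$-ball, so $A_{m+1}\cong D^m$. To see $\partial A_{m+1}=\Sigma_{m+1}$, one either tracks which boundary faces of the two pieces survive the gluing, or, equivalently, re-runs the same slicing on $\Sigma_{m+1}$ directly: the slab $t_{m+1}=0$ now also needs $\max_{i\leq m}t_i=1$ and gives $B_m\times\{0\}$ (an $(m-1)$-ball by the reflection symmetry), the slab $t_{m+1}=1$ gives $A_m\times\{1\}$, and $t_{m+1}\in(0,1)$ gives $\Sigma_m\times(0,1)$; matching this against the surviving boundary of the $A_{m+1}$ decomposition is mechanical and closes the induction.

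\medskip

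\noindent\textbf{Main obstacle and an alternative.} Everything above is routine except the two standard pieces of ball surgery invoked in the step: that $D^{m-1}\times[0,1]$ is an $m$-ball, and that gluing two $m$-balls along a common $(m-1)$-subball of their boundaries gives an $m$-ball. Both hold in the PL (hence topological) category, and apply here because all the pieces are polyhedral and the relevant boundary subballs are bicollared; I would simply cite these as black boxes. If one prefers to avoid even this, there is a shortcut: $A_n$ is a tamely embedded $(n-1)$-ball inside the $(n-1)$-sphere $\partial[0,1]^n$, so by the (PL) Schoenflies/Newman theorem the closure of its complement — which is exactly $B_n$ — is also an $(n-1)$-ball with the same boundary, whence $\Sigma_n=\partial A_n=\partial B_n$ is an $(n-2)$-sphere. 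In either version the only thing that needs care is the bookkeeping of which parts of $\partial[0,1]^n$ lie in $A_n$, in $B_n$, or in $\Sigma_n$, and the last-coordinate recursion in Step 2 is designed precisely to make that bookkeeping trivial.
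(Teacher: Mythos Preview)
Your proof is correct and shares the paper's core idea---induction via slicing by one guard's coordinate---but organizes it differently. Both arguments ultimately exhibit the solution space as two balls capping a cylinder. The paper does this directly on the $(k+2,2)$-grid: fixing one guard $a$ in the open interior of its segment forces the remaining $k+1$ guards to solve the $(k+1,2)$-grid (an $S^{k-1}$ by induction), giving the cylinder; fixing $a$ on a vertical reduces the rest to a $(k+1,1)$-grid, which the companion \Cref{lem:balls} shows is a ball by a star-shapedness argument (every solution lies on the straight segment from the all-on-one-vertical placement to a point of $\Sigma_{k+1}\cong S^{k-1}$, so the space is the filled sphere). You instead run your primary induction on $A_m$, proving it is a ball by gluing $[0,1]^m\times\{0\}$ to $A_m\times[0,1]$ along $A_m\times\{0\}$ and reading off $\Sigma_m=\partial A_m$; your option (b) at the end of Step~2 is then literally the paper's decomposition. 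Your packaging has the advantage of making the cube model $\Sigma_n$ explicit and of disentangling the two interleaved inductions in the paper; the paper's star-shapedness argument for the ball, on the other hand, is more elementary than the PL ball-gluing lemmas or Schoenflies you invoke. One minor quibble: the parenthetical ``whether or not the segment extends past the verticals'' in Step~1 is not right---if the horizontal segments overhung the verticals, guards could sit beyond the crossings and the model would no longer be $\Sigma_n$---but in the paper's graph-theoretical grid the segment endpoints \emph{are} the vertical crossings, so your identification holds.
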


Before we are able to prove this statement we take a closer look at the $(k+1,1)$-art gallery grids that arise if we ignore one of the vertical lines in the $(k+1,2)$-art gallery grids, see \Cref{fig:balls} for an example. On the one hand they give interesting solution spaces and on the other hand these grids are an essential part in proving  \Cref{lem:Sk}.

\begin{figure}[ht]
    \centering
    \includegraphics{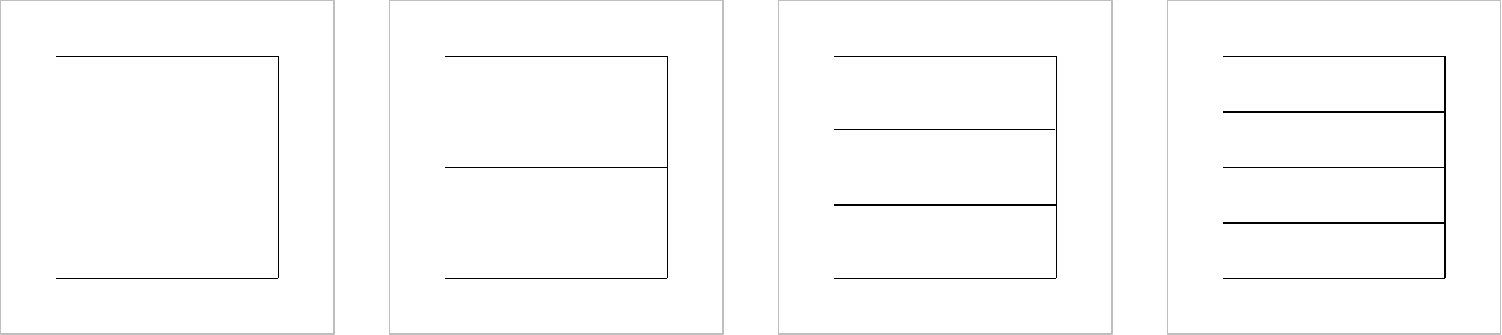}
    \caption{Some $(k+1,1)$-art gallery grids for small values of $k$.}
    \label{fig:balls}
\end{figure}

\begin{lemma}
\label{lem:balls}
For any $k \geq 2$, if the $(k+1,2)$-art gallery grid has a solution space homeomorphic to~$S^{k-1}$, 
then the $(k+1,1)$-art gallery grid has a solution space homeomorphic to the ball $B^k$.
\end{lemma}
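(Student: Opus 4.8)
The plan is to relate the $(k+1,1)$-art gallery grid to the $(k+1,2)$-art gallery grid by viewing the former as a ``half'' of the latter. In the $(k+1,2)$-grid there are two vertical guard segments; by \Cref{obs:numbgrid} all $k+1$ guards live on horizontal segments, and each horizontal guard can sit anywhere along its segment or at one of the two vertices where it meets the two vertical segments. Deleting one vertical segment (say the right one) removes the requirement that the right vertices be covered; combinatorially, this frees each horizontal guard to move all the way to the right endpoint of its segment without any constraint coming from the deleted vertical line. So the $(k+1,1)$-grid should be thought of as the $(k+1,2)$-grid together with an extra ``collar'' of free movement to the right.

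Concretely, I would set up coordinates: a configuration of the $(k+1,1)$-grid is a point $(t_1,\dots,t_{k+1})\in[0,1]^{k+1}$ where $t_i$ is the position of the $i$-th horizontal guard, $t_i=0$ meaning it sits on the single vertical segment (the left vertex) and $t_i\in(0,1]$ meaning it has slid off to the right. The only covering constraint is that the single vertical segment be guarded, i.e.\ at least one $t_i=0$. Hence the solution space $V$ of the $(k+1,1)$-grid is $\{t\in[0,1]^{k+1} : \min_i t_i = 0\}$, which is exactly the union of the $k+1$ faces $\{t_i=0\}$ of the cube $[0,1]^{k+1}$ — a \satobject. The first key step is to justify this description rigorously (guards cannot help by leaving guard segments, optimality forces exactly one guard per segment, etc.), invoking \Cref{lem:guard-segments} and \Cref{obs:numbgrid} as in the earlier sections.

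The second key step is the topological identification $\{t\in[0,1]^{k+1}:\min_i t_i=0\}\cong B^k$. I would exhibit an explicit homeomorphism: this set is the boundary of the cube minus the interior of the single face $\{t_i=1 \text{ for some }i\}$ — more precisely it is the ``lower'' half of $\partial([0,1]^{k+1})$, namely all boundary faces except the open star of the vertex $(1,\dots,1)$. The union of those faces is a disk (it deformation retracts onto, and is homeomorphic to, a $k$-ball), and one can write down the homeomorphism by radial projection from the vertex $(1,\dots,1)$ onto the closed complementary region of $\partial([0,1]^{k+1})$. Alternatively, and more in the spirit of the hypothesis, note that $\partial([0,1]^{k+1})\cong S^k$ and the $(k+1,2)$-grid solution space is the subset with $\min_i t_i=0$ \emph{and} $\max_i t_i=1$, which is $S^{k-1}$ by assumption; the $(k+1,1)$-grid solution space drops the condition $\max_i t_i=1$, i.e.\ it is the cone-like filling on one side, so it is $B^k$ with boundary the $(k+1,2)$-grid's $S^{k-1}$. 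I would use the assumed homeomorphism of the $(k+1,2)$-grid with $S^{k-1}$ as the boundary sphere and glue on the filling.

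I expect the main obstacle to be making the ``radial projection from $(1,\dots,1)$'' argument fully honest: one has to check that the relevant union of cube faces is genuinely a manifold with boundary homeomorphic to $B^k$, that the projection map is a bijection, and that the boundary of this ball matches up with the $S^{k-1}$ coming from the hypothesis (so that the inductive step in the proof of \Cref{lem:Sk} can glue two such balls along their common boundary). This is essentially the standard fact that $S^k$ minus an open $k$-cell is a closed $k$-ball, adapted to the cubical setting; the bookkeeping of which faces are present and how they fit together is where care is needed, but no deep idea is required. The rest — that every optimal guard placement corresponds to a point of this \satobject\ and vice versa — is a routine repetition of the arguments already used for the circle and the grids.
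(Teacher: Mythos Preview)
Your proposal is correct, and your alternative route (b) --- using the hypothesis to identify $\{\min_i t_i=0,\ \max_i t_i=1\}$ with $S^{k-1}$ and then viewing the $(k+1,1)$-solution space as its ``filling'' --- is exactly what the paper does. The paper makes this concrete by taking the \emph{$1$-placement} (all guards on the single vertical line, i.e.\ the point $(0,\dots,0)$ in your coordinates) as the centre of the ball and observing that every solution $t\neq 0$ lies on the straight segment from $(0,\dots,0)$ to the boundary solution $t/\max_i t_i\in\{\min=0,\max=1\}$; conversely every point on such a segment is a solution. Thus the solution space is the linear cone over the $(k+1,2)$-solution space $\cong S^{k-1}$, hence $B^k$.

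Your primary route (a) --- radial projection from the opposite vertex $(1,\dots,1)$, equivalently identifying $V$ as the complement of the open star of $(1,\dots,1)$ in $\partial[0,1]^{k+1}\cong S^k$ --- is a genuinely different argument: it does not use the hypothesis at all and in fact proves the unconditional statement that the $(k+1,1)$-grid has solution space $B^k$. The trade-off is that it appeals to a general PL fact (the closed complement of an open vertex-star in a PL sphere is a ball), whereas the paper's cone argument from the $1$-placement is entirely elementary and self-contained, and dovetails with the induction in \Cref{lem:Sk} by explicitly exhibiting the $(k+1,2)$-solution space as the boundary sphere of the ball. Two minor remarks: \Cref{lem:guard-segments} is not needed here, since the lemma is about the abstract grid rather than a realising polygon (\Cref{obs:numbgrid} already forces one guard per horizontal segment); and your phrase ``minus the interior of the single face $\{t_i=1\text{ for some }i\}$'' is imprecise, though you immediately correct it to the open star of $(1,\dots,1)$.
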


\begin{proof}
Let us consider the solution space of the $(k+1,1)$-art gallery grid.
Among all solutions, there is one special one, that is the guard placement with all $k+1$ guards on the vertical line. Let us denote this placement as the \emph{$1$-placement}. This solution will be the center of our ball. We can now embed the solutions of the $(k+1,2)$- and $(k+1,1)$-art gallery grids into the $k+1$-dimensional hypercube (one dimension for each guard along its horizontal guard segment). 

Look at all the solutions of the $(k+1,1)$-art gallery grid that have at least one guard on the missing left vertical guard segment, and at least one guard on the remaining right vertical segment. 
These \emph{boundary solutions} are also solutions of the $(k+1,2)$-art gallery grid, and form an $S^{k-1}$ by assumption. 
Every solution $s$ of the $(k+1,1)$-art gallery grid lies on a straight line between such a boundary solution and the $1$-placement~---~the boundary solution can be found by continuing the ray from the $1$-placement through $s$ until the first guard reaches the left endpoint of its segment. Furthermore, every placement $p$ on such a line between a boundary solution and the $1$-placement must be a valid solution of the $(k+1,1)$-art gallery grid, as some guard is always on the vertical segment for all placements on this line. We conclude that the solution space of the $(k+1,1)$-art gallery grid is homeomorphic to the filled $S^{k-1}$, i.e., the ball $B^k$.
\end{proof}

Now we are able to prove \Cref{lem:Sk}. 

\begin{proof}[Proof of \Cref{lem:Sk}]
In any optimal solution there are exactly $k+2$ guards, one on each horizontal guard segment (cf.~\Cref{obs:numbgrid}). We give a proof by induction over $k$, where the base case $k=1$ is proven already in \Cref{claim:S1}. Therefore let us assume that $k \geq 2$. 

Fix any of the guards, and let us denote it as guard $a$. If $a$ has to be free to move around, then the remaining $k+1$ guards need to cover both vertical guard segments. They are arranged in a grid-structure; thus, we are in the situation of the $(k+1,2)$-art gallery grid. By the induction hypothesis we already know that the corresponding solution space is homeomorphic to $S^{k-1}$. Note that moving $a$ along its guard segment multiplies the $S^{k-1}$ by an interval.
 
On the other hand, if $a$ lies on one of the vertical lines, then the remaining guards only need to cover the other vertical guard segment. Hence, we are in the situation of the $(k+1,1)$-art gallery grid. By the induction hypothesis, and by \Cref{lem:balls}, we know that its solution space is homeomorphic to $B^k$.

For both the middle hypercylinder (for $a$ free), and the two balls (for $a$ on a vertical guard segment), the boundary is exactly the set of those solutions in which $a$ is free to move, but on one of the vertical guard segments. These are also exactly the solutions through which one can move from a ball to the hypercylinder and vice-versa, and we can conclude that the parts therefore fit together. We see that the solution space is homeomorphic to two balls $B^k$ joined by a hypercylinder $S^{k-1}\times [0,1]$, which is homeomorphic to $S^k$ as claimed.
\end{proof}

\begin{figure}[ht]
    \centering
    \includegraphics{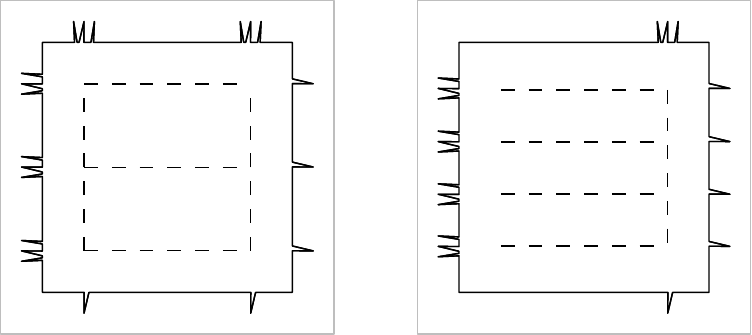}
    \caption{Polygons realizing the $(3,2)$-art gallery grid and the $(4,1)$-art gallery grid, respectively.}
    \label{fig:grids_polygon}
\end{figure}

Showing the realizability of any grid is not hard using \Cref{lem:guard-segments}. \Cref{fig:grids_polygon} gives polygons for the $(3,2)$-art gallery grid and for the $(4,1)$-art gallery grid. As a final observation in this section, we shortly take a closer look at one of the smallest interesting grids and analyze the $(2,2)$-art gallery grid, see \Cref{fig:22grid}. The following observation can be seen as an extension of \Cref{lem:Sk}.

\begin{observation}\label{obs:22grid}
    The $(2,2)$-art gallery grid has a solution space with two disjoint points, that is, homeomorphic to $S^0$.
\end{observation}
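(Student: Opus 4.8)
The plan is to settle this by a short finite case analysis of how two guards can be placed on the four guard segments of the $(2,2)$-grid. First I would invoke \Cref{obs:numbgrid}: since here $h=v=2$, the minimum number of guards is $\max(h,v)=2$, and applying the second part of that observation in both directions (both $h\ge v$ and $v\ge h$ hold), in any optimal solution no guard lies solely on a horizontal segment and no guard lies solely on a vertical segment. As a horizontal and a vertical guard segment meet only at a grid vertex, this forces both guards to sit exactly at vertices.

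Next I would enumerate the possibilities. Label the horizontal segments $1,2$ and the vertical segments $1,2$; the four vertices are the pairs $(i,j)$ with $i,j\in\{1,2\}$, and a guard at vertex $(i,j)$ guards precisely the horizontal segment $i$ and the vertical segment $j$. Two guards placed at vertices $(i_1,j_1),(i_2,j_2)$ together cover all four segments if and only if $\{i_1,i_2\}=\{1,2\}$ and $\{j_1,j_2\}=\{1,2\}$, i.e.\ they occupy one of the two ``diagonal'' pairs of opposite vertices. Hence there are exactly two optimal guard placements (and, since guards are unlabelled, these are genuinely two distinct points).

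Finally I would observe that these two placements are isolated from one another: by the first step, any displacement of a guard away from its vertex leaves some segment unguarded, so neither of the two placements admits a nontrivial continuous deformation within the solution space, and there is no further solution that could connect them. Therefore the solution space consists of exactly two discrete points, i.e.\ it is homeomorphic to $S^0$. Realizability of the $(2,2)$-grid as the guard segments of a simple polygon follows exactly as for the other grids via small pockets and \Cref{lem:guard-segments} (cf.\ \Cref{fig:grids_polygon}), the spurious-solution check being trivial since each sausage can be arranged to contain a single guard.

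There is essentially no obstacle here; the only point that needs a word of justification is that an optimal placement cannot hide a guard in the relative interior of a segment, and this is exactly what \Cref{obs:numbgrid} provides.
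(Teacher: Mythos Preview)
Your argument is correct and matches the paper's treatment: the paper states this observation with only a reference to \Cref{fig:22grid}, and your write-up is precisely the case analysis that figure is meant to convey, made rigorous via \Cref{obs:numbgrid} applied in both directions at $h=v=2$. The extra paragraph on realizability is not needed for the observation itself (which concerns the abstract grid), but it is harmless.
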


\begin{figure}[ht]
    \centering
    \includegraphics{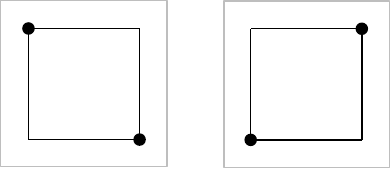}
    \caption{Solution space of the (2,2)-art gallery grid.}
    \label{fig:22grid}
\end{figure}

\subsection{Torus}
\label{sub:Torus}

We start this section with the following observation about independent guard segments in a polygon. If a polygon can be split into two parts, such that the guards guarding each part can move around independently from each other (i.e.,~the possible positions of the guards guarding one part do not depend on the positions of the guards guarding the other part), then the solution space of the whole polygon is homeomorphic to the Cartesian product of the two solution spaces given by the possible positions of the guards guarding each part on its own.

With this in mind, we can now construct an art gallery problem instance with a solution space homeomorphic to a torus. Note that the torus is homeomorphic to the Cartesian product of two circles, so we will use the circle construction from \Cref{sub:circle}. Now all we need to do is to fit two copies of this construction into the same polygon, with guards from each circle construction not influencing the others. The key idea is to separate the two circle constructions in the polygon such that a guard from one side cannot see any of the pockets of a guard segment from the other set. This can be achieved by the construction in \Cref{fig:torus}. The middle separator guarantees that the sausages of any guard segment on the left and any guard segment on the right do not intersect.

\begin{figure}[ht]
    \centering
    \includegraphics{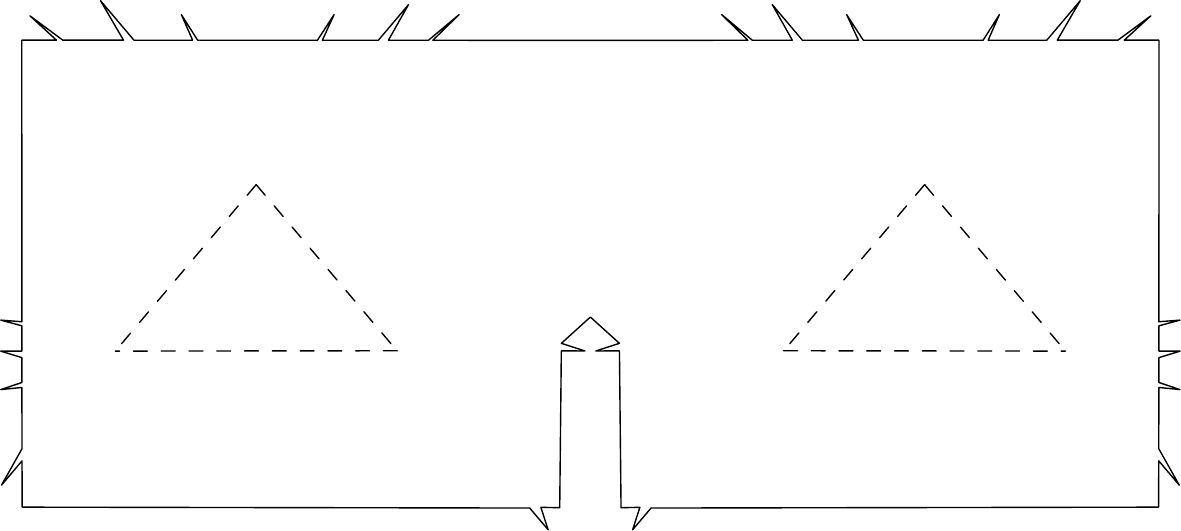}
    \caption{The polygon with solution space homeomorphic to the torus. Guards are forced onto guard segments forming two independent triangles.}
    \label{fig:torus}
\end{figure}

\subsection{Double Torus}
\label{sub:doubleTorus}
\begin{figure}[ht]
\centering
	\begin{subfigure}{.49\linewidth}
		\centering
		\includegraphics[page = 2]{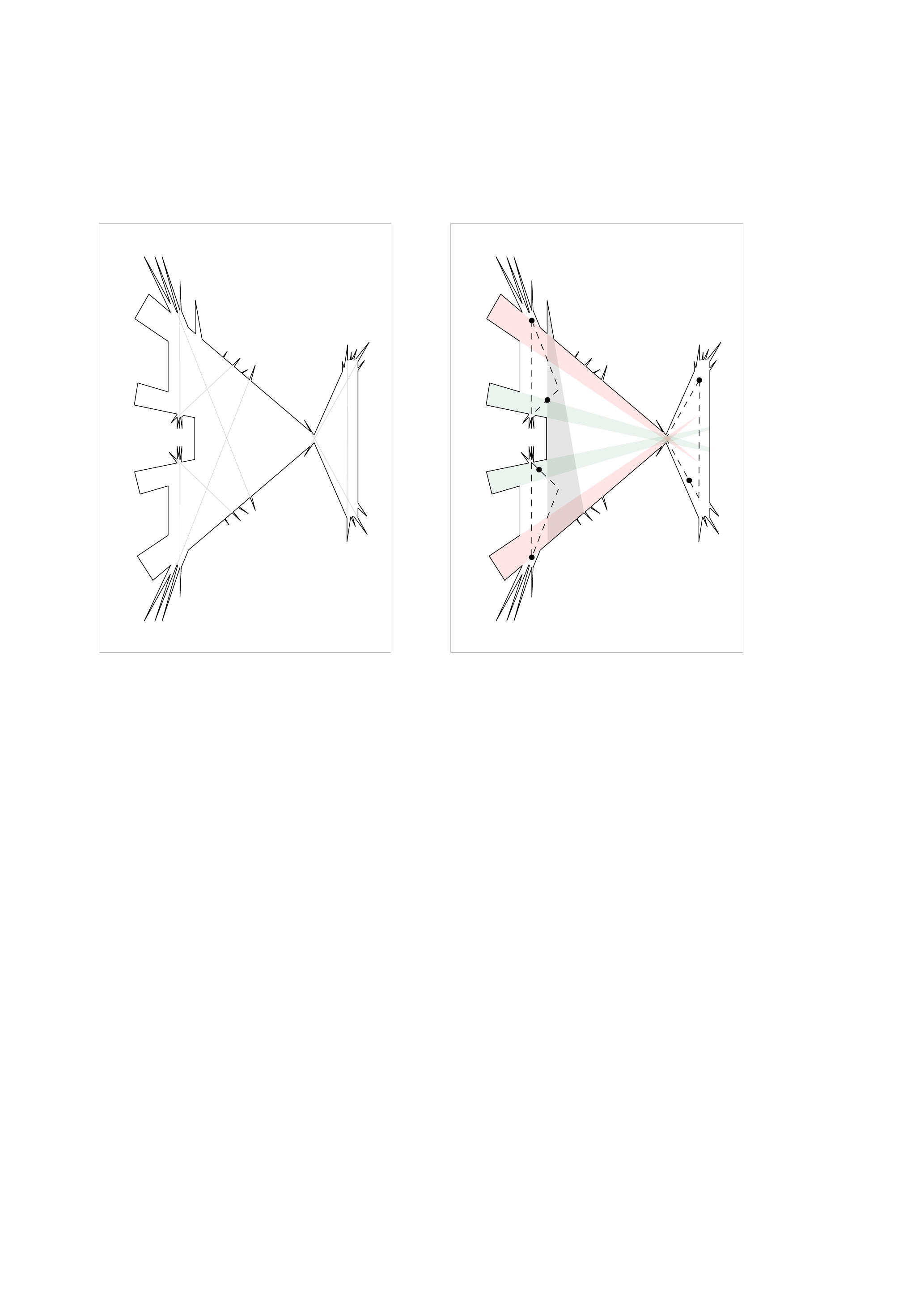}
		\caption{}
		\label{subfiga:doubletorusfull}
	\begin{minipage}{.1cm}
	\vfill
	\end{minipage}
    \end{subfigure}
    \hfill
	\begin{subfigure}{.49\linewidth}
		\centering
		\includegraphics[page = 3]{figures/Torus/double_torus_polygon.pdf}
		\caption{}
		\label{subfigb:doubletorusfull}
	\begin{minipage}{.1cm}
	\vfill
	\end{minipage}
	\end{subfigure}
	\caption{The polygon with solution space homeomorphic to the double torus. Guards are forced to live on guard segments, forming two guard triangles and an open triangle (dashed). An optimal solution needs only six guards. There are additional restrictions on the exact placements of the guards by \emph{circle pockets} (visibility areas in green and red) and a \emph{hole pocket} (visibility area in gray). We split the analysis of this whole polygon into three different components.}
	\label{fig:doubletorusfull}
\end{figure}

As a last tangible example we give a polygon with solution space homeomorphic to a double torus. The polygon has many spikes and pockets (see \Cref{fig:doubletorusfull}) but the key idea is once more to force guards onto guard segments. In particular the guard segments form two triangles and a path of three segments that nearly forms a triangle. Note that this nearly-triangle is not a triangle as it is not closed at its left. There are additional pockets on the left hand side and on the top part of the polygon that restrict the possibilities to place guards further. Let us denote the four pockets on the left \emph{circle pockets}, whose visibility areas are indicated in green and red. Let us call the additional pocket on the top part of the polygon the \emph{hole pocket}. The visibility area of the hole pocket is indicated in gray. Note that all other pockets serve the purpose of forcing guards onto the segments, as in previous sections. 

It is worth mentioning two features of the polygon.
First, we consider the intersections of the visibility areas of the hole pocket and each circle pocket. These intersections all contain only a single point lying on guard segments. Second, from both endpoints of the guard segment path on the right side, all circle pockets can be seen.

To simplify the analysis, we mainly consider the following three components: (1) the guard segment triangle on the top left, (2) the guard segment triangle on the bottom left, and (3) the three guard segments on the right, later referred to as the top, bottom, and vertical segment of the third component. We will prove that each of these components will need to be guarded independently, in other words no guard from one component can see the pockets of the guard segments from another component. Apart from these components the circle pockets and the hole pocket put additional constraints on the possible guard placements and create dependencies across components. In the following we will focus on the different components of the construction and show how they interact. Our goal is proving the following lemma.

\begin{lemma} \label{lem:doubletorus}
The solution space of the polygon given in \Cref{fig:doubletorusfull} is homeomorphic to a double torus.
\end{lemma}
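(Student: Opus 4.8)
The plan follows the template of the earlier tangible examples: replace sausages by guard segments, split the polygon into independent components, analyse each component's solution space, and then glue everything back together, with the circle pockets and the hole pocket providing the extra identifications. First I would argue, exactly as in \Cref{sub:circle} and \Cref{sub:chains}, that six guards are necessary (each of the two triangles needs two, and the open triangle — a path of three segments — needs two, since one guard covers at most two of its segments) and sufficient. Using \Cref{lem:guard-segments}, together with the bootstrapping remark that some sausage always contains a single guard, I would conclude that in every optimal placement each guard segment carries exactly one guard and no guard lies off a segment; the overlapping sausages produce only pseudo-solutions failing to guard some forcing pocket, just as in \Cref{fig:linkpoly}. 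Finally the separating spikes of the polygon guarantee that no guard of one component sees a forcing pocket of another, so exactly two guards live on each of the three components and every optimal placement is a triple $(s_1,s_2,s_3)\in X_1\times X_2\times X_3$.

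Next I would identify the three components in isolation. By \Cref{sub:circle}, $X_1\cong X_2\cong S^1$. For the open triangle, a direct inspection of its three-segment path shows that $X_3$ is an arc whose two endpoints are precisely the placements in which one of its two guards sits at a degree-one vertex of the path while the other sits at the opposite junction; call these endpoints $\alpha$ and $\beta$. Ignoring the circle and hole pockets, the ``raw'' solution space is thus a subset of $S^1\times S^1\times[\alpha,\beta]$, and the whole point of the construction is that the extra pockets carve out of this $3$-dimensional box a $2$-dimensional surface of genus two.

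To see this, I would bring in the extra pockets and track the slice of the solution space over a fixed value of $s_3$ as $s_3$ sweeps the arc. By the stated features, when $s_3\in\{\alpha,\beta\}$ a path guard already sees all four circle pockets, so they impose no restriction there; when $s_3$ is interior, the four circle pockets must be guarded by the triangle guards, and the positions of the circle pockets in \Cref{fig:doubletorusfull} are chosen exactly so that this restricts $(s_1,s_2)$ in a controlled way, while the hole pocket — visible together with a given circle pocket only from one point of a guard segment, by the first highlighted feature — contributes the remaining restriction. Reading this off, the slice over $s_3$ passes through a prescribed sequence of elementary changes (a circle being born or dying, two circles merging, one circle splitting) as $s_3$ runs from $\alpha$ to $\beta$; this data is a handle decomposition of the solution space, and comparing the count (one minimum and one maximum at $\alpha,\beta$, together with four saddles) against Euler characteristic $-2$ identifies it. Equivalently, I would cut the solution space into two pieces, each a once-punctured torus, joined along a circle by a cylinder $S^1\times[0,1]$, which is the connected sum $T^2\# T^2$, i.e.\ the double torus; a final consistency check that the space is closed, connected and orientable with Euler characteristic $-2$ then pins it down via the classification of surfaces, proving \Cref{lem:doubletorus}.

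The main obstacle is the coupling step: determining exactly which triples $(s_1,s_2,s_3)$ leave all four circle pockets and the hole pocket guarded, and verifying that the resulting subset of $X_1\times X_2\times X_3$ is an honest $2$-manifold — in particular that the degenerate configurations, where several guards sit simultaneously at segment intersections and where the slice changes type, have genuine disc neighbourhoods — and that the pieces glue in the connected-sum pattern rather than in a way that only reproduces the correct homotopy type or yields a surface of the wrong genus. As in the other tangible examples, this is precisely where the fine geometry of the pockets in \Cref{fig:doubletorusfull} has to be read off carefully, and where a good figure does most of the work.
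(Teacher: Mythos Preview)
Your overall decomposition into three components with $X_1\cong X_2\cong S^1$ and $X_3\cong[\alpha,\beta]$ matches the paper exactly, and your second formulation --- two once-punctured tori glued along their boundary circles to the ends of a cylinder $S^1\times[0,1]$ --- is precisely the paper's argument (its \Cref{lem:torushole} and \Cref{lem:toruscylinder} for the ``unlocking'' and ``locking'' configurations of component~(3)).

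However, your first formulation via a Morse handle decomposition is not correct as stated: the projection to $s_3$ is \emph{not} a Morse function on the solution space. At each endpoint $\alpha,\beta$ the fibre is a full once-punctured torus --- a $2$-dimensional piece of the surface, not an isolated critical point --- so there is no ``one minimum at $\alpha$, one maximum at $\beta$''. The count $1-4+1=-2$ happens to be the right Euler characteristic, but it does not arise from the slicing you describe; in fact no circles are born, die, merge, or split as $s_3$ traverses the open interval, since the interior fibre is a single $S^1$ throughout. More importantly, the real content of the proof is the part you flag as the ``main obstacle'' and leave unexecuted: showing that over interior $s_3$ the four circle pockets pin each of the four triangle guards to a short arc while the hole pocket leaves exactly one circular degree of freedom (this is \Cref{lem:toruscylinder}), and that at $s_3\in\{\alpha,\beta\}$ the circle pockets drop out and the hole pocket removes an open disc from the torus $X_1\times X_2$ (this is \Cref{lem:torushole}, where the forbidden region is shown to be a product of two arcs). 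Without carrying out these two slice computations and checking that the cylinder boundary coincides with the hole boundary, the connected-sum picture is asserted rather than proved.
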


We start by proving the following lemma. It shows that in optimal solutions, guards are indeed forced to live on guard segments. In particular the guards behave as expected and there have to be two guards in each component. The proof only relies on the way the polygon is drawn and does not give additional insights.

\begin{lemma} \label{lem:doubletorus6}
The polygon in \Cref{fig:doubletorusfull} needs 6 guards to be guarded. In an optimal solution with 6 guards, the guards fulfill the following conditions:
\begin{itemize}
    \item[(a)] there are two guards on segments of component (1), and two on segments of component (2).
    \item[(b)] there is one guard on the top segment and one on the bottom segment of component (3).
\end{itemize} 
\end{lemma}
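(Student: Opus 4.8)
The plan is to combine \Cref{lem:guard-segments} with a counting argument in three stages: first pin every guard to a guard segment, then derive a lower bound of six guards, match it with an explicit six-guard placement, and finally read off (a) and (b) from the fact that the bound is tight. As in the earlier tangible constructions, the hypotheses of \Cref{lem:guard-segments} for the nine segment-enforcing gadgets of this polygon can be verified by a routine (if tedious) inspection of \Cref{fig:doubletorusfull}; granting this, in every solution each of the nine guard segments — three for the top-left triangle (component (1)), three for the bottom-left triangle (component (2)), and the top, vertical and bottom segments of component (3) — contains a guard.

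The second ingredient is the spatial independence of the three components, which again is a direct consequence of the way \Cref{fig:doubletorusfull} is drawn: the spikes and walls separating component (1), component (2) and component (3) are arranged so that no point of the polygon lies on (or sees a segment-enforcing pocket of) segments of two distinct components. Now in each of the two triangles a single guard lies on at most two of the three sides — exactly as in \Cref{sub:circle} — so at least two guards have their positions on segments of component (1), and likewise at least two on segments of component (2); and since the top and bottom segments of component (3) are disjoint, a guard lies on at most two of its three segments, so at least two guards sit on component-(3) segments. By independence these three pairs of guards are pairwise disjoint, whence any solution uses at least $2+2+2=6$ guards.

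For the matching upper bound I would exhibit one explicit six-guard placement: two guards at corners of each triangle, one guard at the intersection of the top and vertical segments and one at the intersection of the bottom and vertical segments of component (3), with the still-free triangle guards set to one of the extremal positions in which the circle pockets, the hole pocket, and all the ordinary walls and remaining pockets are also seen. This shows the optimum is exactly six. Consequently, in any optimal solution the lower-bound count is attained: there are exactly two guards on segments of component (1) and exactly two on segments of component (2), which is (a), and exactly two on segments of component (3) (no guard being off all segments). For these last two guards, \Cref{lem:guard-segments} places a guard on each of the top, vertical and bottom segments; with only two guards available, one of them must lie on two of these, and since the top and bottom segments are disjoint it must sit at the top$\cap$vertical or the bottom$\cap$vertical intersection, the other guard then lying on the remaining segment. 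In every case one of the two guards lies on the top segment and the other on the bottom segment, which is (b).

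The only genuinely non-trivial part is the bookkeeping of the first two paragraphs: confirming from the rather spiky geometry of \Cref{fig:doubletorusfull} that visibility never leaks between the three components and that the hypotheses of \Cref{lem:guard-segments} are met for each gadget. Everything after that is elementary counting, which is exactly why this lemma ``only relies on the way the polygon is drawn.''
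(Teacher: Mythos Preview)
Your argument hinges on the claim that the three components are spatially independent --- that no point of the polygon sees segment-enforcing pockets of two distinct components. This is true for component~(3) versus the others, but it fails between components~(1) and~(2): the two ``offending'' triangle sides (those pointing toward the centre of the polygon) have sausages that \emph{do} intersect, and points in this intersection (what the paper calls the \emph{middle}) can see pockets of both triangles simultaneously. With that leak present, your $2+2+2$ lower-bound count is not justified, since a priori a guard in the middle could serve both triangles at once; and for the same reason the hypotheses of \Cref{lem:guard-segments} are not met ``routinely'' for the offending segments, so you cannot first pin every guard to a segment and then count.

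The paper's proof confronts exactly this obstacle. It argues geometrically (cf.\ \Cref{fig:double_torus_middle}) that any guard in the middle sees at most two pockets of each triangle, and that two guards in the middle together see at most six of the eight offending-segment pockets; either way at least seven guards would be needed, so no optimal solution places a guard in the middle. Only after this step does the independence of components~(1) and~(2) hold, and then the remainder proceeds much as you wrote. Your handling of component~(3) and the derivation of~(b) from two guards on three segments are essentially the paper's argument and are fine.
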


\begin{proof}
To see that 6 guards are sufficient, it suffices to give one configuration with 6 guards seeing the entire polygon. Observe that the guard placement in \Cref{fig:doubletorusfull} is a valid solution. Next we prove that 6 guards are necessary.

Note that component (3) is completely isolated from the others. Furthermore, there is no common intersection of the visibility areas of all pockets of component (3), and thus two guards are required to guard the component. Where can those guards be? Observe that there must be a guard in the sausage of the vertical segment, denote it as guard $g$ and let us denote the other guard as $h$. Suppose that $g$ is not in the sausage of any of the other segments of the component. In this case $h$ must guard all the pockets of both the top and bottom guard segments, which is not possible, as the segments do not intersect. We conclude that guard $g$ must be in one other sausage as well and guard $h$ must guard the third sausage. Using \Cref{lem:guard-segments}, we can show that the guards must stand on guard segments exactly, proving property (b).

It remains to show that in any optimal solution, four more guards are required; two on each of the triangles. In \Cref{sub:Torus} we have shown that two separate triangular arrangements of guard segments require four guards exactly on the guard segments. The only difference of that construction to the polygon here is that we could potentially have guards guarding pockets from both triangles simultaneously. The only possibility for that to happen is if at least one guard stands in the intersection of the sausages of the two \emph{offending segments} of the triangles: those which point to the middle of the polygon. Let us denote this intersection of sausages as the \emph{middle}. Note that no guard in the middle can see pockets of any non-offending guard segment, thus we need at least one guard per triangle that can only see pockets of the corresponding component. Consequently there are at most two guards in the middle in any optimal solution.

Any guard in the middle can see at most two pockets of each guard triangle, see \Cref{fig:double_torus_middle}. Therefore if there is only one guard in the middle we require at least two additional guards in components (1) and (2) each, leading to a total of seven guards. Thus, having only one guard in the middle is not optimal. 

\begin{figure}[ht]
\centering
	\begin{subfigure}{.54\linewidth}
		\centering
		\includegraphics[page = 2]{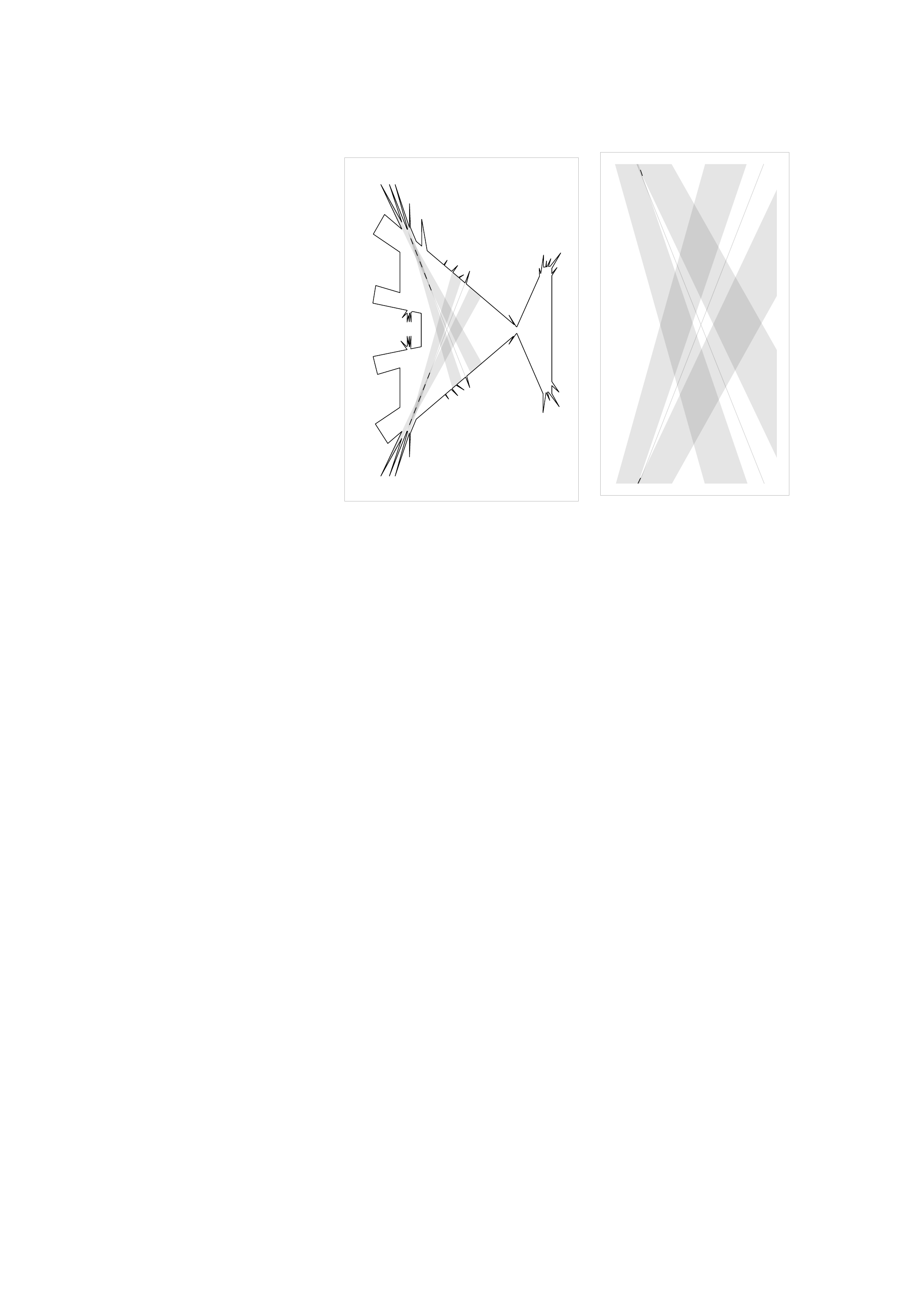}
		\caption{}
		\label{subfiga:double_torus_middle}
	\begin{minipage}{.1cm}
	\vfill
	\end{minipage}
    \end{subfigure}
    \hfill
	\begin{subfigure}{.45\linewidth}
		\centering
		\includegraphics[page = 3]{figures/Torus/double_torus_middle_v2.pdf}
		\caption{}
		\label{subfigb:double_torus_middle}
	\begin{minipage}{.1cm}
	\vfill
	\end{minipage}
	\end{subfigure}
	\caption{(a) The polygon and (b) the middle with the offending guard segments (dashed), the extension lines of the guard segments (gray lines) and the visibility areas of their respective guard segment pockets (gray). There is only a single point seeing four of the involved pockets, the intersection of the gray lines.}
	\label{fig:double_torus_middle}
\end{figure}

If there are two guards in the middle, one can verify that these two guards cannot guard the two offending segments completely, see \Cref{fig:double_torus_middle}. In particular, there is only a single point that can see four of the eight pockets of the offending segments (the intersection of the extensions of the guard segments). Consequently any two guards in the middle together can see at most six pockets. Hence, in one of the components (1) or (2) more than 8 unguarded pockets remain and two guards are needed, leading to at least seven guards again.

We conclude that there is no optimal solution in which a guard sees a pocket of component (1) and one of component (2) simultaneously. Therefore the two components are independent and as in \Cref{sub:Torus}, there can only be guards on the guard segments, proving property (a).
\end{proof}

Recall that the red and green areas (e.g.~in \Cref{fig:doubletorusfull}) show the visibility areas of the circle pockets on the left side of the polygon. If one of the guards of component (3) is at the left endpoint of its segment, it guards all four circle pockets. The left and right parts of \Cref{fig:doubletorus2states} show the two configurations of the guards in component (3) for which this holds. We refer to them as \emph{unlocking configurations}, and the rest of the configurations of the guards in component (3) are referred to as \emph{locking configurations}. In the following, we will analyze the solution spaces of the first two components, when knowing that the third one is in either the unlocking or the locking configuration. In the end it only remains to show that all parts of the solution space come together nicely and form a double torus.

\begin{figure}[htbp]
    \centering
    \includegraphics{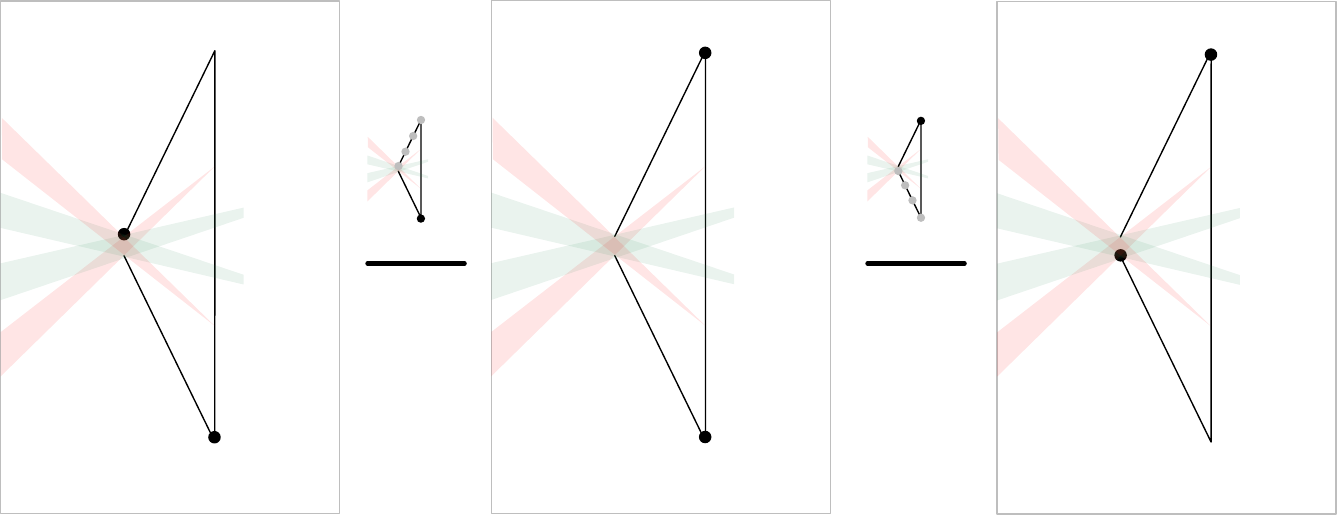}
    \caption{The solution space restricted to the guards of component (3) is homeomorphic to an interval. The left and right figures are the unlocking configurations, i.e.,~all circle pockets are already guarded. All configurations in between are locking.}
    \label{fig:doubletorus2states}
\end{figure}

\begin{lemma}\label{lem:torushole}
    The solution space, restricted to solutions in which the guards of component (3) are in one of the unlocking configurations, is homeomorphic to a torus with a hole.
\end{lemma}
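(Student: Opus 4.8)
The plan is to analyze what freedom the guards have when component~(3) is in an unlocking configuration, and to show this freedom decomposes as a product that is the torus-with-a-hole. By Lemma~\ref{lem:doubletorus6} there are exactly two guards on each of the two triangular components, and in an unlocking configuration one guard of component~(3) sits at the left endpoint of its segment, hence all four circle pockets are already guarded. The first step is therefore to argue that, \emph{given} that all circle pockets are guarded by component~(3), the solution spaces of components~(1) and~(2) are governed only by the guard-triangle constraints and the hole pocket. Since each triangular component on its own has solution space homeomorphic to a circle (by the analysis in \Cref{sub:circle} and \Cref{sub:Torus}), without the hole pocket the combined freedom of components~(1) and~(2) would be $S^1\times S^1$, a torus. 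I would make precise, using the independence observation opening \Cref{sub:Torus}, that components~(1) and~(2) are genuinely independent here, so this product description is valid.

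The second step is to incorporate the hole pocket. By the stated features of the polygon, the visibility area of the hole pocket meets each circle pocket's visibility area in a single point lying on the guard segments, and more importantly the hole pocket cuts out exactly one forbidden configuration from the torus: the hole pocket is guarded precisely when the two triangle components are \emph{not} simultaneously in the particular extremal configurations that would leave it unseen. I would identify the offending pair of positions — one position on component~(1)'s circle and one on component~(2)'s circle — that fails to guard the hole pocket, and show that the set of excluded pairs forms (the interior of) a small disk on the torus $S^1\times S^1$. Removing an open disk from a torus yields a torus with a hole, which is the claim. Concretely, I expect the excluded region to be a neighborhood of the point $(\text{extremal}_1,\text{extremal}_2)$ where both triangles are ``pointing'' in the direction that hides the hole pocket, and I would verify via the visibility-area intersection data that this region is an open disk and that its complement in the torus is exactly the set of placements for which the hole pocket is seen.

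The third step is bookkeeping: because component~(3) in an unlocking configuration has two discrete sub-configurations (the left and right figures of \Cref{fig:doubletorus2states}), I must check that both give the same space and that restricting to ``unlocking'' rather than a single fixed unlocking configuration does not change the homeomorphism type — the point is that in both unlocking configurations all four circle pockets are guarded and the hole pocket constraint is the same, so each contributes a torus-with-a-hole, and since the two unlocking configurations are themselves endpoints of the interval of \Cref{fig:doubletorus2states} one should be careful to state the lemma as describing the space obtained from either one (the gluing of the two copies is handled later, in assembling the double torus).

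The main obstacle I anticipate is verifying that the configurations hiding the hole pocket form an honest open disk on the torus, rather than something of a different topological type (an annulus, two disks, or a disk touching the ``seam'' of the torus in a degenerate way). This requires carefully reading off, from the geometry in \Cref{fig:doubletorusfull}, how the hole pocket's visibility wedge interacts with the two circle coordinates simultaneously — in particular that the constraint is a single inequality jointly in the two angular parameters, smooth and with connected, contractible, simply-connected violation set away from the basepoint of each circle. Once that local picture is pinned down, the conclusion ``torus minus open disk $=$ torus with a hole'' is immediate.
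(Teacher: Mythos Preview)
Your proposal is correct and follows essentially the same route as the paper: ignore the circle pockets (guarded by component~(3) in an unlocking configuration), observe that components~(1) and~(2) alone give a torus, and then show the hole pocket excises a disk. The one place where the paper is sharper than your sketch is exactly the obstacle you flag: rather than arguing abstractly that the forbidden region is a disk, the paper observes that the hole pocket is unguarded if and only if \emph{neither} component has a guard in the hole pocket's visibility area, so the forbidden region factors as a product $(\text{forbidden in }(1))\times(\text{forbidden in }(2))$; it then checks that each factor is an interval on its $S^1$, so the product is a square, hence a disk. This product structure immediately rules out the annulus/two-disk pathologies you were worried about.
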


\begin{proof}
Since a guard from component (3) can see the circle pockets, they do not affect the position of the rest of the guards, so we can ignore these pockets. Additionally, we can also completely ignore the placement of guards in component (3) since its guards cannot see any pockets from the other components (cf.~\Cref{lem:doubletorus6}), nor can they see the hole pocket. This leaves us with components (1) and (2), which on their own would have a solution space homeomorphic to a torus, ignoring the hole pocket. To prove the lemma, we need to argue that the hole pocket eliminates a part of the solution space (we call it the \emph{forbidden region}) that is homeomorphic to a disk, resulting in a torus with a hole. 

Observe that the forbidden region is made of all placements of the guards in components (1) and (2), for which the hole pocket is not guarded. The guard placements in the two components are independent, since the hole pocket constraint applies to both. Following the reasoning in \Cref{sub:Torus} it suffices to prove that the forbidden region of the guards of each component is homeomorphic to a line segment.

In each component we have guard segments forming a triangle, and some area around one of the vertices from which the hole pocket can be seen. \Cref{fig:torusholeforbid} shows the guard placements that do not guard the hole pocket. As this is homeomorphic to a line segment, the forbidden region is homeomorphic to a $B^2$, and the lemma follows.
\begin{figure}[htbp]
    \centering
    \includegraphics{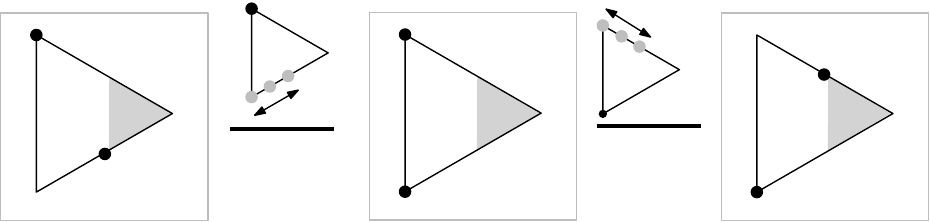}
    \caption{The forbidden region in the solution space of one of the components (1) or (2). The hole pocket is only visible from the gray area; therefore, if either of the guards is in the gray area, then the hole pocket is visible and the solution is not forbidden.}
    \label{fig:torusholeforbid}
\end{figure}
\end{proof}

\begin{lemma}\label{lem:toruscylinder}
    The solution space, restricted to solutions in which the guards of component (3) are in some locking configuration, is homeomorphic to a circle $S^1$.
\end{lemma}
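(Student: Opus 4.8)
The plan is to fix component (3) in an arbitrary locking configuration and to show that the positions of the four guards of components (1) and (2) are then constrained to a space homeomorphic to $S^1$; since the configuration of component (3) is fixed, this gives the claim. First I would record the consequences of \Cref{lem:doubletorus6}: any optimal solution uses exactly six guards, two on the guard-segment triangle of component (1) and two on that of component (2), all forced onto the segments exactly by \Cref{lem:guard-segments}, and no guard of one component sees a ``triangle'' pocket of another component. By the definition of a locking configuration together with the stated visibility facts — the four circle pockets are seen only from the two path endpoints of component (3), i.e.\ precisely the unlocking configurations, and the hole pocket is never seen by component (3) — in a locking configuration none of the four circle pockets and not the hole pocket is guarded by a guard of component (3). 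Hence all five of these pockets must be covered by the four guards of components (1) and (2). Since components (1) and (2) interact only through these pockets, the solution space restricted to the fixed locking configuration is the subset of $\bigl(\text{triangle solution space of }(1)\bigr)\times\bigl(\text{triangle solution space of }(2)\bigr)$ on which all four circle pockets and the hole pocket are seen.

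Next I would analyze, component by component, which configurations see which circle pockets, reading this off \Cref{fig:doubletorusfull}. The key geometric input is that the visibility region of each circle pocket meets the guard segments, and that its intersection with the visibility region of the hole pocket is a single point lying on a guard segment. I expect this to force one of the two components — say component (2) — to a single rigid configuration $c_2^\ast$ in order to see the circle pockets it is responsible for while still guarding its triangle: by \Cref{sub:circle} the triangle alone already forces its solution space to be a circle, and the circle-pocket requirement cuts this circle down to a point. Moreover, $c_2^\ast$ has one of its guards at the unique point where the visibility regions of a circle pocket and the hole pocket meet, so $c_2^\ast$ automatically guards the hole pocket as well. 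With the hole pocket thereby covered and component (2) pinned, component (1) is subject only to guarding its own triangle (and any circle pocket it is responsible for is then seen from every triangle-guarding configuration of component (1)), so by \Cref{sub:circle} the set of admissible configurations of component (1) is a full circle. Therefore the restricted solution space is $\{c_2^\ast\}\times S^1$, homeomorphic to $S^1$, which finishes the lemma; the later wrap-up then glues this circle between the two punctured tori of the unlocking cases to obtain the double torus.

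The main obstacle is this middle step: verifying from \Cref{fig:doubletorusfull} exactly which guard-segment arcs of components (1) and (2) see which circle pockets, and hence that the circle pockets rigidly pin one component while the other stays free, with the pinned configuration also covering the hole pocket. This is where the specific placement of the circle pockets is used, and it is also where the symmetry between the two handles of the eventual double torus shows up — for locking configurations in one half of component (3)'s interval it may be component (2) that is pinned and in the other half component (1) — but in every case the restricted solution space comes out homeomorphic to $S^1$. The remaining check that every optimal placement really has all guards on the guard segments is routine via the cases of \Cref{lem:guard-segments} and, as in the earlier constructions, reduces to inspecting the figure.
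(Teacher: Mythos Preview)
Your overall setup is right: in a locking configuration the guards of component~(3) see neither the four circle pockets nor the hole pocket, so those five pockets must be covered by the four guards on the two triangles. But your proposed mechanism for producing the $S^1$ --- one component pinned to a single configuration $c_2^\ast$ and the other completely free on its triangle-circle --- does not match the actual geometry of \Cref{fig:doubletorusfull}, and this is a genuine gap rather than a detail to be filled in.

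In the paper's construction each guard on a triangle can see at most one circle pocket at a time. With four circle pockets and four triangle guards, \emph{every} guard is forced to cover its own circle pocket; neither component is left free to run through its full triangle-$S^1$. So there is no $\{c_2^\ast\}\times S^1$ decomposition, and your suggested asymmetry (``in one half of the interval component~(2) is pinned, in the other half component~(1)'') does not occur either --- in every locking configuration the constraints on components~(1) and~(2) are identical, since component~(3) contributes nothing. The circle instead arises from the hole pocket. The intersection of the hole-pocket visibility region with each circle-pocket visibility region on the guard segments is a single point; these four points form the set $I$, and at least one of the four guards must sit at its $I$-point. Together with the usual triangle constraint (one guard per triangle on a vertex), each guard is confined to a short arc between a fixed triangle vertex and its $I$-point, and only one guard can be off its $I$-point at a time. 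Cycling the ``hole-pocket duty'' through the four guards yields a single closed loop of configurations, which is the $S^1$. The argument therefore needs to analyse this four-guard cycle (as in \Cref{fig:doubletorusstates}) rather than a product decomposition.
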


\begin{proof}
Recall that by \Cref{lem:doubletorus6} no guard can be in the interior of the vertical guard segment. Therefore in any locking configuration, the guards of component (3) do not see the circle pockets, the hole pocket or the pockets of components (1) and (2). This also means that the placement of guards in component (3) is independent from the rest of the polygon; the solution space for this component is homeomorphic to a line segment, with the unlocking configurations at the ends as boundary (see \Cref{fig:doubletorus2states}).

Considering components (1) and (2), we note that each guard can only see one of the circle pockets at any time, so the four guards each need to be guarding one circle pocket. Note that there is also the hole pocket. Let $I$ be the set of points on a guard segment triangle from which a guard is able to see both an circle pocket and the hole pocket. Observe that there are only four points in $I$ and that at least one of the guards needs to be on such a point. 

Furthermore, as we have seen before, in each component (1) and (2) at least one guard needs to be on a vertex of the triangle. Knowing this, we can see that each guard can only move between a fixed vertex and a point in $I$, when another guard satisfies the corresponding requirement. \Cref{fig:doubletorusstates} shows these possible movements. 
\begin{figure}[ht]
    \centering
    \includegraphics{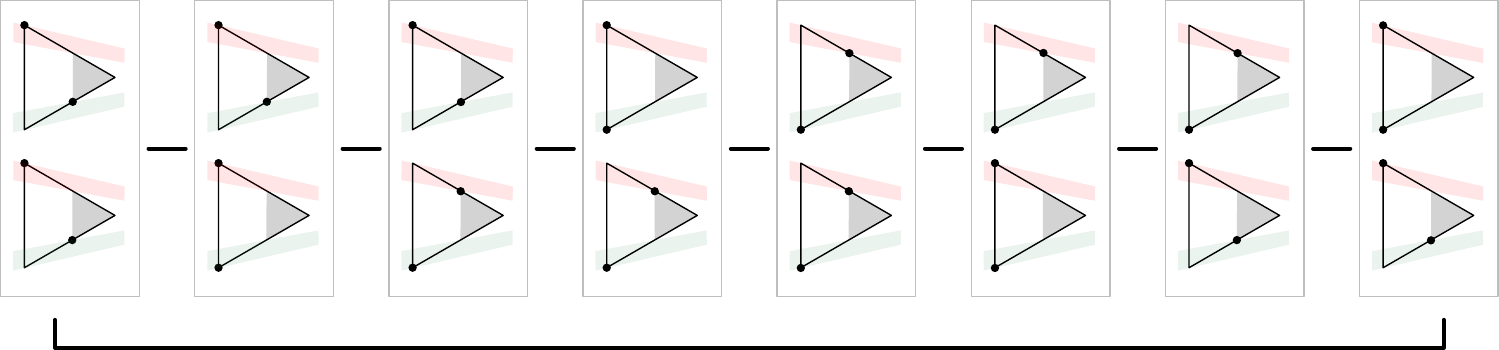}
    \caption{Solution space formed by the guards of components (1) and (2) when the guards of component (3) are in some fixed locking configuration. One of the guards has to see the hole pocket, its visibility area is indicated in gray. There have to be guards in the visibility areas of the circle pockets as well.}
    \label{fig:doubletorusstates}
\end{figure}
One can see that the solution space is homeomorphic to a circle and the desired statement follows. 
\end{proof}

We are finally ready to prove \Cref{lem:doubletorus}, that indeed this art gallery problem instance has a solution space homeomorphic to the double torus.

\begin{proof}[Proof of \Cref{lem:doubletorus}]
We know that in any optimal solution there are exactly six guards, two in each component. Let us assume that the two guards in component (3) are in a locking configuration. Then by \Cref{lem:toruscylinder} we know that that the solution space of the remaining guards is homeomorphic to a circle $S^1$. Since all locking configurations are homeomorphic to an interval, the solution space over all locking configurations is a cylinder (i.e.,~$S^1 \times [0,1]$). On the other hand, if the guards in component (3) are in either unlocking configuration, then the solution space is homeomorphic to a torus with a hole, cf.~\Cref{lem:torushole}.

For both the middle cylinder (component (3) locking) and the two tori with holes (component (3) in either unlocking configuration) the boundary is exactly the set of solutions with guards of components (1) and (2) guarding one of the circle pockets each and guards of component (3) in unlocking configuration. These are also exactly the solutions through which one can move from a torus to the cylinder and vice-versa, and we can conclude that the parts fit together. We see that the solution space is homeomorphic to two tori joined by a cylinder $S^1 \times [0,1]$, which is homeomorphic to the double torus as claimed.
\end{proof}

\section{Conclusion}
\label{sec:conclusion}
We showed that up to homotopy equivalence, any compact semi-algebraic set can be achieved as a solution space of an art gallery problem instance, even when considering the NP-contained version \textsc{Point-Vertex Art Gallery}. By doing this, we not only expand on a result by Abrahamsen, Adamaszek and Miltzow~\cite{ARTETR}, but also highly improve the simplicity of the arguments required. 
We are now left to wonder whether even homeomorphism-universality holds for this simpler version of the art gallery problem.

Furthermore, we have given example instances of the art gallery problem, that are not only homotopy equivalent, but even homeomorphic to well-known topological spaces. In particular, we have shown that we can get $1$-dimensional topological spaces with any number of circles (clovers and chains), and that we can get simple higher-dimensional spaces (spheres). 
On the one hand, a natural open question asks whether it is possible to extend our ideas and find nice examples leading to orientable surfaces of higher genus.
On the other hand, it also remains open whether there are simple art galleries leading to unorientable surfaces like the M\"{o}bius strip, Klein bottle, or the projective plane. To answer this, it could very well be that completely new ideas are needed.

\vfill

\paragraph{Acknowledgments.}
This research started at the 18th Gremo's Workshop on Open Problems (GWOP) in Morschach, Switzerland, 2021.
We thank the organizers for providing a very pleasant and inspiring working atmosphere.
Tillmann Miltzow is generously supported by the Netherlands Organisation for Scientific Research (NWO) under project no. 016.Veni.192.250.
Patrick Schnider has received funding from the European Research Council under the European Unions Seventh Framework Programme ERC Grant agreement ERC StG 716424 - CASe.
We thank anonymous reviewers for their useful feedback.

\newpage
\bibliographystyle{plain}
\bibliography{bibliography/references,bibliography/ETR, bibliography/ArtGallery}

\newpage
\appendix
\section{Proof of \Cref{lem:homotopy}}
\label{app:Smale}

We use the following result of Smale~\cite{smale1957vietoris}.
We will repeat all definitions below.

\begin{theorem}[\cite{smale1957vietoris}, ``Main Theorem'']
Let $X$ and $Y$ be connected, locally compact, separable metric spaces.
Assume further that $X$ is locally contractible.
Let $f:X\rightarrow Y$ be a continuous proper surjective map such that for each $y\in Y$ the space $f^{-1}(y)$ is contractible and locally contractible.
Then $f$ is a weak homotopy equivalence.
\end{theorem}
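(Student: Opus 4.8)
The plan is Smale's classical route: reduce the claim to an $\infty$-connectivity (``compression'') property, verify a local lemma saying that small neighborhoods of fibers are null-homotopic in $X$, and then carry out a skeleton-by-skeleton lifting argument by induction on dimension. Concretely, it is standard that a continuous map $f\colon X\to Y$ is a weak homotopy equivalence precisely when for every $n\ge 0$ it has the following property $(C_n)$ (the unwinding of $\pi_n(M_f,X)=0$ for the mapping cylinder $M_f$): for every $\phi\colon D^n\to Y$, every $\psi\colon S^{n-1}\to X$, and every homotopy realizing $f\circ\psi\simeq\phi|_{S^{n-1}}$, there exists $\tilde\phi\colon D^n\to X$ extending $\psi$ with $f\circ\tilde\phi\simeq\phi$ rel $S^{n-1}$. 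For $n=0$ this is just surjectivity of $f$ onto path components, which holds since $f$ is surjective; for $n\ge 1$ I would prove $(C_n)$ by induction on $n$, and the theorem then drops out of the long exact sequence of the pair $(M_f,X)$.

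The technical heart is a \emph{fiber-neighborhood lemma}: for every $y\in Y$ and every open $V\supseteq f^{-1}(y)$ in $X$, there is an open $U\ni y$ with $f^{-1}(U)\subseteq V$ such that $f^{-1}(U)$ can be deformed inside $V$ into $f^{-1}(y)$; since $f^{-1}(y)$ is contractible, $f^{-1}(U)$ is then null-homotopic in $V$. To prove this I would use that $F:=f^{-1}(y)$ is compact (properness of $f$ applied to $\{y\}$), metrizable, and locally contractible, hence an ANR, and fix a contraction $h\colon F\times[0,1]\to F$. Using the local contractibility of $X$ and the homotopy extension property of an ANR neighborhood of $F$, extend $h$ to a homotopy $\bar h\colon N\times[0,1]\to X$ on an open set $F\subseteq N\subseteq V$ starting from the inclusion $N\hookrightarrow X$. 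Finally, properness makes $f$ closed, so $U:=Y\setminus f(X\setminus N)$ is an open neighborhood of $y$ with $f^{-1}(U)\subseteq N$, and $\bar h$ restricted to $f^{-1}(U)\times[0,1]$ is the desired deformation. Iterating, I obtain for each $y$ a nested pair $W_y\subseteq U_y$ of neighborhoods with $f^{-1}(W_y)$ contractible inside $f^{-1}(U_y)$; this refinement is what will keep auxiliary homotopies ``small'' in the induction.

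For the inductive step, assume $(C_j)$ for all $j<n$ and take data $\phi,\psi$ as above. Pull the open cover $\{W_y\}_y$ back along $\phi$ and, using compactness of $D^n$ and a Lebesgue number, triangulate $D^n$ (with $S^{n-1}$ a subcomplex) so finely that each simplex maps into a single cover element. Build $\tilde\phi$ over the skeleta of $D^n$, carrying along a homotopy $H_k$ from $f\circ\tilde\phi$ to $\phi$ on the current skeleton that is subordinate to the cover: on the $0$-skeleton use $\psi$ on $S^{n-1}$ and surjectivity of $f$ elsewhere; over a $(k{+}1)$-cell $\tau$ with $k{+}1<n$, glue $H_k|_{\partial\tau}$ with $\phi|_\tau$ into a map $D^{k+1}\to Y$ whose boundary lift is $\tilde\phi|_{\partial\tau}$, and apply the inductive hypothesis $(C_{k+1})$ in a relative and controlled form, using the nested $W\subseteq U$ to keep the new homotopy small; over a top cell $\tau$, the boundary $\tilde\phi|_{\partial\tau}$ lands in some $f^{-1}(W_{y_\tau})$, which is null-homotopic in $X$ by the fiber-neighborhood lemma, so it extends over $\tau=D^n$, and the accumulated small homotopies assemble into a homotopy $f\circ\tilde\phi\simeq\phi$ rel $S^{n-1}$. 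This yields $(C_n)$, completing the induction and hence the proof that $f$ is a weak homotopy equivalence.

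The step I expect to be the main obstacle is exactly the ``smallness'' bookkeeping in the inductive step. A naive compression produces, at intermediate stages, homotopies in $Y$ that may wander far, and then the boundary of a top cell need not stay near a single fiber, so the fiber-neighborhood lemma cannot be invoked. Forcing every homotopy built along the way to remain subordinate to the cover requires a carefully phrased relative, controlled version of $(C_j)$ together with the nested neighborhood system $W_y\subseteq U_y$; setting this up consistently over all skeleta is the delicate part of Smale's argument. Secondary care points are the ANR input to the fiber-neighborhood lemma (a compact metric locally contractible space is an ANR, and a contraction of it extends over a neighborhood when $X$ is locally contractible), and the precise use of properness, namely turning a neighborhood of a fiber in $X$ into a saturated neighborhood $f^{-1}(U)$.
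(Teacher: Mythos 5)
You should first be aware that the paper does not prove this statement at all: it is quoted verbatim from Smale's 1957 paper and used as a black box. The only work the paper does in \Cref{app:Smale} is to check that the hypotheses hold for the cubical complexes at hand and to upgrade the conclusion from weak homotopy equivalence to homotopy equivalence via Whitehead's theorem. What you have written is therefore a reconstruction of Smale's own argument, and in outline it is the right one: the reduction to the compression properties $(C_n)$ (i.e.\ $\pi_n(M_f,X)=0$), a lemma producing saturated neighborhoods $f^{-1}(U)$ of each fiber that are null-homotopic inside a prescribed $V$, and a skeleton-by-skeleton lifting subordinate to a nested cover. You also correctly identify that the controlled, relative bookkeeping in the induction is where essentially all of Smale's effort goes; note, though, that in your write-up this part remains a plan rather than a proof.

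The concrete gap is in your proof of the fiber-neighborhood lemma. You assert that a compact, metric, locally contractible space is an ANR; this is false in general (Borsuk constructed a locally contractible compactum that is not an ANR; the implication holds only under an additional finite-dimensionality hypothesis, which Smale's theorem does not assume). Moreover, even if $F=f^{-1}(y)$ were an ANR, the homotopy extension step you invoke --- extending the contraction $h\colon F\times[0,1]\to F$ over a neighborhood $N\times[0,1]$ with values in $X$ --- requires the \emph{target} to be an ANR (Borsuk's homotopy extension theorem), and $X$ is only assumed locally contractible. So the lemma cannot be established by this route in the stated generality. Smale avoids ANR machinery entirely: from local contractibility of the fibers plus properness he proves only the weaker statement that, for each $y$ and each neighborhood $V$ of $f^{-1}(y)$, there is a neighborhood $U$ of $y$ such that every map of a sphere of bounded dimension into $f^{-1}(U)$ is null-homotopic in $f^{-1}(V)$; this $LC^n$-style statement is all the skeleton induction needs, since at each stage one only ever has to fill a single cell whose boundary lands in one cover element. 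If you replace your deformation lemma by this weaker null-homotopy version (proved directly, without the ANR detour) and then actually carry out the controlled relative induction, you recover Smale's proof. For the application in this paper none of this matters, since $X$ and the fibers are finite cubical complexes and hence ANRs; but as a proof of the theorem as stated, the ANR step would fail.
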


In the following, we will briefly recall all the notions used in the above statement.
In our setting, where $X$ and $Y$ are cubical complexes, it turns out that all the technical preconditions follow rather easily from embedding the cubical complex into $\R^n$ for some $n$.
Further, weak homotopy equivalence implies homotopy equivalence for a large class of spaces, including cubical complexes.

Let us now dive into the details.

\begin{enumerate}
    \item By the assumption of \Cref{lem:homotopy}, $X$ and $Y$ are connected, $f$ is continuous, proper and surjective, and $f^{-1}(y)$ is contractible.
    \item A space $X$ is called \textit{locally compact} if for each $x\in X$ there exists an open set $V\subset X$ and a compact set $K \subset X$ such that $x\in V\subset K$.
    Any hypercube in Euclidean space $\R^n$ is locally compact, and thus so are cubical complexes, as each cubical complex can be embedded into a hypercube of large enough dimension.
    \item A space $X$ is called \textit{separable} if there exists a countable sequence of elements $\{x_i\}_{i=1}^{\infty}$ such that each non-empty open set contains an element of the sequence.
    The Euclidean space $\R^n$ is separable ($\mathbb{Q}^n$ is an example of the desired sequence), and so are all its subspaces, hence cubical complexes are separable.
    \item A topological space is \textit{metric}, if its topology is given by a metric. Any subspace of $\R^n$ is metric. 
    In fact, for the spaces we are interested in, the Hausdorff distance also makes them a metric space, as mentioned in \Cref{sec:preliminaries}.
    \item 
    We say $B$ is a \textit{basis} of a topological space $X$, if every open set $o$ is the union of sets from $B$.
    For example, all balls of radius $r<1$ form a basis of $\R^n$.
    A space $X$ is called \textit{locally contractible} if it has a basis of open subsets that are contractible.
    Again this holds for $\R^n$ and all of its subspaces as balls are contractible.
\end{enumerate}

The above considerations show that the preconditions of Smale's theorem are satisfied.
We thus get that $f$ is a weak homotopy equivalence.

For the following, we need some language from homotopy theory.
Many of the definitions are rather lengthy, and stating them here would exceed the scope of this paper.
We therefore refer the reader to any book on algebraic topology, for example Hatcher's excellent book \cite{Hatcher}.
The main conclusion of the following can however be easily summarized: \emph{for cubical complexes, weak homotopy equivalence implies homotopy equivalence}.

A map $f:X\rightarrow Y$ is a \textit{weak homotopy equivalence} if for every $n\geq 0$ it induces isomorphisms $f_*:\pi_n(X)\rightarrow \pi_n(Y)$, where $f_*$ denotes the \textit{induced homomorphism} of $f$ and $\pi_n$ denotes the $n$'th homotopy group.
This can be seen as an algebraic formulation of homotopy equivalence, but it is in general weaker.
However, for many spaces, weak homotopy equivalence implies homotopy equivalence.
One important family of such spaces are CW complexes (also called cell complexes), as given by Whitehead's Theorem (see e.g.~Theorem 4.5 in \cite{Hatcher}):

\begin{theorem}[Whitehead's theorem]
If a map $f:X\rightarrow Y$ between connected CW complexes induces isomorphisms $f_*:\pi_n(X)\rightarrow \pi_n(Y)$ for all $n$, then $f$ is a homotopy equivalence.
\end{theorem}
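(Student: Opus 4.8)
The plan is to follow the classical three-stage argument (as in Hatcher \cite{Hatcher}): first reduce $f$ to an inclusion of a subcomplex, then reduce further to producing a deformation retraction, and finally build that deformation retraction by a cell-by-cell ``compression''.

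\emph{Step 1: reduce to an inclusion.} By cellular approximation I may assume $f$ is cellular. Then its mapping cylinder $M_f$ carries a CW structure in which $X$ is a subcomplex, and $M_f$ deformation retracts onto $Y$. Since $f$ factors as $X\hookrightarrow M_f\to Y$ with the second map a homotopy equivalence, it suffices to treat the inclusion $X\hookrightarrow M_f$, which is again a weak homotopy equivalence. Renaming, we may assume $(Z,X)$ is a CW pair with $Z$ and $X$ connected and $\pi_n(X)\xrightarrow{\cong}\pi_n(Z)$ for all $n$, and the goal is to show $X\hookrightarrow Z$ is a homotopy equivalence.

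\emph{Step 2: reduce to a deformation retraction.} The long exact sequence of the pair $(Z,X)$ turns the hypothesis into $\pi_n(Z,X)=0$ for all $n$. The tool here is the compression lemma: for a CW pair $(W,A)$ and a pair $(Y',B)$ with $\pi_n(Y',B)=0$ whenever $W\setminus A$ has an $n$-cell, every map $(W,A)\to(Y',B)$ is homotopic rel $A$ to a map into $B$. Applying it with $(W,A)=(Y',B)=(Z,X)$ and $f=\mathrm{id}_Z$ yields a homotopy rel $X$ from $\mathrm{id}_Z$ to a retraction $r\colon Z\to X$, i.e.\ a deformation retraction of $Z$ onto $X$; hence $X\hookrightarrow Z$ is a homotopy equivalence, and unwinding Step 1 gives that $f$ is.

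\emph{Step 3: prove the compression lemma.} This goes by induction over the skeleta of $W$ relative to $A$. Assuming the map has already been pushed into $B$ over $A\cup W_{n-1}$, one extends over each $n$-cell $e$: the current map restricted to $e$ is a map $(D^n,\partial D^n)\to(Y',B)$ representing a class in $\pi_n(Y',B)=0$, and a nullhomotopy of that class pushes $e$ into $B$; the homotopy extension property for CW pairs then realizes this as a homotopy of the whole map that is stationary on everything already compressed. Splicing these skeletal homotopies yields a homotopy on $W\times[0,1]$, which is continuous because $W$ has the weak topology of its skeleta. The main obstacle I expect is exactly this last point — arranging the cell-by-cell homotopies so they fit together across skeleta and over possibly infinitely many cells into one continuous homotopy — which is where the homotopy extension property and the CW topology do the real work; the basepoint bookkeeping needed to invoke the relative homotopy groups is a minor extra nuisance that is harmless given the connectedness assumption.
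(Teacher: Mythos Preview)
Your proof is correct and is exactly the standard argument from Hatcher (mapping cylinder reduction, long exact sequence of the pair, then the compression lemma by cell-by-cell induction using the homotopy extension property). There is nothing to compare, though: the paper does not prove Whitehead's theorem at all but merely quotes it as Theorem~4.5 of \cite{Hatcher} and uses it as a black box, so your sketch is precisely the proof the paper is outsourcing to that reference.
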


Finally, cubical complexes are special cases of CW complexes, thus $f=\pi$ is indeed a homotopy equivalence.

\section{NP-Membership of \textsc{Point-Vertex Art Gallery}}\label{app:NPcontainment}
In this section we prove that  \textsc{Point-Vertex Art Gallery} is in NP.
Note that the result can be considered folklore.
For the sake of completeness, we provide a self-contained proof.

\begin{lemma}
Given a simple polygon $P$ and an integer $k$, it can be decided in non-deterministic polynomial time whether the vertices of $P$ can be guarded by $k$ guards positioned in $P$.
\end{lemma}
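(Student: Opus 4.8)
The plan is to exploit the fact that in the vertex-guarding version every guard only needs to see the finitely many vertices $v_1,\ldots,v_n$ of $P$, and that for a fixed guard position whether a given vertex is seen depends only on which reflex corners and edges of $P$ can block the connecting segment. Concretely, let $\mathcal{L}$ be the polynomially large set of lines consisting of the supporting line of every edge of $P$ together with the line through every pair of vertices of $P$, and let $\A$ be the arrangement of $\mathcal{L}$ (of size $O(n^4)$). I would first prove the key combinatorial lemma: within each relatively open cell of $\A$, both the predicate ``$x\in P$'' and, for every vertex $v_i$, the predicate ``$x$ sees $v_i$'' are constant.

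For the lemma, membership in $P$ is constant on cells because $\partial P\subseteq\bigcup\mathcal{L}$, so the indicator function of $P$ cannot change inside a cell of the arrangement. For visibility, I would move a point $x$ continuously inside a fixed (connected) cell $c$ and track $\seg(x,v_i)\cap\partial P$: its combinatorial structure can only change when $\seg(x,v_i)$ sweeps across a vertex of $P$ or becomes collinear with an edge of $P$, and in either event $x$ lies on a line of $\mathcal{L}$, contradicting that $c$ is an open cell. Hence $\seg(x,v_i)\subseteq P$ either holds throughout $c$ or fails throughout $c$. Some care is needed for degenerate positions (three collinear vertices, $x\in\partial P$, $x$ coinciding with a vertex, or $\seg(x,v_i)$ running along $\partial P$), but all such positions lie on lower-dimensional faces of $\A$, so it suffices to run the same argument separately on each relatively open face.

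Given the lemma, the NP certificate is simply a list of $k$ points $g_1,\ldots,g_k\in P$, each specified by rational coordinates of bit-length polynomial in the input size. Such a representative exists for every cell of $\A$: a non-empty relatively open cell is the relative interior of a rational polyhedron, which contains a rational point of polynomially bounded bit-length by the standard polyhedral / linear-programming bounds. Consequently, if the vertices of $P$ can be guarded by $k$ guards at all, then they can be guarded by $k$ guards placed at such rational points — simply replace each guard by a rational representative of the cell of $\A$ that contains it; by the lemma this changes neither its membership in $P$ nor the set of vertices it sees. The verifier then checks, for each $i$, that $g_i\in P$, and, for each vertex $v_j$, that $\seg(g_i,v_j)\subseteq P$ for some $i$; each of these is a polynomial-time segment--polygon intersection test. (Alternatively, one could certify a combinatorial assignment of vertices to guards and check each group's common-visibility region for non-emptiness by linear programming, but the explicit rational-point certificate is cleaner.) This establishes membership in NP.

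The main obstacle I expect is the clean proof of the visibility lemma, in particular ruling out every way a connecting segment can change its intersection pattern with $\partial P$ without the guard crossing a line of $\mathcal{L}$, and handling the boundary and degenerate configurations uniformly across faces of all dimensions; by contrast, the bit-length bound for rational cell representatives is entirely routine and can be quoted rather than derived.
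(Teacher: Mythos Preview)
Your proposal is correct and follows essentially the same approach as the paper: build an arrangement inside $P$ on whose cells the set of visible polygon vertices is constant, then replace each guard by a polynomial-bit-length representative of its cell. The only cosmetic differences are that the paper overlays the $n$ visibility polygons of the vertices directly (rather than your superset $\mathcal{L}$ of all vertex--vertex lines) and takes an arrangement vertex as the representative (rather than invoking LP bounds for a rational interior point); neither difference affects the argument.
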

\begin{proof}
    In order to show NP-membership, we show that for each polygon guardable by $k$ guards there exists a solution where all guards have small coordinate size. Such a guard placement can clearly be verified in polynomial time.
    Let $P$ be a (simple) polygon and
    let $G$ be an arbitrary solution with $k$ guards.
    We will show that there exists some set of guards $G'$
    with $|G| = |G'|$ and all guards in $G'$ have small coordinates.
    As a first step, we compute the visibility polygon for each vertex of $P$.
    This can be computed in polynomial time~\cite{abrahamsen2013constant, LinearVisibilityRegion, VisibilityHoles}.
    See \Cref{fig:NP-membership}, for an illustration.
    Now, the boundary of all of those visibility regions forms an arrangement \A of quadratic size. 
    Note that by definition in each cell of the arrangement \A every point sees exactly the same set of vertices of $P$.
    Furthermore, every vertex of \A has polynomial size coordinates, as it is the intersection of some lines that are defined by the vertices of the polygon $P$.
    Finally, for each $g\in G$, we pick a vertex of the cell of \A that contains $g$, and add it to $G'$.
    Clearly, as $G$ guards the vertices of $P$, so does $G'$. 
    Furthermore, by definition the size of $G$ and $G'$ are the same.
    This shows the claim and finishes the proof of the lemma.
\end{proof}

\begin{figure}
    \centering
    \includegraphics{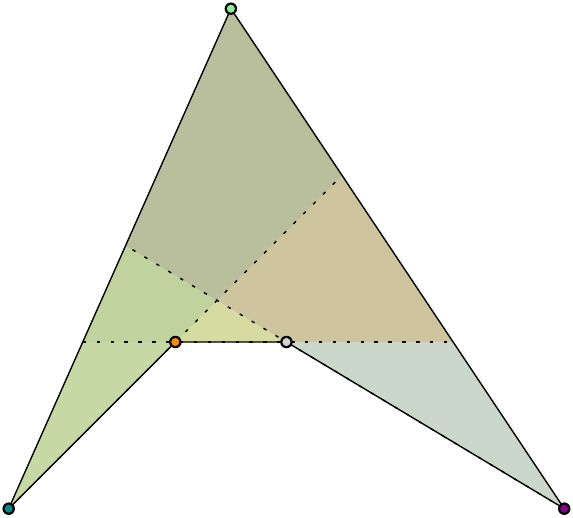}
    \caption{The polygon $P$ together with the visibility regions of each vertex of the polygon $P$.}
    \label{fig:NP-membership}
\end{figure}
\end{document}